\DeclareMathOperator*{\argomin}{arg\,min}
\DeclareMathOperator*{\argomax}{arg\,max}
\numberwithin{equation}{section}
\theoremstyle{plain}
\newtheorem{thm}{Theorem}[section]
\newcommand{\pr}[1]{P\hspace{-0.1em}\left(#1\right)}
\newcommand{\cPr}[2]{\mathbb{P}\hspace{-0.6mm}\left[\left.#1\,\right|#2\right]}
\newcommand{\E}[1]{\mathbb{E}\,#1}
\newcommand{\set}[1]{\left\{{#1}\right\}}
\newcommand{\eset}[2]{\left\{{#1} : \: {#2}\right\}}
\newcommand{\indic}[1]{1_{\{#1\}}}
\newcommand{\Nat}{\mathbb{N}}
\renewcommand{\Re}{\mathbb{R}}
\newcommand{\D}{{\,\mathrm{d}}}
\newtheorem{proposition}{Proposition}[section]
\newtheorem{definition}[proposition]{Definition}
\newtheorem{lem}[proposition]{Lemma}
\newtheorem{corollary}[proposition]{Corollary}
\newtheorem{remark}[proposition]{Remark}
\newtheorem{algorithm}[proposition]{Algorithm}
\newtheorem{construction}[proposition]{Construction}
\begin{document}

\title{Regular Decomposition: an information and graph theoretic approach to
 stochastic block models}

\author{Hannu Reittu\footnote{VTT Technical Research Centre of Finland Ltd, P.O.\ Box 1000, 02044 VTT. Email hannu.reittu@vtt.fi}, 
F\"ul\"op Bazs\'o\footnote{Department of Computational Science, Institute for Particle and Nuclear Physics, Wigner Research Centre for Physics, Hungarian Academy of Sciences, P.O.\ Box 49, H-1525 Budapest, Hungary. Email bazso.fulop@wigner.mta.hu.}, 
Ilkka Norros\footnote{University of Helsinki, Department of Mathematics and Statistics, P.O.\ Box 64, FI-00014 University of Helsinki. Email ilkka.norros@elisanet.fi}}
\date{}

\maketitle

\begin{abstract}
A method, regular decomposition (RD), for compression of large graphs and  matrices to
a block structure is proposed. Szemer\'edi's regularity lemma is used
as a generic motivation of the significance of the corresponding stochastic block
models (SBMs). Another ingredient of the method is Rissanen's minimum
description length principle (MDL). We analyze consistency of RD in detecting a block structure in a large and dense graph generated from a SBM with fixed number of blocks. We show that the coding length of the graph, used as a cost function in MDL, decreases until the right number of blocks is reached, then the coding length reaches a plateau with very slow or no reduction in value. This enables a practical algorithm for finding such block structures. Simulations are used to illustrate that this scenario is visible already in modest sizes of the graph. 
\end{abstract}



{\bf Keywords:} Szemer\'edi's Regularity Lemma, Minimum Description
Length Principle, stochastic block model, big data

\tableofcontents

\section{Introduction}

In recent years, the analysis of large graphs, matrices and hypergraphs has become more contemporary. One reason for this is the elicitation of big data and its accumulation at an accelerating pace. 

Usually, the complete information of these objects (large graphs, matrices and hypergraphs) is not available or/and it's just too large to be practically applied. In this situation, the methods that can acquire a low-dimensional approximation of the underlying graphs are of a paramount interest. Such approximations can be seen as a compressed representation of the large-scale structure. This makes information theory a natural choice for the analysis which was pointed out by Rosvall and Bergstrom \cite{rosvall}. Roughly speaking, this means grouping redundant elements into a few communities. In case of graphs, examples are well-known division of nodes into communities or modules. Information theory will aid in choosing the optimal structures, in particular, the number of communities or modules. 

Our aim is to develop a corresponding rigorous methodology and proofs of correctness in the limit of large structures. We also introduce concrete algorithms and numerical examples that clarify how to use our method in practice. This work completes and partly summarizes our previous work on this subject \cite{tusnady,vesahannu,reittuweissbazso,reittuetall,reittuetalljournal,pirkko,reittudistance}.    

Another source of ideas is the fundamental mathematics of large structures which indicates the directions from which some exact results can be expected. A fundamental result in graph theory that is highly typical for problems of big data and related graphs is Szemer\'{e}di's Regularity Lemma (SRL) \cite{szemeredi76}. SRL is a
fundamental result in graph theory. Roughly speaking, SRL states that
any large enough graph can be approximated arbitrarily well by nearly
regular, pseudo-random bipartite graphs, induced by a partition of the
node set into a bounded number of equal-sized sets. For many graph
problems, it suffices to study the problem on a corresponding random
structure, resulting in a much easier problem (see,
e.g., \cite{komlos}). SRL is fundamental also in theoretical computer
science, say, in showing the existence of polynomial-time
approximations for solving dense graph problems, and in characterizing the
class of so-called testable graph properties \cite{alofisnewsha06}.

Despite the impressive theoretical applications of SRL, it has had
only few applications to 'real-life' problems. The main reason might
be that SRL has extremely bad worst-case scenarios in the sense that the lower
bound of graph sizes, for which the partition claim holds with
reasonable accuracy and without a single exception, is enormous. Thus,
a real-world application of SRL in the literal sense is
impossible. However, this drawback does not mean that regular
partitions could not appear in much smaller scales which are  relevant to
applications. On the contrary, one could conjecture that regular
partitions or structures be commonplace and worth revealing. 

The goal of our work is on realistic, yet preferably large networks
that appear in almost all imaginable application areas. The
regular structure granted by SRL is then replaced by a probabilistic
model, substituting the regular bipartite components with truly
random bipartite graphs. This model class is well known as stochastic
block models \cite{hollaslei83} (SBM), and it has recently gained much
attention in research and popularity in practical applications like
community detection \cite{peixoto,heilelmas12,massoulie14}. However,
the fundamental nature of SRL suggests the (heuristic) conjecture that
stochastic block models present a very generic form of the
separation of structure and randomness in large real-world
systems. That is why we think that SRL and related results should be kept in mind in practical applications as a rich source of abstractions that can lead to new applications. 

SBM structuring of data has good practical properties. Methods, such as maximum
likelihood fitting, expectation maximization, simulated annealing and Monte Carlo Markov
Chain algorithms, can be used. 

There are also some examples of graph decomposition applications that have
been explicitly inspired by SRL. The practical contexts are varied:
brain cortex analysis \cite{tusnady}, image processing \cite{pelillo},
peer-to-peer network\cite{vesahannu}, analyzing the functional magnetic resonance (fMRI) data to depict functional connectivity of the brain  \cite{pappas} and a matrix of multiple time series \cite{reittuweissbazso}. In the last mentioned work, the method was generalized from graphs to arbitrary positive matrices using a Poissonian construction as an intermediary step.

Interestingly, the authors of the recent work \cite{trivedisar} define
a 'practical' variant of SRL by relaxing algorithmic SRL in a certain
way to make it more usable in machine-learning tasks, see also
\cite{pelillo2017revealing}. Bolla \cite{bollabook} has developed a
spectral approach for finding regular structures of graphs and
matrices. Our emphasis is more information theoretical by nature and
continues the works of \cite{tusnady,vesahannu,reittuweissbazso,reittuetall, reittuetalljournal}. It
would be very interesting to compare the methods of
\cite{trivedisar,bollabook} and those of this paper in depth.

The third main ingredient in this paper is Rissanen's Minimum
Description Length (MDL) principle (see \cite{mdlgrunwald}), according
to which the quality of a model should be measured by the length of
the bit string that it yields for the unique encoding of the data. In our
case, the data has the form of a graph or matrix, and we present the
stochastic block models as a modeling space in the sense of the MDL
theory. Within this modeling space, the model corresponding to minimum
code length encoding presents the optimal regular decomposition
(partition) of the data. Therefore we call this Szemer\'edi-motivated
and technically MDL-based approach {\em Regular Decomposition}.

By information theory, the optimal coding reveals as much redundancy
in the data as is possible from within the given modeling framework. The
regular structure has a high degree of redundancy: a regular pair is
an almost structureless subgraph in which almost all the nodes have similar
and uniform connectivity patterns. By definition, the MDL principle
should be able to discover regular structures. Note that this
principle presents a case of 'Occam's Razor', a general rule of
reasoning that has proved fruitful in all areas of science.

In non-hierarchic clustering tasks, it has been a major challenge to
select the 'right' size of a partition ($k$). Intuitively, the optimal
choice of $k$ will strike a balance between the simplest partition of
the data using a single cluster and the maximal partition assigning
each data point to its own cluster, and selecting something in
between. A popular device has been the Akaike information criterion
(AIC), which was applied also in \cite{tusnady}. The AIC simply adds
the 'number of model parameters' to the maximal log-likelihood and chooses
the model that minimizes the sum. Our MDL-based approach solves the
corresponding model selection task in a better founded way (see also \cite{peixotoMDL}).

The contributions of this paper are the following: (i) the linkage of
SRL, stochastic block models and the MDL principle, (ii) the unified
handling of graphs and matrices, (iii) the effective practical algorithms
for revealing regular structures in data, and (iv) Theorem
\ref{blomothm} that characterizes how accurately the MDL principle
identifies a stochastic block model.

The paper is structured as follows: Section \ref{basicsec} presents
the definitions of the main notions: SRL, stochastic block models
and MDL. The last topic is expanded in Section
\ref{mdltheorysec}. Section \ref{resultsec} presents our main technical
results, in particular the algorithms and Theorem \ref{blomothm}. The
proof of Theorem \ref{blomothm} is given in Section \ref{blomothmsec}.
It is structured into several propositions and makes strong use of
information-theoretic tools presented in Appendix
\ref{chernoffsec}. 

\subsection{Related work}

Let us enlist contributions that are most relevant to core results of our work. 

Peixoto suggests to use  MDL in the SBM context in \cite{peixotoMDL} which continues earlier work by Rosvall and Bergstrom \cite{rosvall} where this idea was first suggested. It has interesting estimates which indicate that there is an upper limit of the number of communities that can be detected, $k\sim \sqrt{n}$ as a function of number of nodes $n$. However, Peixoto does not consider the exact detection of communities with a bounded number $k$ of communities, that is the focus of the current work. Peixoto has also made several implementations of corresponding algorithms and made them available \cite{peixotoMDL}.  

Wang and Bickel \cite{wangbickel} used information theory for likelihood-based model selection for the SBM. Their conclusions are similar to ours. In addition they show validity of results also in a sparse case when the average degree grows in polylog (polynomial in $\log n$, $n$ is number of nodes) rate and in the case of degree-corrected block models. They use asymptotic distributions instead of the exact ones that we use. The algorithmic part also deviates from ours. They end up in a likelihood-based criterion, Bayesian information criterion (BIC), that is asymptotically consistent. Along with a term that corresponds to log likelihood there is a term proportional to $k^2n\log n$ with some tuning coefficient that has to be defined separately in every concrete case. In so defined BIC has a minimum at the right value of the $k$. Instead of such a term we prefer to use MDL model complexity that is not case sensitive.   

\section{Basics and definitions}\label{basicsec}

\subsection{Szemer\'edi's Regularity Lemma}

Consider simple graphs $G(V,E)$, where $V$ is the set of nodes
(vertices) and $E$ is the set of links (edges). The {\em link density}
of a non-empty node set $X\subseteq V$ is defined as
\begin{equation*}
d(X)=\frac{|e(X)|}{\binom{|X|}{2}},\quad\mbox{where}\quad 
e(X)=\eset{\set{v,w}\in E}{v,w\in X},
\end{equation*}
and $\left|\cdot\right|$ denotes the cardinality of a set. Similarly,
the link density between two disjoint non-empty node sets
$X,Y\subseteq V$ is defined as
\begin{equation*}
d(X,Y):= \frac{|e(X,Y)|}{\left|X\right|\left|Y\right|},
\quad\mbox{where}\quad 
e(X,Y)=\eset{\set{v,w}\in E}{v\in X,\ w\in Y}.
\end{equation*}

\begin{definition}
Let $\epsilon>0$. A pair of disjoint sets $A,B\subseteq V$ of $G(V,E)$ is
called $\epsilon$-regular, if for every $X\subseteq A$ and
$Y\subseteq B$, such that $\left|X\right|> \epsilon \left|A\right|$
and $\left|Y\right|> \epsilon \left|B\right|$ we have
\begin{equation*}
\left|d(X,Y)-d(A,B)\right|< \epsilon.
\end{equation*}
A partition $\xi=\set{V_0,V_1,V_2,\cdots, V_k}$ of $V$ into $k+1$
sets, where all except $V_0$ have equal cardinalities, is called
$\epsilon$-regular, iff all except at most $\epsilon k^2$ pairs are
$\epsilon$-regular and $|V_0| <\epsilon |V|$.
\end{definition}

\begin{thm} (Szemer\'{e}di's Regularity Lemma, \cite{szemeredi76})
For every $\epsilon>0$ and for any positive integer $m$,
there are positive integers $N(\epsilon,m)$ and $M(\epsilon, m)$, such
that for any graph $G(E,V)$ with $\left|V\right|\geq N(\epsilon,m)$ there is an
$\epsilon$-regular partition of $V$ into $k+1$ classes with $m\leq
k\leq M(\epsilon,m)$.
\end{thm}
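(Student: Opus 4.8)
The plan is to prove the lemma by an \emph{energy increment} argument, the standard route to SRL. First I would attach to every partition $\mathcal{Q} = \{W_1, \ldots, W_r\}$ of $V$ a scalar \emph{index}
\[
\mathrm{ind}(\mathcal{Q}) = \sum_{1\le i<j\le r} \frac{|W_i|\,|W_j|}{|V|^2}\, d(W_i,W_j)^2,
\]
which measures the mean-square link density, or ``energy'', carried by the partition. Since each $d(W_i,W_j)\in[0,1]$ and the weights $|W_i|\,|W_j|/|V|^2$ sum to at most $1/2$, the index always lies in $[0,1/2]$; this boundedness is what will ultimately cap the number of classes.

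The two workhorses are monotonicity and a strict increment under irregularity. I would first show, via a defect form of the Cauchy--Schwarz (equivalently Jensen) inequality, that refining a partition never decreases its index: if each $W_i$ is split into pieces, convexity of $x\mapsto x^2$ applied to the conditional densities gives $\mathrm{ind}(\mathcal{Q}')\ge \mathrm{ind}(\mathcal{Q})$. The crucial quantitative step is the \emph{energy increment lemma}: if a pair $(A,B)$ fails to be $\epsilon$-regular, witnessed by sets $X\subseteq A$, $Y\subseteq B$ with $|X|>\epsilon|A|$, $|Y|>\epsilon|B|$ and $|d(X,Y)-d(A,B)|\ge\epsilon$, then splitting $A$ into $\{X, A\setminus X\}$ and $B$ into $\{Y, B\setminus Y\}$ raises the contribution of this pair by at least $\epsilon^4\,|A|\,|B|/|V|^2$. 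This is again a defect-Cauchy--Schwarz computation, isolating the squared deviation against the relative sizes of $X$ and $Y$.

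With these in hand I would run the iteration. Begin with an arbitrary equitable partition into $m$ classes plus a small exceptional $V_0$. If the current partition is $\epsilon$-regular we stop; otherwise at least $\epsilon k^2$ pairs are irregular, each supplying witnessing subsets. Taking the common refinement generated by \emph{all} these subsets simultaneously and summing the per-pair increments (with each pair weight $|A|\,|B|/|V|^2\approx k^{-2}$ in the equitable case), the total index gain is at least of order $\epsilon\cdot\epsilon^4=\epsilon^5$. Because the index cannot exceed $1/2$, this can recur at most $O(\epsilon^{-5})$ times, which bounds the final class count by a function $M(\epsilon,m)$ independent of $G$; the lower bound $k\ge m$ is guaranteed by starting from $m$ classes, and $N(\epsilon,m)$ is chosen large enough that the iterated refinement still leaves each class nonempty.

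The main obstacles are twofold. The analytic heart is the energy increment lemma: extracting the clean $\epsilon^4$ (hence $\epsilon^5$ after summing) lower bound from the defect Cauchy--Schwarz, since a careless estimate loses the explicit dependence on $\epsilon$ that makes the iteration terminate in a bounded number of steps. The bookkeeping obstacle is maintaining the structural constraints in the definition---equal cardinalities of $V_1,\ldots,V_k$ and $|V_0|<\epsilon|V|$---across refinements: each common refinement multiplies the class count, so one must re-equalize by further subdividing into equal blocks and sweeping the unavoidable remainders into $V_0$, all while verifying that $V_0$ stays below $\epsilon|V|$ and that the index estimates survive this adjustment. I expect the increment lemma to be the genuinely hard part, with the size-equalization being delicate but routine once $N(\epsilon,m)$ is taken sufficiently large.
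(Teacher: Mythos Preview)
Your sketch is a correct outline of the standard energy-increment proof of Szemer\'edi's Regularity Lemma. However, there is nothing to compare against: the paper does not prove this theorem. It is stated as a background result with a citation to \cite{szemeredi76}, and the paper's own technical contributions concern stochastic block models and MDL, not a re-proof of SRL. So your proposal is fine as a self-contained argument, but it is not something the paper attempts or needs.
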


Roughly, SRL states that nodes of any large enough graph can
be partitioned into a bounded number ($k$) of equal-sized sets and
into one small set in such a way that links between most pairs of sets look like those in a
random bipartite graphs, whose link probability equals the link
density between the pair.

The claim of SRL is significant for sufficiently dense graphs,
i.e., when the link density is higher than $\epsilon$.The
result can be modified to sparse graphs by multiplying
$\epsilon$ at the right-hand side of the regularity definition by the
link density of the entire graph \cite{scott}. 

It is remarkable that the regularity claim holds for {\em
  all} graphs starting from a lower bound for size that depends only
on $\epsilon$. However, it is also well-known that this dependence on
$\epsilon$ is of extremely bad kind: the known lower bound for the
graph size $N(\epsilon,m)$ is extremely large, like a tower of powers of 2:
\begin{equation*}
2^{2^{.^{.^{.^{2}}}}},
\end{equation*}
where the height of the tower is bounded above by $1/\epsilon^5$. Such
a number is too big to be considered in any applications. Thus, all
real-world networks fall into a 'grey area' with respect to SRL. 

As we stated in the Introduction, there has been attempts to use algorithmic versions of  SRL (ASRL), introduced by Alon et al,
\cite{alonduke}, also in practical applications. The large numbers like the upper bound of $N(\epsilon,m)$ is problematic.  Although ASRL has time complexity that is only polynomial
$O(n^{2.376\cdots})$, corresponding to the time for multiplying two
$n\times n$ binary matrices, it requires this enormous size
($n\ge N(\epsilon,m)$) of graph to be able to find a regular
partition. 

A considerable improvement was found in the recent work
\cite{fischermatsliah}, where the execution time is only linear in the
graph size using a randomized algorithm. From a practical point of view, this algorithm
works in a more realistic fashion than the original ASRL: for any graph, it either finds an
$\epsilon$-regular partition or concludes that such a partition does
not exists. Another randomized algorithm with the same feature was
suggested by Tao \cite{taoblog}.

In principle, such algorithms could
possibly be applicable for real-world graphs, although some prohibitively big 
upper-bounds of  the constants of the
algorithm are a problem that needs a solution. That is why, we conclude that our approach is also needed. 

\subsection{Stochastic block models}\label{blomodefsec}

The notion of an $\epsilon$-regular partition is purely
combinatorial. The stochastic model closest to this notion is the
following. 

\begin{definition}
\label{blomodef}
Let $V$ be a finite set and $\xi=\set{A_1,\ldots,A_k}$ a partition of
$V$. A {\em stochastic block model} is a random graph $G=(V,E)$ with
the following structure: 
\begin{itemize}
\item There is a symmetric $k\times k$ matrix $D=(d_{ij})_{i,j=1}^k$ of real
numbers $d_{ij}\in[0,1]$ satisfying the {\em irreducibility condition}
that no two rows are equal, i.e.,
\begin{equation}
\label{irreducibility}
\mbox{for all }i,\ j,\ i<j,\mbox{ there is }q_{ij}\in\set{1,\ldots,k}
\mbox{ such that }d_{iq_{ij}}\not=d_{jq_{ij}};
\end{equation}
\item For every pair $\set{v,w}$ of distinct nodes of $V$ such that
  $v\in A_i$, $w\in A_j$, let $e_{vw}=e_{wv}$ be a Bernoulli random
  variable with parameter $d_{ij}$, assuming that all $e_{vw}$'s are
  independent. The edges of $G$ are
\begin{equation*}
E=\eset{\set{v,w}}{v,w\in V,\ v\not=w,\ e_{vw}=1}.
\end{equation*}
\end{itemize}
\end{definition}

Note that the case of the trivial partition $\xi=\set{V}$ yields the
classical random graph with edge probability $d_{11}$. 

A graph sequence $G_n=(V_n,E_n)$ presenting copies of the same stochastic
block model in different sizes can, for definiteness, be constructed
as follows. 

\begin{construction}\label{graphseqconstr}
Let $\gamma_1,\ldots,\gamma_k$ be positive, distinct real numbers such that
$\sum_{i=1}^k\gamma_i=1$. Divide the interval $(0,1]$ into $k$ segments
\begin{equation*}
  I_1=(0,\gamma_1],\ I_2=(\gamma_1,\gamma_1+\gamma_2],
\ldots,I_k=\left(\sum_{i=1}^{k-1}\gamma_i,1\right],
\end{equation*}
and denote $\Gamma=\set{I_1,\ldots,I_k}$. 
For $n=1,2,\ldots$, let the vertices of $G_n$ be 
\begin{equation*}
V_n=\eset{\frac{i}{n}}{i\in\set{1,\ldots,n}}.
\end{equation*}
For each $n$, let $\xi_n$ be the partition of $V_n$ into the blocks
\begin{equation*}
  A^{(n)}_i=I_i\cap V_n,\quad i=1,\ldots,k.
\end{equation*}
\end{construction}

For small $n$, we may obtain several empty copies of the empty set
numbered as blocks. However, from some $n_0$ on, all blocks are
non-empty and $\xi_n=\set{A^{(n)}_1,\ldots,A^{(n)}_k}$ is a genuine
partition of $V_n$. We can then generate stochastic block models based on
$(V_n,\xi_n,D)$ according to Definition \ref{blomodef}.

\begin{remark}
\rm A slightly different kind of stochastic block model can be defined
by drawing first the sizes of blocks $A^{(n)}_i$ as independent {\it
  Poisson}$(\gamma_in)$ random variables and proceeding then with the
matrix $D$ as before. The additional level of randomness, regarding
the block sizes, is however of no interest in the present paper.
\end{remark}

Next, we define the notion of a Poissonian block model in complete analogy with Definition \ref{blomodef}. (This allows almost one-to-one transfer of the proofs in Section \ref{blomothmsec} to the Poissonian case.)

\begin{definition}
\label{poiblomodef}
Let $V$ be a finite set of vertices, $n=|V|$, and let
$\xi=\set{A_1,\ldots,A_k}$ be a partition of $V$. The {\em symmetric
  Poissonian block model} is a symmetric random $n\times n$ matrix
$E$ with the following structure:
\begin{itemize}
\item There is a symmetric $k\times k$ matrix $\Lambda=(\lambda_{ij})_{i,j=1}^k$
  of non-negative real numbers satisfying the {\em irreducibility
    condition} that no two rows are equal, i.e.,
\begin{equation}
\label{poiirreducibility}
\mbox{for all }i,\ j,\ i<j,\mbox{ there is }q_{ij}\in\set{1,\ldots,k}
\mbox{ such that }\lambda_{iq_{ij}}\not=\lambda_{jq_{ij}};
\end{equation}
\item For every unordered pair $\set{v,w}$ of distinct nodes of $V$
  such that $v\in A_i$, $w\in A_j$, let $e_{vw}=e_{wv}$ be a Poisson
  random variable with parameter $\lambda_{ij}$, assuming that all
  $e_{vw}$'s are independent. The matrix elements of $E$ are $e_{vw}$
  for $v\not=w$, and $e_{vv}=0$ for the diagonal elements.
\end{itemize}
\end{definition}

Thanks to the independence assumption, the sums $\sum_{u\in
  A}\sum_{v\in B}e_{uv}$ are Poisson distributed for any $A,B\in\xi$.

\begin{remark}
The rest of the technical contents of this paper focus on the simple
binary and Poissonian models of Definitions \ref{blomodef} and
\ref{poiblomodef}. However, the following extensions are straightforward:
\begin{itemize}
\item {\em bipartite graphs:} this is just a subset of simple graphs;
\item {\em $m\times n$ matrices with independent Poissonian elements:}
  a matrix can be seen as consisting of edge weights of a bipartite
  graph, where the parts are the index sets of the rows and columns of
  the matrix, respectively;
\item {\em directed graphs:} a directed graph can be presented as a
  bipartite graph consisting of two parts of equal size, presenting
  the input and output ports of each node.
\end{itemize}
\end{remark}

\section{MDL approach to stochastic block models}\label{mdltheorysec}
In this Section we describe some basic definitions and notations for applying standard MDL modeling approach to graphs and matrices.   
\subsection{The Minimum Description Length (MDL) principle}\label{mdldefsec}

The Minimum Description Length (MDL) Principle was introduced by Jorma
Rissanen, inspired by Kolmogorov's complexity theory, and an extensive
presentation can be found in Gr\"unwald's monography
\cite{mdlgrunwald}, see also \cite{rissanen98}. The basic idea is the
following: a set $\mathcal{D}$ of data is optimally explained by a
model $\mathcal{M}$, when the combined unique encoding of the (i) model
and (ii) the data as interpreted in this model is as concise as
possible. By encoding we mean here a mapping that specifies an object
uniquely.

The principle is best illustrated by our actual case, simple graphs. A
graph $G=(V,E)$ with $|V|=n$ can always be encoded as a binary string
of length $\binom{n}{2}=n(n-1)/2$, where each binary variable
corresponds to a node pair and a value 1 (resp.\ 0) indicates an edge
(resp.\ absense of an edge). Thus, the MDL of $G$ is always at most
$\binom{n}{2}$. However, $G$ may have a structure whose disclosure
would allow a much shorter description. Our heuristic postulate is
that in the case of graphs and similar objects a good {\em a priori}
class of models should be inferred from SRL, which points to
stochastic block models.

\begin{definition}
\label{modelspace-nk-def}
Denote by $\mathcal{M}_{n/k}$ the set of irreducible stochastic block models
$(V,\xi,D)$ with
\begin{itemize}
\item $|V|=n$,
\item $|\xi|=k$, and, denoting $\xi=\set{V_1,\ldots,V_k}$,
\item for $i,j\in\set{1,\ldots,k}$, 
  \begin{equation*}
  d_{ij}=\frac{h_{ij}}{|V_i||V_j|},\ h_{ij}\in\Nat,\quad
d_{ii}=\frac{h_{ii}}{\binom{|V_i|}{2}},\ h_{ii}\in\Nat.
  \end{equation*}
\end{itemize}
\end{definition} 

The condition in the last bullet entails that each modeling space
$\mathcal{M}_{n/k}$ is finite.

\begin{remark}
Without the irreducibility condition \eqref{irreducibility}, there
would not be a bijection between stochastic block models and their
parameterizations.
\end{remark}

The models in ${\cal M}_{n/k}$ are parameterized by
$\Theta_k=(\xi,D)$. A good model for a graph $G$ is the one that gives
maximal probability for $G$ and is called the maximum likelihood
model. We denote the parameter of this model
\begin{eqnarray}
\hat{\Theta}_k(G):= \argomax_{ \Theta_k\in {\cal M}_{n/k}}(P(G\mid\Theta_k)),
\end{eqnarray}
where $P(G\mid \Theta_k)$ denotes the probability that the
probabilistic model specified by $\Theta_k$ produces $G$. 

One part of likelihood optimization is trivial: when a partition $\xi$
is selected for a given graph $G$, the optimal link probabilities are
the empirical link densities:
\begin{eqnarray}
d_{ij}=\frac{|e(V_i,V_j)|}{|V_i|| V_j|},\
i\neq j, \quad
d_{ii}=\frac{|e(V_i)|}{\binom{|V_i|}{2}}.
\end{eqnarray}
Thus, the nontrivial part is to find the optimal partition for the given
graph. This is the focus of the next sections.

\subsection{Two-part MDL for simple graphs}\label{twoparttheosec}

Let us denote the set of all simple graphs with $n$ nodes as
\begin{equation*}
\Omega_n=\eset{G}{G=(V,E)\mbox{ is a graph},\ |V|=n}.
\end{equation*}
A prefix (binary) coding of a finite set $\Omega$ is an injective
mapping
\begin{eqnarray}
C:\Omega\rightarrow  \cup_{s\geq 1}\{0,1\}^s
\end{eqnarray}
such that no code is a prefix of another code. Recall the
following proposition from information theory (see,
e.g., \cite{coverthomas}):

\begin{thm}\label{kraftthm}
(Kraft's Inequality) For an $m$-element alphabet there exists a binary
  prefix coding scheme with code lengths $l_1,l_2,\cdots , l_m$ iff
  the code lengths satisfy: $\sum_{i=1,\cdots,m}2^{-l_i}\leq 1$.
\end{thm}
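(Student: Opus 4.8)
The plan is to establish both directions of the equivalence by interpreting a binary prefix code as a set of nodes of a rooted binary tree. I identify each codeword with the path from the root in which the bit $0$ descends to the left child and $1$ to the right child. The prefix condition then says precisely that no codeword lies on the root-path of another, i.e. the codewords occupy nodes none of which is an ancestor of another.

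For necessity (a prefix code forces $\sum_i 2^{-l_i}\le 1$), I would set $L=\max_i l_i$ and embed everything in the complete binary tree of depth $L$, which has exactly $2^L$ leaves at its bottom level. A codeword of length $l_i$ sits at a node of depth $l_i$, and the subtree hanging from that node contains exactly $2^{L-l_i}$ of the bottom-level leaves. The prefix property makes these subtrees pairwise disjoint, since two overlapping subtrees would have one root an ancestor of the other; hence the bottom-level leaves they cover are distinct, giving $\sum_i 2^{L-l_i}\le 2^L$, and dividing by $2^L$ yields the claim.

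For sufficiency I would argue constructively. Relabel so that $l_1\le l_2\le\cdots\le l_m$ and define the cumulative sums $w_j=\sum_{i<j}2^{-l_i}$, so that $w_1=0$ and each $w_j\le 1-2^{-l_m}<1$ by the hypothesis. Take as the $j$-th codeword $c_j$ the first $l_j$ bits of the binary expansion of $w_j$; writing $0.c_j$ for the fraction those bits represent, one has $0.c_j\le w_j<0.c_j+2^{-l_j}$. The key step is the prefix property: for $j<j'$ the gap satisfies $w_{j'}-w_j\ge 2^{-l_j}$, hence $w_{j'}\ge 0.c_j+2^{-l_j}$, so $w_{j'}$ falls outside the half-open interval of numbers whose $l_j$-bit prefix is $c_j$, and therefore $c_j$ cannot be a prefix of $c_{j'}$. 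Since the lengths are sorted increasingly, this rules out the only possible prefix relation, and the code is valid with the prescribed lengths. Equivalently, one can phrase the same construction as a greedy placement on the depth-$L$ tree: process the lengths in increasing order, assign each a still-free node at the prescribed depth, and delete its subtree, the Kraft sum remaining $\le 1$ being exactly the bookkeeping that a free node always remains.

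The main obstacle is the careful verification in the sufficiency direction that the greedily assigned codewords are genuinely prefix-free, which reduces to the gap estimate $w_{j'}-w_j\ge 2^{-l_j}$ together with the interval containment above; the necessity direction is essentially immediate once the tree picture is in place.
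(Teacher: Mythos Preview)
Your proof is correct and is essentially the standard textbook argument (as found, for instance, in Cover and Thomas). Note, however, that the paper does not actually give its own proof of this theorem: it simply recalls Kraft's inequality as a known result from information theory with a reference to \cite{coverthomas}, and then uses it as a tool in later arguments. So there is no ``paper's proof'' to compare against; your write-up would serve perfectly well as a self-contained justification if one were desired.
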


An important application of Theorem \ref{kraftthm} is the following:
if letters are drawn from an alphabet with probabilities
$p_1,p_2,\cdots, p_m$, then there exists a prefix coding with code
lengths $\lceil -\log p_1\rceil,\cdots,\lceil -\log p_m\rceil$, and
such a coding scheme is optimal in the sense that it minimizes the
expected code length (in this section, the logarithms are in base
2). In particular, any probability distribution $P$ on the graph space
$\Omega_n$ indicates that there exists a prefix coding that assigns
codes to elements of $G\in\Omega_n$ with lengths equal to $\lceil
-\log{P(\set{G})}\rceil$.

The code length $l(\cdot)$ is the number of binary digits in the code
of the corresponding graph.  In case of a large set $\Omega$, most
such codes are long and as a result the ceiling function can be
omitted, a case we assume in sequel. A good model results in good
compression, meaning that a graph can be described by much less bits
than there are elements in the adjacency matrix. An incompressible case
corresponds to the uniform distribution on $\Omega_n$ and results in code
length $-\log{(1/\mid\Omega\mid)}=\binom{n}{2}$, equivalent to
writing down all elements of the adjacency matrix.

For every graph $G$ from $\Omega_n$ and model $P$ we can associate an encoding
with code length distribution $-\log{P(\cdot\mid
  \hat{\Theta}_k(G))}$. However, this is not all, since in order to
be able to decode we must know what particular probabilistic model $P$
is used. This means that also $\hat{\Theta}_k(G)$ must be prefix
encoded, with some code-length $L(\hat{\Theta}_k(G))$. We end up
with the following description length:
\begin{eqnarray}
\label{twopartcode}
l(G)=\lceil-\log{P(G\mid \hat{\Theta}_k(G))}\rceil+L(\hat{\Theta}_k(G)).
\end{eqnarray}
Eq.\ \eqref{twopartcode} presents the so-called {\em two-part MDL},
\cite{mdlgrunwald}. In an asymptotic regime with $n\rightarrow
\infty$, we get an analytic expression of the refined MDL. A simple
way of estimating $L(\hat{\Theta}_k(G))$ is just to map injectively
every model in ${\cal M}_{n/k}$ to an integer and then encode integers
with $l^*(\mid {\cal M}_{n/k}(G)\mid)$ as an upper bound of
the code-length. Here
\begin{equation}
\label{rissanenl}
l^* (m)=\max(0,\log{(m}))+\max(0,\log\log(m))+\cdots,\ m\in\Nat,
\end{equation}
gives, as shown by Rissanen, the shortest length prefix coding for
integers (see \cite{mdlgrunwald, Rissanen83}). The size of the graph must also
be encoded with $l^* (n)$ bits (it is assumed that there is a way of
defining an upper bound of the models with given $n$). In this point, it is
necessary to assume that the modeling space is finite. This results in

\begin{proposition}
\label{twopartprop}
For any graph $G\in \Omega_n$, there exists a prefix coding with code-length
\begin{align*}
l(G)&=\lceil-\log P(G\mid \hat{\Theta}_k(G))\rceil+m,\\
m&\leq m_k:= l^*(n)+l^*\left(S2(n,k)\left(\binom{n-k+2}{2}+1\right)^{\binom{k}{2}+k}
  +1\right)+1,
\end{align*}
where $S2(n,k)$ is the Stirling number of the second kind.
\end{proposition}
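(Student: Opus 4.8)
The plan is to build the required prefix code as the concatenation of three self-delimiting blocks — one for the graph size $n$, one for the identity of the maximum-likelihood model $\hat{\Theta}_k(G)$ inside the finite space $\mathcal{M}_{n/k}$, and one for the data $G$ given that model — and then to control the first two blocks by a cardinality estimate for $\mathcal{M}_{n/k}$. By the discussion following Theorem \ref{kraftthm}, any probability law on $\Omega_n$ induces a prefix code whose length on $G$ is $\lceil -\log P(G\mid\hat{\Theta}_k(G))\rceil$, which supplies the data block and accounts for the first summand of $l(G)$; the work is entirely in bounding the model block $m=L(\hat{\Theta}_k(G))$ by $m_k$.

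For the model block I would map $\mathcal{M}_{n/k}$ injectively into $\{1,\dots,|\mathcal{M}_{n/k}|\}$ and encode the index of $\hat{\Theta}_k(G)$ by Rissanen's integer code of length $l^*(\cdot)$. Since $l^*$ is non-decreasing and its code is prefix-free, the index costs at most $l^*(|\mathcal{M}_{n/k}|+1)$ bits — the inner increment placing the label safely among the positive integers — while prefixing the self-delimiting code $l^*(n)$ for $n$ lets the decoder recover first $n$, then the model, then $G$; by Kraft's inequality this concatenation is again a prefix code. A single further bit (the outer $+1$) covers the rounding implicit in passing to integer code lengths. It therefore remains to bound $|\mathcal{M}_{n/k}|$.

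The counting is the core of the argument. A member of $\mathcal{M}_{n/k}$ is a pair $(\xi,D)$: there are $S2(n,k)$ partitions $\xi=\{V_1,\dots,V_k\}$ of $V$ into $k$ non-empty blocks, and given $\xi$ the symmetric matrix $D$ is fixed by its $\binom{k}{2}$ strictly upper entries $d_{ij}=h_{ij}/(|V_i||V_j|)$ and its $k$ diagonal entries $d_{ii}=h_{ii}/\binom{|V_i|}{2}$. By Definition \ref{modelspace-nk-def} the numerator $h_{ij}\in\Nat$ runs over $\{0,\dots,|V_i||V_j|\}$ and $h_{ii}$ over $\{0,\dots,\binom{|V_i|}{2}\}$, so entry $(i,j)$ admits at most $|V_i||V_j|+1$ values off the diagonal and $\binom{|V_i|}{2}+1$ on it. Multiplying these per-entry counts over the $\binom{k}{2}+k$ entries and then over the $S2(n,k)$ partitions yields at most the stated product (any over-counting from the labelling of blocks only inflates the bound, which is harmless), once each factor is replaced by the uniform bound $\binom{n-k+2}{2}+1$.

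The one genuinely substantive point, and where I expect the only real work, is this uniform per-entry bound. For the diagonal it is immediate: a single block among $k$ non-empty blocks has size at most $n-k+1$, whence $\binom{|V_i|}{2}\le\binom{n-k+1}{2}\le\binom{n-k+2}{2}$. For an off-diagonal entry one must check that $|V_i||V_j|\le\binom{n-k+2}{2}$ for $i\neq j$. Here two distinct blocks satisfy $|V_i|+|V_j|\le n-(k-2)=n-k+2=:s$, since the remaining $k-2$ blocks are non-empty, so the claim reduces to the elementary inequality $ab\le\binom{a+b}{2}$ for positive integers with $a+b=s\ge 2$: indeed $ab\le (a+b)^2/4=s^2/4\le s(s-1)/2=\binom{s}{2}$, the final step being equivalent to $s\ge 2$, i.e.\ to $n\ge k$. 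Combining this uniform bound with the coding estimate above gives $L(\hat{\Theta}_k(G))\le m_k$, and hence the displayed two-part code length, completing the proof.
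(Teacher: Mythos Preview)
Your proposal is correct and follows essentially the same route as the paper: enumerate $\mathcal{M}_{n/k}$, encode the model index (together with $n$) via Rissanen's integer code, encode the data via the induced prefix code of length $\lceil-\log P(G\mid\hat\Theta_k(G))\rceil$, and bound $|\mathcal{M}_{n/k}|$ by $S2(n,k)\bigl(\binom{n-k+2}{2}+1\bigr)^{\binom{k}{2}+k}$. The only cosmetic difference is in justifying the per-entry bound $\binom{n-k+2}{2}$: the paper observes directly that any pair of blocks together contains at most $n-k+2$ nodes and hence at most $\binom{n-k+2}{2}$ links, whereas you verify the equivalent arithmetic inequality $ab\le\binom{a+b}{2}$; both are fine. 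One small wording quibble: the concatenation of self-delimiting codes is prefix by the direct decoding argument (decode the first block, then the second, etc.), not ``by Kraft's inequality'' --- the paper spells this out explicitly, and you may want to phrase it that way.
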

\begin{proof} 
The expression in \eqref{twopartcode} corresponds to a concatenation
of two binary codes. The $m$-part is the length of a code for describing the parameters of the model (in the case of a non-unique maximum, we take,
say, the one with smallest number in the enumeration of all such
models). The corresponding code is called the {\em parametric
  code}. The parametric code uniquely encodes the model. To create such an encoding, we just enumerate all possible
models, given in Definition \ref{modelspace-nk-def}, and use the
integer to fix the model. The length of a prefix code corresponding to
an integer is the $l^*$-function computed for that integer, and we add 1
to handle the ceiling function. 

To obtain an upper bound for the parametric code length $m$, we find
an upper bound for the number of models in the modeling space. The
number of models is upper-bounded by the product of two integers. The
first is the number of partitions of an $n$-element set into $k$
non-empty sets (blocks), which equals $S2(n,k)$, and the second bounds
the number of different link density configurations per partition. We
can view the blocks of a partition as the nodes of a `reduced
multi-graph' (in a multi-graph, there can be several links between a
node pair, as well as self-loops). The range of multi-links is between
zero and $\binom{n-k+2}{2}$: if we consider a pair of blocks (or one
block internally), there can be at most $n-(k-2)$ nodes in such a pair
(in one set, slightly less), since there must be at least $k-2$ nodes
in the other blocks of the partition. Obviously, in such a subgraph of
$n-(k-2)$ nodes there can be at most $\binom{n-k+2}{2}$ links. Thus,
the number of values each multi-link can take is upper-bounded by
$\binom{n-k+2}{2}+1$. Since the number of node pairs in the reduced
multi-graph is $\binom{k}{2}+k$, we obtain the second multiplier in
the argument of $l^*$ in the proposition.

Finally, we show that the coding of the graph is prefix. We
concatenate both parts into one code that has the prescribed length
and put first the prefix code of the integer that defines the
parameters of the maximum likelihood model. When we start to decode
from the beginning of the entire code, we first obtain a code of an
integer, because we used a prefix coding for integers. At this stage
we are able to define the probabilistic model that was used to create
the other part of the code, corresponding to the probability
distribution $P(\cdot\mid \hat{\Theta}_k(G))$. Using this information
we can decode the graph $G$. It remains to show that the concatenated
code itself is prefix. Assume the opposite: some prefix of such a code
is prefix to some other similar code, say, the first code is a prefix
to the second one. However, the parametric code was prefix, so both
codes must correspond to the same model. Since the first two-part code
is a prefix to the second, they both share the same parametric part,
and the code for the graph of the first is a prefix of the second
one. But this is impossible, since the encoding for graphs within the
same model is prefix. This contradiction shows that the two-part
coding is prefix.
\end{proof}


Finally, we call
\begin{eqnarray}
{\cal M}_{n}:=\bigcup_{1\leq k\leq n}{\cal M}_{n/k}
\end{eqnarray}
the {\em full regular decomposition modeling space of $\Omega_n$}.

\subsection{Two-part MDL for matrices}\label{twoparttheosecmatrix}

In this section we consider input data in the form of a $n\times m$ matrix $A=(a_{ij})$ with non-negative entries. With such a matrix we associate a random bipartite multi-graph. The set of rows and the set of columns form a bipartition. Between row $i$ and column $j$ there is a random number of links that are distributed according to Poisson distribution with mean $a_{i,j}$. Such a model was introduced in \cite{norrosreittugraphprocess} and it has been used in various tasks in complex network analysis, see \cite{van2016random}. The aim of this model is to back up, heuristically,  a corresponding practical algorithm for regular decomposition of matrices. Our approach is closely related to but slightly different from the Poissonian block model. Assume that $A$ is used to generate random $n\times m$ matrices $X$ with independent integer-valued elements following {\it Poisson}($a_{ij}$) distributions. The target is to find a regular decomposition model that minimizes the expected description length of such random matrices. 
We propose the following modeling spaces:  

\begin{definition}
For integers $k_1$, $k_2$ from ranges $1\leq k_1\leq n$ and $1\leq k_2\leq m$, the parameters of a model $\Theta_{k_1,k_2}$ in the modeling space ${\cal M}_{k_1,k_2}$ for an integer matrix $X$ are partition of rows into $k_1$ non-empty sets $V=(V_1,\cdots V_{k_1})$ and partition of columns into $k_2$ non-empty sets $U=(U_1,\cdots,U_{k_2})$ and $k_1\times k_2$ block average matrix $P$, with elements $(P)_{\alpha,\beta}:= \sum_{i\in V_\alpha, j\in U_\beta}\frac{x_{i,j}}{|V_\alpha||U_\beta|}$.
\end{definition}

Thanks to the addition rule of Poisson distributions, the likelihood of $X$ in a model $\Theta_{k_1,k_2}\in{\cal M}_{k_1,k_2}$, corresponds to probabilistic models where the elements of $X$ are independent and Poisson distributed with parameters $x_{i,j}\sim Poisson(P_{\alpha(i),\beta(j)})$, where $i\in V_{\alpha(i)}$, $j\in U_{\beta(j)}$ in the model  $\Theta_{k_1,k_2}$. The corresponding likelihood is denoted as $P(X\mid  \Theta_{k_1,k_2})$, the actual probability of $X$ is denoted as $P(X\mid A)$. 
The maximum likelihood model is found from the program that maximizes the expected log-likelihood: 
\begin{align*}
\Theta_{k_1,k_2}^*
&= \argomax_{\Theta_{k_1,k_2}\in{\cal M}_{k_1,k_2}}\sum_X P(X\mid A) \log P(X\mid \Theta_{k_1,k_2})\\
&=\argomax_{\Theta_{k_1,k_2}\in{\cal M}_{k_1,k_2}}\sum_X \left(P(X|A) \log\frac{P(X|\Theta_{k_1,k_2})}{P(X|A)}+P(X|A) \log P(X|A)\right)\\
&=\argomax_{\Theta_{k_1,k_2}\in{\cal M}_{k_1,k_2}}(-D(P_A\mid\mid P_{\Theta_{k_1,k_2}})-H(P_A))
\end{align*}
where $D$ is the Kullback-Leibler divergence between distributions, $H$ denotes entropy and $P_A$ and $ P_{\Theta_{k_1,k_2}}$ are the two families of Poisson distributions for the matrix elements of $X$. Since $H(P_A) $ is independent of $\Theta_{k_1,k_2}$, it does not affect the identification of the maximum likelihood model. Thus, the final program for finding the optimal model is
\begin{equation}
\label{matrixfinalprog}
\Theta_{k_1,k_2}^*= \argomin_{\Theta_{k_1,k_2}\in{\cal M}}D(P_A\mid\mid P_{\Theta_{k_1,k_2}}).
\end{equation}
The description length of a model $l(\Theta_{k_1,k_2}\in{\cal
  M}_{k_1,k_2})$ consists of the description length $l(V)+l(U)$ of the
two partitions and the description length of the block average matrix
$l(P(X))$. For the latter we need to know only the integers presenting
the block sums of $X$, since the denominator is known for a fixed
partition $(U,V)$. The code lengths of such integers are, for large
matrices, simply the logarithms of the integers. For $l(U)+l(V)$ we
use the same entropy based formula as in \eqref{poibmcodexi}.  As a
result we end up with the following expression for the description
length of the random multi-graph model $A$ using the modeling space
${\cal M}_{k_1,k_2}$:
\begin{align*}
l_{k_1,k_2}(A)&=D(P_A\mid\mid P_{\Theta_{k_1,k_2}^*})+l(V^*)+l(U^*)\\
&\quad+ \sum_{1\leq \alpha\leq k_1;1\leq \beta\leq k_2}\E(\log(e_{\alpha, \beta}+1\mid  P_{\Theta_{k_1,k_2}}^* ),
\end{align*}
where 
$$
e_{\alpha,\beta}=\sum_{i\in V_\alpha^*, j\in U_\beta^*}x_{i,j}. 
$$
The star superscript refers to parameters corresponding to the solution of the program \eqref{matrixfinalprog}. The expectation of logarithm  is not explicitly computable. However, we assume large matrices and blocks, and then Jensen's inequality provides a tight upper bound that can be used in practical computations. Thus, the final expression for the description length of $A$ is
\begin{equation}
\label{matrixcaseobjfun}
l_{k_1,k_2}(A)=D(P_A\mid\mid P_{\Theta_{k_1,k_2}^*})+l(V^*)+l(U^*)+ \sum_{1\leq \alpha\leq k_1;1\leq \beta\leq k_2}\log(a_{\alpha, \beta}+1 ),
\end{equation}
where
$$
a_{\alpha,\beta}=\sum_{i\in V_\alpha^*, j\in U_\beta^*}a_{i,j}. 
$$
The full two-part MDL would now be realized by finding the global minimum of this expression over various $(k_1,k_2)$. We return to this case in the algorithm section \ref{algosec}. Although a heuristic one, we believe that our method for matrices is both reasonable and easy to use and implement, see \cite{reittuweissbazso}. 
\subsection{Refined MDL and asymptotic model complexity}

Let us next consider Rissanen's {\em refined MDL} variant
(see \cite{mdlgrunwald}). The idea is to generate just one distribution on
$\Omega_n$, called the {\em normalized maximum likelihood
  distribution} $P_{nml}$. Then a graph $G\in\Omega_n$ has the description
length $-\log P_{nml}(G)$ which is at most as large as the one given
by the two-part code in \eqref{twopartcode}. The function $P(\cdot\mid
\hat{\Theta}_k(\cdot))$ maps graphs of size $n$ into $[0,1]$, and it
is not a probability distribution, because $\sum_{G\in\Omega} P(G|
\hat{\Theta}_k(G))> 1$. However, a related true probability
distribution can be defined as
\begin{equation}
\label{nmldef}
P_{nml}(\cdot)
=\frac{P(\cdot\mid \hat{\Theta}_k(\cdot))}{
  \sum_{G\in\Omega} P(G| \hat{\Theta}_k(G))}.
\end{equation}

The problem with this is that a computation of the normalization factor
in \eqref{nmldef} is far too involved: finding a maximum
likelihood parametrization for a single graph is a `macroscopic'
computational task by itself and it is not possible to solve such a
problem explicitly for all graphs. Therefore the two-part variant is
a more attractive choice in a practical context. However, the refined
MDL approach is useful as an idealized target object for justifying
various approximate implementations of the basic idea. It appears
that in an asymptotic sense the problem is solvable for large simple
graphs. The logarithm of the normalization factor in \eqref{nmldef}
is called the {\em parametric complexity} of the model space ${\cal
  M}_{n/k}$:
\begin{eqnarray}
\label{parcompldef}
COMP({\cal M}_{n/k})
:= \log \left( \sum_{G\in\Omega_n} P(G\mid \hat{\Theta}_k(G))\right). 
\end{eqnarray}
In a finite modelling space case like in ours, this can be considered
as a definition of model complexity. We have now the following simple
bounds:
\begin{proposition}
\label{nmlboundprop}
$$\log\left(S2(n,k)\right)\leq COMP({\cal M}_{n/k})\leq m_k+1,$$ where
  we use the same notation as in Proposition \ref{twopartprop}.
\end{proposition}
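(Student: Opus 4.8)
The plan is to prove the two inequalities separately. Throughout, write $M := S2(n,k)\left(\binom{n-k+2}{2}+1\right)^{\binom{k}{2}+k}$ for the upper bound on the number of models in ${\cal M}_{n/k}$ that is derived in the proof of Proposition \ref{twopartprop}, and recall that Rissanen's code-length function satisfies $l^*(m)\geq\log m$ for every $m\in\Nat$.

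For the upper bound I would group the graphs according to their maximum-likelihood model. For a fixed $\Theta\in{\cal M}_{n/k}$ the map $G\mapsto P(G\mid\Theta)$ is a genuine probability distribution on $\Omega_n$, so $\sum_{G\in\Omega_n}P(G\mid\Theta)=1$. Writing $S_\Theta=\set{G\in\Omega_n:\hat\Theta_k(G)=\Theta}$, which partitions $\Omega_n$, I obtain
\[
\sum_{G\in\Omega_n}P(G\mid\hat\Theta_k(G))=\sum_{\Theta\in{\cal M}_{n/k}}\ \sum_{G\in S_\Theta}P(G\mid\Theta)\leq\sum_{\Theta\in{\cal M}_{n/k}}\ \sum_{G\in\Omega_n}P(G\mid\Theta)=|{\cal M}_{n/k}|\leq M.
\]
Taking logarithms and using $\log M\leq\log(M+1)\leq l^*(M+1)\leq m_k$ then gives $COMP({\cal M}_{n/k})\leq m_k\leq m_k+1$.

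For the lower bound I would exhibit, for each of the $S2(n,k)$ partitions $\xi$ of $V$ into $k$ non-empty blocks, a single graph $G_\xi$ produced with probability one by some model $\Theta_\xi\in{\cal M}_{n/k}$ built on $\xi$. Concretely I would take $D$ to be a fixed symmetric $\set{0,1}$-matrix, so that the block model degenerates to a deterministic graph, namely the block blow-up of the $k$-vertex graph encoded by $D$ (the disjoint union of cliques when $D$ is the identity). For such a model $P(G_\xi\mid\Theta_\xi)=1$, hence $P(G_\xi\mid\hat\Theta_k(G_\xi))=\max_\Theta P(G_\xi\mid\Theta)=1$, since a probability cannot exceed one. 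If the assignment $\xi\mapsto G_\xi$ is injective, then restricting the sum defining $COMP$ to these witnesses gives $\sum_{G\in\Omega_n}P(G\mid\hat\Theta_k(G))\geq S2(n,k)$, i.e.\ $COMP({\cal M}_{n/k})\geq\log S2(n,k)$.

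The delicate points — and the step I expect to be the main obstacle — all sit in this construction. I must choose $D$ so that it satisfies the irreducibility condition \eqref{irreducibility} (its rows pairwise distinct) while remaining a legitimate member of ${\cal M}_{n/k}$ under the integrality constraints of Definition \ref{modelspace-nk-def}. The diagonal entries $d_{ii}$ are only meaningful for blocks of size at least two, so singleton blocks force a convention (say $d_{ii}:=0$) and oblige the distinctness of rows to be carried by the off-diagonal pattern; in particular the pure identity choice already fails once two singleton blocks are present, and a genuine point-determining symmetric $\set{0,1}$-matrix must be supplied for each $k$. I then have to verify that $\xi\mapsto G_\xi$ is injective: because all blocks are non-empty and the rows of $D$ are distinct, vertices in different blocks acquire different neighbourhoods, so $\xi$ is recovered from $G_\xi$ as its partition into twin-classes (the connected components, in the disjoint-cliques case). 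Checking these facts uniformly in $n$ and $k$ is the only non-routine part of the argument.
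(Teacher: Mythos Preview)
Your lower bound argument is exactly the paper's: exhibit, for each partition $\xi$, the disjoint-union-of-cliques graph $G_\xi$ (i.e.\ take $D=I_k$), observe it has likelihood~$1$, and note that $\xi\mapsto G_\xi$ is injective because the blocks are recovered as connected components. The paper states this in two sentences and does not pause over the edge cases you raise. Your caution about irreducibility is slightly misdirected, though: the condition~\eqref{irreducibility} is on the matrix $D$ itself, and the rows of $I_k$ are pairwise distinct regardless of block sizes, so no ``point-determining'' replacement is needed. The only genuine wrinkle is whether Definition~\ref{modelspace-nk-def} even assigns a meaning to $d_{ii}$ when $|V_i|=1$; the paper silently treats that constraint as vacuous, and so can you.

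Your upper bound, on the other hand, is a genuinely different and more elementary route. You bound $\sum_G P(G\mid\hat\Theta_k(G))$ by grouping graphs by their ML model and using $\sum_G P(G\mid\Theta)=1$ for each fixed $\Theta$, obtaining $\sum_G P(G\mid\hat\Theta_k(G))\le|{\cal M}_{n/k}|\le M$ and hence $COMP\le\log M\le m_k$. The paper instead invokes Kraft's inequality for the two-part prefix code of Proposition~\ref{twopartprop}: since $\sum_G 2^{-l_k(G)}\le1$ and $l_k(G)\le -\log P(G\mid\hat\Theta_k(G))+1+m_k$, one gets $\sum_G P(G\mid\hat\Theta_k(G))\le 2^{m_k+1}$. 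Your argument is shorter and even yields the marginally tighter bound $m_k$ rather than $m_k+1$; the paper's argument has the conceptual payoff of showing that \emph{any} prefix coding of $\Omega_n$ bounds $COMP$ by its worst-case code length, tying the parametric complexity directly to the two-part description length that the whole section is about.
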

\begin{proof}
The lower bound follows from the fact that we can have at least this
number of graphs that have likelihood 1 in ${\cal M}_{n/k}$. This
corresponds to graphs for which the nodes can be partitioned into $k$
non-empty sets and inside each set we have a full graph and no links
between the distinct sets. Thus, for every partition there is at least one graph that
has likelihood one and all such graphs are different from each other
since there is a bijection between those graphs and partitions.

For the upper bound, we notice that according to Proposition
\ref{twopartprop}, there is a prefix coding with code lengths that
correspond to the two-part code. As a result, Kraft's inequality yields
that $\sum_{G\in\Omega_n}2^{-l_k(G)}\leq 1$, or
\begin{equation*}
1\geq \sum_{G\in\Omega_n}2^{-\lceil -\log P(G\mid\hat{\Theta}_k(G))\rceil-m_k}
\geq  \sum_{G\in\Omega_n}2^{\log P(G\mid\hat{\Theta}_k(G))-1-m_k},
\end{equation*}
from which we get
\begin{equation*}
\sum_{G\in\Omega_n} P(G\mid\hat{\Theta}_k(G))\leq 2^{m_k+1}.
\end{equation*}
Taking logarithms, we arrive at the claimed upper bound.
\end{proof}

When considering large-scale structures corresponding to moderate $k$,
the upper and lower bounds in Proposition \ref{nmlboundprop} are
asymptotically equivalent, and we have
\begin{corollary}
\label{asymodcompcor}
Assume that $k>1$ is fixed. Then
  \begin{equation*}
COMP({\cal M}_{n/k})\sim n \log k,\quad n\rightarrow\infty.    
  \end{equation*}
\end{corollary}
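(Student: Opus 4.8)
The plan is to squeeze $COMP(\mathcal{M}_{n/k})$ between the two bounds of Proposition \ref{nmlboundprop} by showing that both the lower bound $\log(S2(n,k))$ and the upper bound $m_k+1$ are asymptotically equivalent to $n\log k$.

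For the lower bound I would invoke the classical large-$n$ asymptotics of the Stirling numbers of the second kind at fixed $k$. From the inclusion--exclusion formula $S2(n,k)=\frac{1}{k!}\sum_{j=0}^{k}(-1)^j\binom{k}{j}(k-j)^n$, the leading term as $n\to\infty$ is $k^n/k!$, and every other term is smaller by the exponential factor $((k-1)/k)^n\to0$. Hence $S2(n,k)=\frac{k^n}{k!}(1+o(1))$, so that $\log S2(n,k)=n\log k-\log(k!)+o(1)$. Because $k$ is fixed, the constant $\log(k!)$ is negligible against $n\log k$, and therefore $\log S2(n,k)\sim n\log k$.

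For the upper bound I would unwind the definition of $m_k$ from Proposition \ref{twopartprop}. The first ingredient is that Rissanen's integer code length \eqref{rissanenl} satisfies $l^*(m)=\log m+\log\log m+\cdots=\log m\,(1+o(1))$ for large $m$, i.e.\ it is asymptotic to its leading logarithm. Consequently $l^*(n)=O(\log n)$ is negligible. For the second term, set $M:=S2(n,k)\bigl(\binom{n-k+2}{2}+1\bigr)^{\binom{k}{2}+k}+1$ and take logarithms: $\log M=\log S2(n,k)+\bigl(\binom{k}{2}+k\bigr)\log\bigl(\binom{n-k+2}{2}+1\bigr)+o(1)$. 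Here the exponent $\binom{k}{2}+k$ is a fixed constant and $\binom{n-k+2}{2}$ grows like $n^2/2$, so the second summand is only $O(\log n)$, whereas the first is $\sim n\log k$ by the Stirling estimate above. Thus $\log M\sim n\log k$, whence $l^*(M)\sim n\log k$ and $m_k+1\sim n\log k$.

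Combining, both bounds in Proposition \ref{nmlboundprop} are asymptotic to $n\log k$, and the squeeze yields the claim. The only genuine content is the Stirling asymptotics; everything else is the observation that every remaining factor entering $m_k$ is polynomial in $n$ and hence contributes only $O(\log n)$ after applying $l^*$, which is dwarfed by the exponential growth $k^n$ of the partition count. The one point demanding care is confirming that $l^*(m)\sim\log m$ rather than something larger, i.e.\ that the iterated-logarithm tail $\log\log m+\log\log\log m+\cdots$ is indeed $o(\log m)$.
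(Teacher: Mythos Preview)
Your proposal is correct and follows essentially the same route as the paper's proof: both arguments squeeze $COMP(\mathcal{M}_{n/k})$ between the bounds of Proposition~\ref{nmlboundprop}, invoke the standard asymptotic $S2(n,k)\sim k^n/k!$ for fixed $k$, and observe that every remaining contribution to $m_k$ is only $O(\log n)$ and hence negligible against $n\log k$. You simply spell out in detail what the paper sketches in two sentences.
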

\begin{proof}
Denoting the lower and upper bound of parametric complexity in
Proposition \ref{nmlboundprop} respectively by $b_l$ and $b_u$, we
argue that $b_u\sim b_l\sim n\log k$ asymptotically when $n
\rightarrow\infty$. This follows from the fact that the dominant
asymptotic component of both $b_u$ and $b_l$ is $\log
S2(n,k)$. Indeed, $S2(n,k)\sim \frac{k^n}{k!}$ for fixed $k$, the
asymptotic of $\log S2(n,k) $ is linear in $n$, and all other terms of the
asymptotics of both bounds are additive and at most logarithmic in $n$.
\end{proof}

\begin{remark}
The speed of convergence of the upper and lower bounds in Proposition
\ref{nmlboundprop} is of type $\log n/n$.
\end{remark}


\subsection{$\epsilon$-regularity {\em vs.}\ stochastic block models}\label{szemblomodifsec}

Although the structure that a MDL-based algorithm finds typically
looks like an $\epsilon$-regular structure, there is a principal
difference. In particular cases, an $\epsilon$-regular graph can have
a structure that allows much better compression than that provided by
the $\epsilon$-regular partition. In this section we give an explicit
example of such a case.

An important point in SRL is that for any $\epsilon>0$, there is an
upper bound for the size of regular partition, $M(\epsilon)$ so that
for any graph with size above some finite threshold $N(\epsilon)$, all
such graphs have a regular partition with at most $M(\epsilon)$
sets. Based on this, we show that the $\epsilon$-regular structure of
SRL and the structure induced by the MDL need not coincide. Let us fix
an order of graph $2n$, large enough so that SRL holds for some
$\epsilon>0$, and that $M(\epsilon)<n^{1-\alpha}$ for some
fixed $0 <\alpha< 1/2$.

\begin{proposition}
There is a graph of order $2n$ such that it has a MDL structure with
code length $o(n^2)$ and an $\epsilon-$regular structure that allows
only $\Theta(n^2)$ code length, where $\Theta(n^2)$ denotes any strictly linear function of $n^2$. 
\end{proposition}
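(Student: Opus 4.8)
The plan is to exhibit the \emph{bipartite half-graph} (staircase) on $2n$ vertices and to exploit the single structural asymmetry between the two principles: Szemer\'edi's lemma caps the number of blocks at the constant $M(\epsilon)=O(1)$, whereas the two-part MDL program minimizing \eqref{bmcodexi} is free to range over all $k\le n$. Concretely, take $U=\{u_1,\dots,u_n\}$ and $W=\{w_1,\dots,w_n\}$ and join $u_i$ to $w_j$ precisely when $i\le j$; this is a simple graph of order $2n$, and for $n$ large it satisfies $|V|\ge N(\epsilon)$, so SRL produces an $\epsilon$-regular partition with at most $M(\epsilon)+1<n^{1-\alpha}+1$ blocks. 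I would prove two bounds: (a) the code length \eqref{bmcodexi} of \emph{some} partition is $o(n^2)$, so the MDL optimum is $o(n^2)$; and (b) the code length of \emph{every} partition into $O(1)$ blocks---in particular the $\epsilon$-regular one---is $\Omega(n^2)$, while being trivially $O(n^2)$, hence $\Theta(n^2)$.

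For (a) I would use the explicit partition $\eta^*$ splitting $U$ into $q=\lceil\sqrt n\rceil$ consecutive intervals $U_1,\dots,U_q$ and $W$ likewise into $W_1,\dots,W_q$. Since the graph is bipartite, every within-side block carries no internal edges, so $L_4=0$, while the model costs $L_1,L_2$ are $O(n\log n)$ and $L_3=2n\,H(\eta^*)=O(n\log n)$. The decisive term is $L_5$: for $a\ne b$ the block $(U_a,W_b)$ lies entirely above or below the staircase diagonal, so its density is exactly $1$ or $0$ and $H$ contributes nothing; only the $q$ diagonal blocks $(U_a,W_a)$ are mixed, each contributing at most $|U_a||W_a|\log 2=(n/q)^2\log 2$. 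Summing gives $L_5=O(n^2/q)=O(n^{3/2})$, whence $L(G\mid\eta^*)=O(n^{3/2})=o(n^2)$, and the MDL minimum is no larger.

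For (b), the heart of the matter, I would use a pigeonhole/extraction argument insensitive to how a coarse partition is drawn. Given any partition $\eta=\{B_1,\dots,B_p\}$ with $p=O(1)$, label each index $i\in\{1,\dots,n\}$ by the pair $(f(i),g(i))$, where $f(i)$ (resp.\ $g(i)$) is the block containing $u_i$ (resp.\ $w_i$); some label class $S\subseteq\{1,\dots,n\}$ then has $|S|\ge n/p^2=\Omega(n)$. For $i,i'\in S$ the pair $\{u_i,w_{i'}\}$ sits in one fixed block-pair---or inside a single block, when the two labels coincide---and across $S\times S$ it realizes an order-isomorphic copy of the half-graph: about $|S|^2/2$ of these pairs are edges and about $|S|^2/2$ are non-edges. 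Hence that single block (or block-pair) has at least $\tfrac12|S|^2-O(|S|)$ edges \emph{and} at least as many non-edges, so using $H(d)\ge c\min(d,1-d)$ its entropy term alone is $\ge c'|S|^2=\Omega(n^2)$. This forces $L_4+L_5=\Omega(n^2)$ for every bounded partition, and since $L(G\mid\eta)=O(n^2)$ always, the $\epsilon$-regular partition costs exactly $\Theta(n^2)$.

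I expect the main obstacle to be exactly step (b): the lower bound must hold for \emph{arbitrary} bounded partitions, including ones that interleave $U$ and $W$ vertices in the same block and that ignore the linear order entirely, so one cannot simply analyze interval partitions. The pigeonhole extraction of a large order-isomorphic sub-half-graph inside one block is what makes the argument robust to this, and the elementary inequality $H(d)\ge c\min(d,1-d)$ converts the $\Omega(|S|^2)$ edit distance into an $\Omega(n^2)$ code-length bound. The role of the hypothesis $M(\epsilon)<n^{1-\alpha}$, $0<\alpha<1/2$, is only to guarantee that SRL applies with $O(1)$ macroscopic blocks while leaving ample room ($n^{1-\alpha}\ll n$) for the MDL optimum to refine far past the regular partition; the separation itself comes entirely from the gap between the constant block budget of SRL and the $\Theta(\sqrt n)$ blocks used by the MDL-optimal code.
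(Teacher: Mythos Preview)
Your argument is correct and reaches the same conclusion by a genuinely different route. The paper does not use the half-graph; instead it builds a \emph{random} bipartite graph on parts $X,Y$ of size $n$, each subdivided into $n^{1-\alpha}$ blocks of size $n^\alpha$, and for every block pair independently inserts a complete bipartite graph with probability $p$ and nothing otherwise. A Hoeffding bound over all large sub-pairs then shows that the coarse bipartition $\{X,Y\}$ itself is $\epsilon$-regular with high probability, so the regular structure is exhibited explicitly and its code length is immediately $\approx n^2 H(p)=\Theta(n^2)$; the fine $n^{1-\alpha}$-block partition has $L_4+L_5=0$ exactly, leaving only the $o(n^2)$ model cost. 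Your construction is deterministic and more elementary---no concentration inequalities---and your lower bound is in fact \emph{stronger}, since the pigeonhole extraction of a large order-isomorphic sub-half-graph shows that \emph{every} partition into $O(1)$ blocks pays $\Omega(n^2)$, not just the particular regular one. The price is that you must invoke SRL abstractly to guarantee an $\epsilon$-regular partition exists (the bipartition $\{U,W\}$ of the half-graph is \emph{not} $\epsilon$-regular), whereas the paper's random construction names its regular partition concretely. Both approaches exploit the same structural gap---SRL caps the block count at a constant, MDL does not---but the paper's probabilistic example sits more naturally within its block-model theme, while yours is cleaner as a standalone combinatorial statement.
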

\begin{proof}
Take $n$ large enough as prescribed above. Then construct a bipartite
graph $(X,Y)$ with $|X|=|Y|=n$ such that $n$ is divisible by
$n^\alpha$ with some rational $\alpha\in(0,1/2)$. Assume that both parts of the
bipartition are further partitioned into equal size blocks: $X=\sum
X_i, Y=\sum Y_i$, $|X_i|=|Y_i|=n^\alpha$. Define then a random graph
$G_p=(X,Y,E)$ as follows. For each pair $(X_i,Y_j)$, take
$e(X_i,Y_j)=\xi_{i,j}|X_i|Y_j|$, where $\xi_{i,j}\sim
Ber(p)$ is a Bernoulli random variable with parameter $0<p<1$, and the
variables for different pairs are independent. Assume that there are
no other edges.

We show that, with high probability, $G_p$ is $\epsilon$-regular with
regular partition $\set{X,Y}$. In the $MDL$-approach, such a structure
has a coding length at least $\binom{n}{2}H(p+o(1))=O(n^2)$. This
comes from the log-likelihood part, and the $o(1)$ corresponds to very
small deviations of link densities from the expected value $p$ that
can be made arbitrarily small by increasing $n$. Now we check that the
$\epsilon$-regularity of graph $G_p$ has a positive probability, which
implies that such an $\epsilon$-regular pair exists (actually, it
appears that this happens with high
probability). $\epsilon$-regularity means that for any $X'\subseteq
X$, $Y'\subseteq Y$, $\mid X'\mid,$ $\mid Y'\mid >\epsilon n$, the
link density $d(X',Y')$ deviates from the link density of the pair,
$d(X,Y)$, no more than by $\epsilon$. By definition,
$$
d(X',Y')=\frac{\sum_{i,j}\xi_{i,j}\mid X'\cap X_i\mid \mid Y'\cap Y_j\mid}{\mid X'\mid\mid Y'\mid},
$$
and as a result the expectation is
$$
\E d(X',Y')=\frac{\sum_{i,j}\E \xi_{i,j}\mid X'\cap X_i\mid \mid Y'\cap Y_j\mid}{\mid X'\mid\mid Y'\mid}=p.
$$
Denote
$$
x_{i,j}:=\frac{\xi_{i,j}\mid X'\cap X_i\mid \mid Y'\cap Y_j\mid}{\mid X'\mid\mid Y'\mid}
$$ The range of $x_{i,j}$ is interval $[0,1]$ of unit
length. Hoeffding's inequality yields for $S=\sum x_{i,j}$ that
$$
P(\mid S-\E S\mid>t)\leq 2e^{-\frac{t^2}{\sum b_{i,j}^2}},
$$
where $b_{i,j}=\frac{\mid X'\cap X_i\mid \mid Y'\cap Y_j\mid}{\mid X'\mid\mid Y'\mid}$ is the range of variable $x_{i,j}$. The denominator of the exponent in the right hand-side of the Hoeffding inequality can be bounded as
$$
\sum b_{i,j}^2\leq\frac{1}{(\epsilon n)^2} 
  \sum \mid X'\cap X_i\mid^2 \mid Y'\cap Y_j\mid^2
\leq\frac{1}{(\epsilon n )^2} n^{2\alpha}n^{2\alpha}\left(\frac{n}{n^\alpha}\right)^2= \frac{1}{\epsilon^2}n^{2\alpha}.
$$
 By taking $t=\epsilon \mid X'\mid \mid Y'\mid\geq \epsilon (\epsilon n)^2$ we get for the link density:
$$
P(\mid d(X',Y')-p\mid>\epsilon)\leq 2e^{-2\epsilon^8 n^{2(1-\alpha  )}}.
$$
Finally, since there are at most $4^n$ pairs of subsets $(X',Y')$ from which to choose, the probability that none of them violates regularity is lower bounded by 
$$
1-2\times 4^ne^{-2\epsilon^8 n^{2(1-\alpha  )}}\rightarrow 1,
$$ if the exponent has a positive power of $n$, and this happens when
$\alpha<1/2$. Thus, all large subsets have densities that deviate from
expectation less than $\epsilon$ with a probability tending to
one. Thus, we have shown the $\epsilon$-regularity of the partition $(X,Y)$. 

On the other hand, using MDL, we could reach the level of small sets
$X_i$ and $Y_i$, and the corresponding log-likelihood is zero. The
model complexity is $o(n^2)$, as can be easily seen from asymptotic
formulas for the upper bound for $m_k$, with $k=n^{1-\alpha}$.
\end{proof}

\section{The Regular Decomposition approach to stochastic block models}\label{resultsec}

\subsection{Block model codes}
\label{blomocodesec}

The previous section developed both the two-part and refined variants
of the MDL theory, as presented in \cite{mdlgrunwald}, for the model
space of stochastic block models. In the following, we formulate a
variant of two-part MDL that allows both practical implementations and
a proof of consistency, i.e., that the MDL principle identifies a 
correct block model. It was shown above that the most difficult task in the
description of a block model is identifying the partition. The same is
true for the model complexity, which is asymptotically just the
logarithm of the number of partitions. It appears that in order to
prove consistency, we need quite a delicate estimate for the
description length of the partition. The asymptotic model complexity
given in Corollary \ref{asymodcompcor} seems to be too crude for proof of 
consistency. A full resolution of this intriguing question is left for
further investigations.

We call our two-part MDL construction a {\em block model code} of a
graph with respect to a partition of its nodes that allows the
computation of a tight upper bound of the code length. This upper
bound is also consistent with a more generic information theoretic
point of view with a semi-constructive coding scheme.

We denote by $H(\cdot)$ both Shannon's entropy function of a partition
and the entropy of a Bernoulli distribution, i.e.
$$
H(\xi)=-\sum_{A\in\xi}\frac{|A|}{|V|}\log\frac{|A|}{|V|},\quad 
H(p)=-p\log p-(1-p)\log(1-p).
$$ 

\begin{remark}\label{natlogrem}
\rm In the rest of this paper, we define also information-theoretic
functions in terms of natural logarithms, and certain notions like
code lengths should be divided by $\log 2$ to obtain their values in
bits.
\end{remark}

\begin{definition}
\label{bmcodedef}
A {\em block model code} of a graph $G=(V,E)$ with respect to a
partition $\xi$ of $V$ is a code with the following structure:\\

{\em The model part:}
\begin{itemize}
\item first, the sizes of the blocks $A\in\xi$ are given as integers;
\item second, the edge density $d(A)$ inside each block $A\in\xi$ and
  the edge density $d(A,B)$ between each pair of distinct blocks
  $A,B\in\xi$ are given as the numerators of the rational numbers
  presenting the exact densities.
\end{itemize}
The aim of these two codes is to describe the parameters of two
probability distributions, one for the links and the other for the membership
of nodes in the blocks of the partition.\\

{\em The data part:}
\begin{itemize}
\item third, the partition $\xi$ is specified by a prefix code corresponding to membership distribution $P(i\in A)=|A|/n$, where all nodes are independent of each other;
\item fourth, the edges inside each block $A\in\xi$ are specified by a
  prefix code corresponding to a stochastic block model distribution of
  links inside each block of $\xi$;
\item fifth, the edges between each pair of blocks $A,B\in\xi$ are
  specified by a prefix code corresponding to a block model distribution
  of links between pairs of blocks in $\xi$.
\end{itemize}
\end{definition}

The description of link densities as link probabilities (the second code) is
natural, since conditionally to a partition the stochastic block model
is just a collection of Bernoulli models, where the best choice is to use
averages as parameters. Note that a block model code can be given for
any graph with respect to any partition of its nodes.

From Kraft's inequality and the above definitions, it follows that
there exists a prefix code for a graph $G=(V,E)$ with respect to a
partition $\xi=\set{A_1,\ldots,A_k}$ of $V$ with length at most
(and, for large graphs, typically close to)
\begin{align}
\label{bmcodexi}
\nonumber
L(G|\xi)&=L_1(G|\xi)+L_2(G|\xi)+L_3(G|\xi)+L_4(G|\xi)+L_5(G|\xi),\\
\nonumber
L_1(G|\xi)&=\sum_{i=1}^kl^*(|A_i|),\\
L_2(G|\xi)&=\sum_{i=1}^kl^*\left(\binom{|A_i|}{2}d(A_i)\right)
+\sum_{i<j}l^*\left(|A_i||A_j|d(A_i,A_j)\right),\\
\nonumber
L_3(G|\xi)&=|V|H(\xi),\\
\nonumber
L_4(G|\xi)&=\sum_{i=1}^k\binom{|A_i|}{2}H(d(A_i)),\\
\nonumber
L_5(G|\xi)&=\sum_{i<j}|A_i||A_j|H(d(A_i,A_j)),
\end{align}
where $l^*(m)$ was defined by \eqref{rissanenl}.  Below we shall
approximate $l^*(m)$ by $\log m$ without further mentioning, because
their difference is insignificant in our context. Similarly, we have
dropped ceiling functions systematically. Also, recall Remark
\ref{natlogrem} on the use of natural logarithms.

Next, we shall define block model codes for the Poissonian block models of Definition \ref{poiblomodef}. The entries of the random matrix $E$ are Poisson distributed integers. For a pair of disjoint sets $A,B\subset V$, the set $\eset{e_{ij}}{i\in A,\,j\in B}$ is a sample from a distribution $R=(r_\ell)_{\ell\ge0}$ that is mixture of Poisson distributions. It would be hard to encode the sample by first estimating the unknown mixture distribution. Instead, we base the code simply on the sample mean \begin{equation*}
e_{AB}=\frac{1}{|A||B|}\sum_{i\in A,\ j\in B}e_{ij}
\end{equation*} 
and encode $\eset{e_{ij}}{i\in A,\,j\in B}$ as if it came from a Poisson distribution $P=(p_\ell)$ with parameter $e_{AB}$. Thus, by Kraft's inequality, a value $e_{ij}=\ell$ can be well encoded by a codeword with approximate length
\begin{equation*}
-\log p_\ell
=-\log\left(\frac{e_{AB}^\ell}{\ell!}e^{-e_{AB}}\right).
\end{equation*}
By the fundamental information inequality
\begin{equation*}
\sum_\ell r_\ell(-\log p_\ell)
\ge\sum_\ell r_\ell(-\log r_\ell)=H(R),
\end{equation*}
this encoding is suboptimal for arbitrary disjoint subsets $A$ and $B$, but it is optimal when $A$ and $B$ are blocks of the model partition $\xi$ and $R$ presents a pure Poisson distribution. Thus, the suboptimality only improves the contrast between $\xi$ and other partitions of $V$.

For an arbitrary partition $\eta=\set{B_1,\ldots,B_m}$, the encoding of all $e_{ij}'s$ with the above rule requires about \begin{align*}
&\sum_{i=1}^m\binom{|B_i|}{2}\sum_{\ell\ge0}
r_\ell^{(B_i)}\left(-\log\left(\frac{e_{B_i}^\ell}{\ell!}e^{-e_{B_i}}\right)\right)\\
&+\sum_{i<j}|B_i||B_j|\sum_{\ell\ge0}
r_\ell^{(B_iB_j)}\left(-\log\left(\frac{e_{B_iB_j}^\ell}{\ell!}e^{-e_{B_iB_j}}\right)\right)\\
=&\sum_{i=1}^m\binom{|B_i|}{2}(-\log e_{B_i})\sum_{\ell\ge0}\ell r_\ell^{(B_i)}
+\sum_{i<j}|B_i||B_j|(-\log e_{B_iB_j})\sum_{\ell\ge0}\ell
r_\ell^{(B_iB_j)}\\
&+\sum_{i=1}^m\binom{|B_i|}{2}\sum_{\ell\ge0}
r_\ell^{(B_i)}\log \ell!
+\sum_{i<j}|B_i||B_j|\sum_{\ell\ge0}
r_\ell^{(B_iB_j)}\log\ell!\\
&+\sum_{i=1}^m\binom{|B_i|}{2}e_{B_i}
+\sum_{i<j}|B_i||B_j|e_{B_iB_j}\\
=&\sum_{i=1}^m\binom{|B_i|}{2}\phi(e_{B_i})
+\sum_{i<j}|B_i||B_j|\phi(e_{B_iB_j})\\
&+\frac{1}{2}\sum_{v,w\in V,\,v\not=w}\log e_{vw}!
+\sum_{v,w\in V,\,v\not=w}e_{vw},
\end{align*}
where $r$-symbols refer to the empirical distribution of the $e_{vw}$s,
\begin{equation*}
e_{B_i}:=\frac{\sum_{v,w\in B_i}e_{vw}}{|B_i|(|B_i|-1)},\quad
e_{B_iB_j}:=\frac{\sum_{v,w\in B_i}e_{vw}}{|B_i||B_j|}.
\end{equation*}
and 
\begin{equation*}
\phi(x)=-x\log x.    
\end{equation*}
Now, we define the {\em Poissonian block model code
  length} of $E$ with respect to any partition
$\eta=\set{B_1,\ldots,B_m}$ as
\begin{align}
\label{poibmcodexi}
\nonumber
L(E|\eta)&=L_0(E|\eta)+L_1(E|\eta)+L_2(E|\eta)+L_3(E|\eta)+L_4(E|\eta)+L_5(E|\eta),\\
\nonumber
L_0(E|\eta)&=\frac{1}{2}\sum_{v,w\in V,\,v\not=w}\log e_{vw}!
+\sum_{v,w\in V,\,v\not=w}e_{vw},\\
\nonumber
L_1(E|\eta)&=\sum_{i=1}^m\log(|B_i|),\\
L_2(E|\eta)&=\sum_{i=1}^m\log\left(\binom{|B_i|}{2}e_{B_i}\right)
+\sum_{i<j}\log\left(|B_i||B_j|e_{B_iB_j}\right),\\
\nonumber
L_3(E|\eta)&=|V|H(\eta),\\
\nonumber
L_4(E|\eta)&=\sum_{i=1}^m\binom{|B_i|}{2}\phi(e_{B_i}),\\
\nonumber
L_5(E|\eta)&=\sum_{i<j}|B_i||B_j|\phi(e_{B_iB_j}).
\end{align}
Note that the term $L_0(E|\eta)$ is independent of the partition $\eta$ and can be neglected when minimizing over $\eta$. 


The MDL for matrices, derived in Section \ref{twoparttheosecmatrix}, is essentially equivalent with \eqref{poibmcodexi}. Indeed, the first term of \eqref{matrixcaseobjfun} can be written as
\begin{equation*}
D(P_A\mid\mid P_{\Theta_{k_1,k_2}^*})
=\sum_{\alpha=1}^{k_1}\sum_{\beta=1}^{k_2}|V_\alpha||U_\beta|
  \phi(\bar{a}_{\alpha,\beta})+\sum_{i,j}a_{i,j}\log a_{i,j},
\end{equation*}
where $a_{i,j}$ and $a_{\alpha,\beta}$ are as in \eqref{matrixcaseobjfun} and $\bar{a}_{\alpha,\beta}=a_{\alpha,\beta}/(|V_\alpha||U_\beta|)$.
The latter sum does not depend on the partitions and can be neglected in minimization. 

\subsection{Accuracy of block structure identification by MDL}\label{theoremsec}

The general idea of the Regular Decomposition method is to be a
generic tool for separating structure and randomness in large data
sets of graph or matrix form. A partition of a real-world data set
that minimizes the (nominal) code length given in \eqref{bmcodexi}
resp.\ \eqref{poibmcodexi} can often not be compared with a `true
solution' for the simple reason that there may not be any objective
notion of a `true structure' of the data. However, it is important to
analyse and understand how the method performs when the data really
originates from a stochastic block model. This question is
called the consistency of MDL. Our results on this question are summarized
in the following theorem, formulated in terms of the asymptotic
behavior of a model sequence as specified by Construction
\ref{graphseqconstr}. In such a framework, an event is said to happen
    {\em with high probability}, if its probability tends to 1 when
    $n\to\infty$.

\begin{thm}
\label{blomothm}
Consider a sequence of stochastic block models $(G_n,\xi_n)$ based on
a vector $(\gamma_1,\ldots,\gamma_k)$ of relative block sizes and a
matrix $D=(d_{ij})_{i,j=1}^k$ of link probabilities, as described in
Definition \ref{blomodef} and Construction \ref{graphseqconstr}. With
high probability, the following hold:
\begin{enumerate}
\item\label{blomothmkclaim}Among all partitions $\eta$ of $V_n$ such
  that $|\eta|\le k$, $\xi_n$ is the single minimizer of
  $L(G_n|\eta)$.
\item\label{blomothmbigpartclaim}For any fixed
  $\epsilon\in(0,\min_i\gamma_i)$, $\xi_n$ is the single minimizer of
  $L(G_n|\eta)$ among partitions $\eta$ with minimal block size larger
  than $n\epsilon$.
\item\label{blomothmrefinclaim}No refinement $\eta$ of $\xi_n$ with
  $|\eta|\le m$ improves $L(G_n|\xi_n)$ by more than
  $\mathit{const}(k,m)\log n$.
\end{enumerate}
The corresponding claims hold for the Poissonian block model {\em
  mutatis mutandis}.
\end{thm}
\begin{proof}
The proof is given in the next section and structured into several
propositions. Claim \ref{blomothmkclaim} follows from Proposition
\ref{neartoendprop}, Proposition \ref{bigdifprop} and Corollary
\ref{concacor}. Claim \ref{blomothmbigpartclaim} is Proposition
\ref{bigblockprop}. Claim \ref{blomothmrefinclaim} is Proposition
\ref{refinementprop}.
\end{proof}

The results of Section \ref{blomothmsec} offer a richer picture than
what was distilled into Theorem \ref{blomothm}. For example,
Proposition \ref{neartoendprop} shows that if $|\eta|=k$ and $\eta$
differs from $\xi_n$ only a little, then the remaining misplaced nodes
can be immediately identified by computing their effect to the value
of $L(G_n|\eta)$.  On the other hand, we have not been able to exclude
the possibility that a refinement of $\eta$ could yield a slight
$O(\log n)$ improvement of the code length.

\begin{remark}
It is rather obvious that with large $n$, the identification of the
block structure $\xi_n$ is robust against independent noise. The
simplest case is that the Poissonian block model is disturbed by
additive Poissonian noise to each matrix element:
\begin{equation*}
  \tilde{e}_{ij}=e_{ij}+\phi_{ij},
\end{equation*}
where the $\phi_{ij}$s are i.i.d.\ with
$\phi_{ij}\sim\mathit{Poisson}(\nu)$, $\nu>0$. Then $(\tilde{e}_{ij})$
is again an irreducible Poissonian block model with the same
partition. More interesting cases are binary flips in the graph case and
multiplicative noise with mean 1 in the case of non-negative
matrices. We leave these for forthcoming work.
\end{remark}

\section{Proof of Theorem \ref{blomothm}}\label{blomothmsec}

Through this section, we consider a sequence $(G_n|\xi_n)$ of
increasing versions of a fixed stochastic block model based on a
vector $(\gamma_1,\ldots,\gamma_k)$ of relative block sizes and a
matrix $D=(d_{ij})_{i,j=1}^k$ of link probabilities, as specified in
Construction \ref{graphseqconstr}.

Consider partitions $\eta$ of $V_n$. 
Denote
\begin{equation*}
\mathfrak{d}(\eta,\xi_n)=
\frac{1}{n}\max_{B\in\eta}\min_{A\in\xi_n}|B\setminus A|.
\end{equation*}
Thus, $\mathfrak{d}(\eta,\xi_n)=0$ if and only if $\eta$ is a
refinement of $\xi_n$. If $v\in V_n$ and $B\in\eta$,
denote by $\eta_{v,B}$ the partition obtained from $\eta$ by moving
node $v$ to block $B$ (if $v\in B$, then $\eta_{v,B}=\eta$).

\begin{proposition}
\label{neartoendprop}
There is a number $\epsilon_0>0$ such that the following holds with high
probability: if $|\eta|=k$ and $\mathfrak{d}(\eta,\xi_n)\le\epsilon_0$, then 
\begin{equation*}
\mbox{if }A\in\xi_n,\ B\in\eta,\ 
\frac{1}{n}|B\setminus A|\le\epsilon_0\mbox{ and }
v\in A\setminus B,\mbox{ then }L(G_n|\eta_{v,B})<L(G_n|\eta).
\end{equation*}
\end{proposition}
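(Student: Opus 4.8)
The plan is to show that a single favourable node move strictly shortens the code, and then to make this uniform over all admissible $(\eta,A,B,v)$ by a union bound whose feasibility is what forces $\epsilon_0$ to be small. The starting point is that the data parts $L_4+L_5$ of the block model code equal the minus-log-likelihood of $G_n$ evaluated at the empirical (maximum-likelihood) densities. Writing $\hat P(\eta)$ for the density parameters attached to $\eta$, and noting that $\eta$ and $\eta_{v,B}$ share the same blocks (only $v$ is relocated), one gets the one-sided bound
\[
(L_4+L_5)(G_n\mid\eta_{v,B})-(L_4+L_5)(G_n\mid\eta)\le\Delta\ell_v,
\]
where $\Delta\ell_v$ is the change of minus-log-likelihood when the pairs incident to $v$ are rescored from the densities of its current block $B'\ni v$ to those of $B$, with $\hat P(\eta)$ held fixed: re-optimising the densities for $\eta_{v,B}$ can only push the left-hand side further down, so the frozen-density difference is a valid upper bound. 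A short computation shows that $L_1,L_2,L_3$ change by only $O(\log n)$ under a single-node move, so it suffices to prove $\Delta\ell_v\le-cn$ for a constant $c>0$, with high probability and uniformly.

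Grouping the rescored pairs by target block $C\in\eta$, $\Delta\ell_v$ is a sum over $C$ of $e_v(C)\log\frac{\hat p(B',C)}{\hat p(B,C)}+(|C|-e_v(C))\log\frac{1-\hat p(B',C)}{1-\hat p(B,C)}$, where $e_v(C)$ counts $v$'s edges into $C$. First I would record a combinatorial lemma: when $|\eta|=k=|\xi_n|$ and $\mathfrak{d}(\eta,\xi_n)\le\epsilon_0$ with $\epsilon_0$ small, the map sending each $\eta$-block to the $\xi_n$-block minimising $|B\setminus A|$ is a bijection, so each $\eta$-block is $(1-O(\epsilon_0))$-concentrated in a distinct true block. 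This yields $\hat p(B,C)=d_{A\ell}+O(\epsilon_0)$ and $\hat p(B',C)=d_{A'\ell}+O(\epsilon_0)$ (with $A,A',\ell$ the true blocks of $B,B',C$), while $v\in A$ gives $\mathbb{E}[e_v(C)]=|C|d_{A\ell}+O(\epsilon_0 n)$. Substituting, the conditional mean of $\Delta\ell_v$ is $-n\sum_\ell\gamma_\ell\,d_{KL}(d_{A\ell}\,\|\,d_{A'\ell})+O(\epsilon_0 n)$, with $d_{KL}$ the Bernoulli Kullback--Leibler divergence. Crucially the bijection forces $A'\ne A$ (since $v\in A\setminus B$ puts $v$ in a different $\eta$-block), and the irreducibility condition \eqref{irreducibility} makes rows $A$ and $A'$ of $D$ differ, so $\sum_\ell\gamma_\ell\,d_{KL}(d_{A\ell}\|d_{A'\ell})\ge c^*>0$; for $\epsilon_0$ small this gives $\mathbb{E}[\Delta\ell_v]\le-\tfrac12 c^* n$.

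For the concentration I would freeze all densities by a leave-one-out estimate (excluding $v$'s edges, which perturbs each density by $O(1/n)$ and $\Delta\ell_v$ by $O(1)$) and condition on the rest of $G_n$; the $e_{vw}$, $w\ne v$, are then independent Bernoulli variables and $\Delta\ell_v$ is a sum of $n-1$ bounded increments. A Chernoff/Hoeffding bound of the type collected in Appendix \ref{chernoffsec} gives $\mathbb{P}(\Delta\ell_v>-\tfrac14 c^* n)\le e^{-c'n}$, with $c'>0$ determined by $c^*$ and by a lower bound on $\min_{ij}\min(d_{ij},1-d_{ij})$ controlling the increment sizes. Here I would assume the densities bounded away from $0$ and $1$; degenerate entries $d_{ij}\in\{0,1\}$ only make a mismatched assignment more costly and are dispatched separately.

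The main obstacle is upgrading this per-move statement to a uniform one. A naive union over partitions fails: there are $\sim k^n$ of them, against a per-move bound of only $e^{-c'n}$. The escape is that $\mathfrak{d}(\eta,\xi_n)\le\epsilon_0$ confines $\eta$ to a perturbation of $\xi_n$ with at most $k\epsilon_0 n$ misassigned nodes, so the number of admissible $\eta$ is at most $\exp(c_0(\epsilon_0)n)$ with $c_0(\epsilon_0)\to0$ as $\epsilon_0\to0$, and the choices of $(A,B,v)$ cost only a factor $k^2 n$. Since $c'$ stays bounded below while $c_0(\epsilon_0)\to0$, fixing $\epsilon_0$ small enough that $c_0(\epsilon_0)<c'$ makes $\exp(c_0(\epsilon_0)n)\cdot k^2 n\cdot e^{-c'n}\to0$. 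A separate, easier union bound makes the density estimate $\hat p(B,C)=d_{A\ell}+O(\epsilon_0)$ hold uniformly, its per-pair failure probability being $e^{-\Theta(n^2)}$ because each block pair carries $\Theta(n^2)$ independent edges. Choosing $\epsilon_0$ to meet all these smallness requirements simultaneously yields the stated $\epsilon_0$, and the Poissonian case follows \emph{mutatis mutandis} with the Bernoulli divergence replaced by its Poisson analogue.
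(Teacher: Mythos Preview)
Your argument is correct and reaches the same conclusion, but the route differs from the paper's in two notable ways. First, where you upper-bound $(L_4+L_5)(\eta_{v,B})-(L_4+L_5)(\eta)$ by the frozen-density log-likelihood change $\Delta\ell_v$ and then appeal to the maximum-likelihood property, the paper computes the \emph{exact} difference $(L_4+L_5)(\eta)-(L_4+L_5)(\eta_{v,B})$ directly: it expands both entropy sums, adds and subtracts terms of the form $H(d(\{v\},B_q))$, and uses the identity $H(q)-H(p)-H'(p)(q-p)=I(q:p)$ (Lemma~\ref{infodifflem}) to rewrite each summand as a Bernoulli Kullback--Leibler term, obtaining the two-sided asymptotic $\sum_q b_q I(d_{iq}:d_{jq})$. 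Second, and more substantively, the uniformity over $(\eta,v)$ is achieved differently: you use a union bound over the $\exp(c_0(\epsilon_0)n)$ admissible partitions, which is what forces your $\epsilon_0$ to be small enough that $c_0(\epsilon_0)<c'$; the paper instead invokes Lemma~\ref{binmaxlemma}, whose two parts --- $\epsilon$-regularity of $\xi_n$ and the uniform bound $\max_{v\in A_i}|d(\{v\},A_j)-d_{ij}|\le n^{-1/2+\epsilon}$ --- hold on a single high-probability event, after which the entropy calculation is \emph{deterministic} for every admissible $(\eta,v)$. Your approach is arguably more modular (the ML inequality cleanly separates re-optimisation from concentration), while the paper's yields the exact leading constant and avoids coupling the choice of $\epsilon_0$ to the Hoeffding exponent; note also that your Hoeffding step needs the increments $\log\frac{\hat p(B',C)}{\hat p(B,C)}$ bounded, hence your side assumption $d_{ij}\notin\{0,1\}$, whereas the paper's entropy computation proceeds without singling that case out.
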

\begin{proof}
Let $\epsilon,\delta>0$ be small numbers and $m$ a positive integer
to be specified. They can be chosen so that the following holds:
\begin{itemize}
\item $\epsilon$ is small so that $\eta$ and $\xi_n$ nearly overlap when
  $\mathfrak{d}(\eta,\xi_n)\le\epsilon$:
  \begin{equation}
\label{nearityineq}
  mn\epsilon\le\delta\min_{A\in\xi_n}|A|;  
  \end{equation}
\item all the differing link probabilities are widely separated in
  $\delta$ units:
  \begin{equation}
\label{neardensdifineq}
  \delta\le\frac{1}{m}
\min\eset{|d_{ij_1}-d_{ij_2}|}{i,j_1,j_2\in\set{1,\ldots,k},\ d_{ij_1}\not=d_{ij_2}};   \end{equation}
  \item the empirical densities are close to their mean values: for
    any $A_i,A_j\in\xi_n$ (possibly $i=j$), we have, with high
    probability,
  \begin{equation}
\label{nearempdensineq}
    |d(A_i,A_j)-d_{ij}|+m\epsilon\le\delta.
  \end{equation}
\end{itemize}
Let $\eta$ be a partition of $V_n$ such that
$\mathfrak{d}(\eta,\xi_n)\le\epsilon$. Condition \eqref{nearityineq}
entails that for each block $A_i\in\xi_n$ there is a unique block
$B_i\in\eta$ such that $|B_i\setminus A_i|\le\epsilon$. Let us now assume
that $v\in A_i\cap B_j$ and $i\not=j$, and compare the partitions
$\eta$ and $\eta_{v,B_i}$. Denote $b_i=|B_i|$, $i=1,\ldots,k$, and $\tilde{B}_i=B_i\cup\set{v}$, $\tilde{B}_j=B_j\setminus\set{v}$. Then
\begin{equation}
\label{near45split}
\begin{split}
&L_4(G_n|\eta)+L_5(G_n|\eta)-(L_4(G_n|\eta_{v,B_i})+L_5(G_n|\eta_{v,B_i}))\\
&=\binom{b_i}{2}H(d(B_i))+\binom{b_j}{2}H(d(B_j))
  -\binom{b_i+1}{2}H(d(\tilde{B}_i))-\binom{b_j-1}{2}H(d(\tilde{B}_j))\\
&\quad+\sum_{q\not=i,j}\Big[b_ib_qH(d(B_i,B_q))+b_jb_qH(d(B_j,B_q))\\
&\quad\quad-(b_i+1)b_qH(d(\tilde{B}_i,B_q))
  +(b_j-1)b_qH(d(\tilde{B}_j,B_q))\Big]\\
&\quad+b_ib_jH(d(B_i,B_j))-(b_i+1)(b_j-1)H(d(\tilde{B}_i,\tilde{B}_j)).
\end{split}
\end{equation}
Consider first the sum over $q$. Leaving out the common factor $b_q$, each term of the sum can be written as
\begin{align*}
&\quad b_j\big[H(d(B_j,B_q))-\frac{b_j-1}{b_j}H(d(\tilde{B}_j,B_q))\big]\\
&\quad-(b_i+1)\big[H(d(\tilde{B}_i,B_q))-\frac{b_i}{b_j+1}H(d(B_i,B_q))\big]\\
&=b_j\bigg[H\left(\frac{b_j-1}{b_j}d(\tilde{B}_j,B_q)
  +\frac{1}{b_j}d(\set{v},B_q)\right)\\
&\quad\quad-\frac{b_j-1}{b_j}H(d(\tilde{B}_j,B_q))
  -\frac{1}{b_j}H(d(\set{v},B_q))\bigg]\\
&-(b_i+1)\bigg[H\left(\frac{b_i}{b_i+1}d(B_i,B_q)
  +\frac{1}{b_i+1}d(\set{v},B_q)\right)\\
&\quad\quad-\frac{b_i}{b_i+1}H(d(B_i,B_q))
  -\frac{1}{b_i+1}H(d(\set{v},B_q))\bigg]
\end{align*}
(note the addition and subtraction of the term
$H(d(\set{v},B_q))$). Using Lemmas \ref{infodifflem} and \ref{binmaxlemma}, and the assumptions
on $\epsilon$, $\delta$ and $m$,
the last expression can be set to be, with high probability,
arbitrarily close to the number
\begin{equation*}
D_B(d_{iq}\|d_{jq})-D_B(d_{iq}\|d_{iq})=D_B(d_{iq}\|d_{jq})
\end{equation*}
(the function $D_B(\cdot\|\cdot)$, the Kullback-Leibler divergence of
Bernoulli distributions, is defined in \eqref{Ixpdef}).  Thus, the
  sum over $q$ is, with high probability, close to
\begin{equation*}
  b_q\sum_{q\not=i,j}D_B(d_{iq}\|d_{jq}).
\end{equation*}

Let us then turn to the remaining parts of \eqref{near45split} that
refer to two codings of the internal links of $B_i\cup B_j$. Similarly
as above, we can add and subtract terms to transform these parts into
\begin{align*}
&\binom{b_j}{2}\Bigg[H\left(\frac{\binom{b_j-1}{2}}{\binom{b_j}{2}}
 d(\tilde{B}_j)+\frac{b_j-1}{\binom{b_j}{2}}d(\set{v},\tilde{B}_j)\right)\\
&\quad\quad-\frac{\binom{b_j-1}{2}}{\binom{b_j}{2}}H(d(\tilde{B}_j))
  -\frac{b_j-1}{\binom{b_j}{2}}H(d(\set{v},\tilde{B}_j))\Bigg]\\
&-\binom{b_i+1}{2}\Bigg[H\left(\frac{\binom{b_i}{2}}{\binom{b_i+1}{2}}
 d(B_i)+\frac{b_i}{\binom{b_i+1}{2}}d(\set{v},B_i)\right)\\
&\quad\quad-\frac{\binom{b_i}{2}}{\binom{b_i+1}{2}}H(d(B_i))
  -\frac{b_i}{\binom{b_i+1}{2}}H(d(\set{v},B_i))\Bigg]\\
&+b_ib_j\Bigg[H\left(\frac{b_j-1}{b_j}d(B_i,\tilde{B}_j)
  +\frac{1}{b_j}d(\set{v},B_i)\right)\\
&\quad\quad-\frac{b_j-1}{b_j}H(d(B_i,\tilde{B}_j))
  -\frac{1}{b_j}H(d(\set{v},B_i))\Bigg]\\
&-(b_i+1)(b_j-1)\Bigg[H\left(\frac{b_i}{b_i+1}d(B_i,\tilde{B}_j)
  +\frac{1}{b_i+1}d(\set{v},\tilde{B}_j)\right)\\
&\quad\quad-\frac{b_i}{b_i+1}H(d(B_i,\tilde{B}_j))
  -\frac{1}{b_i+1}H(d(\set{v},\tilde{B}_j))\Bigg]\\
&\approx(b_j-1)D_B(d_{ij}\|d_{jj})-(b_i+1)D_B(d_{ii}\|d_{ii})
  +b_iD_B(d_{ii}\|d_{ij})-(b_j-1)D_B(d_{ij}\|d_{ij})\\
&\approx b_jD_B(d_{ij}\|d_{jj})+b_iD_B(d_{ii}\|d_{ij}).
\end{align*}
By the above analysis of \eqref{near45split}, we have obtained
\begin{align}
\nonumber
&L_4(G_n|\eta)+L_5(G_n|\eta)-(L_4(G_n|\eta_{v,B_i})+L_5(G_n|\eta_{v,B_i}))\\
\label{nearkappasum}
&\approx b_q\sum_{q\not=i,j}D_B(d_{jq}\|d_{iq})
+b_jD_B(d_{ij}\|d_{jj})+b_iD_B(d_{ii}\|d_{ij}).
\end{align}
By the irreducibility assumption \eqref{irreducibility}, there is a
block $A_q$ such that $d_{qi}\not=d_{qj}$, with the possibility that
$q\in\set{i,j}$. It follows that at least one of the $D_B(x\|y)$'s
in \eqref{nearkappasum} is positive. Denote
\begin{equation*}
\kappa^*=\min\eset{D_B(d_{ij_1}\|d_{ij_2})}{d_{ij_1}\not=d_{ij_2}}.
\end{equation*}
Thus, with high probability,
\begin{equation*}
L_4(G_n|\eta)+L_5(G_n|\eta)-(L_4(G_n|\eta_{v,B_i})+L_5(G_n|\eta_{v,B_i}))
>\frac{1}{2}(\kappa^*\min_i\gamma_i)n.
\end{equation*}
On the other hand, it is easy to compute that
\begin{equation*}
L_3(G_n|\eta)-L_3(G_n|\eta_{v,B_i})
=n(H(\eta)-H(\eta_{v,B_i}))\to\log\frac{\gamma_j}{\gamma_i}.
\end{equation*}
The changes of $L_1$ and $L_2$ when moving from $\eta$ to
$\eta_{v,B_i}$ are negligible. This concludes the proof.
\end{proof}

The proof of Proposition \ref{neartoendprop} showed that when
$\mathfrak{d}(\eta,\xi_n)\le\epsilon_0$, moving any node to its
correct block decreases $L(G_n|\eta)$ at least by
$(1/2)(\kappa^*\min_i\gamma_i)n$. In particular, with high probability,
$\xi_n$ is the unique minimizer of $L(G_n|\eta)$ among $k$-partitions
$\eta$ satisfying $\mathfrak{d}(\eta,\xi_n)\le\epsilon_0$.

\begin{proposition}
\label{bigdifprop}
For any $\epsilon\in(0,1)$ and positive integer $m$, there is a
constant $c_{\epsilon}>0$ such that the following holds with high
probability:
\begin{equation*}
\mbox{if }|\eta|\le m\mbox{ and }\mathfrak{d}(\eta,\xi_n)>\epsilon,\mbox{ then }
\frac{1}{n^2}(L(G_n|\eta)-L(G_n|\xi_n\vee\eta))\ge c_{\epsilon}.
\end{equation*}
\end{proposition}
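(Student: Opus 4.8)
The plan is to throw away every lower-order term and reduce the claim to the edge-coding part $L_4+L_5$, and then to exhibit, inside the coarse partition $\eta$, a \emph{single} block or block-pair whose re-coding under the common refinement $\xi_n\vee\eta$ already saves $\Theta(n^2)$ bits. Since $|\xi_n\vee\eta|\le km$ is bounded, the differences $L_1(G_n|\eta)-L_1(G_n|\xi_n\vee\eta)$ and $L_2(\cdots)$ are $O\big((km)^2\log n\big)$, while $L_3(G_n|\eta)-L_3(G_n|\xi_n\vee\eta)=n\big(H(\eta)-H(\xi_n\vee\eta)\big)$ is $O(n)$; all are $o(n^2)$. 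Hence it suffices to show, with high probability and uniformly over the bad family,
\[
(L_4+L_5)(G_n|\eta)-(L_4+L_5)(G_n|\xi_n\vee\eta)\ge c_\epsilon n^2,
\]
and then absorb the $o(n^2)$ corrections for large $n$.

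The structural observation I would use is that passing from $\eta$ to $\xi_n\vee\eta$ splits this difference into a sum of \emph{nonnegative} Jensen gaps, one for the internal edges of each $\eta$-block and one for the edges between each pair of $\eta$-blocks. Concretely, when an $\eta$-object is split by $\xi_n$ into sub-pairs with empirical densities $d_l$ and pair-count weights $w_l$, the coarse density is the exact weighted average $\bar d=\sum_l w_l d_l/\sum_l w_l$, and the associated gap equals $\sum_l w_l\,I(d_l:\bar d)$, where $I(\cdot:\cdot)$ is the Bernoulli Kullback--Leibler divergence of \eqref{Ixpdef}; this identity is just the statement that $H(\bar d)-\sum_l \pi_l H(d_l)=\sum_l \pi_l I(d_l:\bar d)$ for the weights $\pi_l=w_l/\sum_l w_l$. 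Because every summand is nonnegative, it is enough to locate \emph{one} $\eta$-object producing a gap of order $n^2$.

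To find it I would read the hypothesis $\mathfrak{d}(\eta,\xi_n)>\epsilon$ combinatorially. It gives a block $B^*\in\eta$ with $\min_{A\in\xi_n}|B^*\setminus A|>\epsilon n$; a short pigeonhole argument then produces two distinct true blocks $A_p,A_q\in\xi_n$ with $|B^*\cap A_p|,\,|B^*\cap A_q|\ge \epsilon n/(k-1)$. By the irreducibility assumption \eqref{irreducibility} there is an index $r$ with $d_{pr}\ne d_{qr}$. Choosing the $\eta$-block $B'$ (possibly $B'=B^*$) that maximizes $|A_r\cap B'|$ yields a chunk $D_r:=A_r\cap B'$ of size $\ge \gamma_r n/(2m)$. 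The two sub-pairs $(B^*\cap A_p,D_r)$ and $(B^*\cap A_q,D_r)$ then live inside a single $\eta$-object (the pair $(B^*,B')$ if $B'\ne B^*$, otherwise the interior of $B^*$), they carry weights $w_{pr},w_{qr}$ of order $n^2$, and their empirical densities are close to the distinct numbers $d_{pr},d_{qr}$. Restricting the KL identity to just these two atoms gives a gap at least $\min(w_{pr},w_{qr})\big(I(d_{pr}:\bar d)+I(d_{qr}:\bar d)\big)\ge 2\rho\,|D_r|\min(|B^*\cap A_p|,|B^*\cap A_q|)\ge c_\epsilon n^2$, where $\rho:=\min_{x\in[0,1]}\tfrac12\big(I(d_{pr}:x)+I(d_{qr}:x)\big)>0$ depends only on the model; the positivity of $\rho$ is precisely the assertion that one Bernoulli parameter cannot match two different ones, which is why \eqref{irreducibility} is indispensable.

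Two points carry the real work. First, the replacement of empirical sub-densities by true $d_{ij}$ must hold uniformly over the exponentially large family of admissible $\eta$: I would establish a single high-probability event on which, simultaneously for all disjoint $S\subseteq A_i$, $T\subseteq A_j$ with $|S|,|T|\ge \epsilon' n$, the density $d(S,T)$ lies within a fixed $\delta$ of $d_{ij}$. This follows from Hoeffding's inequality, since each such density concentrates at rate $\exp(-\Theta(n^2))$, which defeats the crude $4^n$ union bound over all set-pairs; this $n^2$-versus-$2^n$ balance is exactly what makes the conclusion uniform, and it is the step I expect to be the main obstacle. Taking $\delta<|d_{pr}-d_{qr}|/3$ keeps $\rho$ strictly positive after the empirical substitution, and the KL formulation disposes of any boundary case $d_{pr}\in\{0,1\}$ automatically; the bookkeeping that every relevant set has size $\Theta(n)$ then closes the estimate. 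The combinatorial extraction of $A_p,A_q,A_r$ and the two-atom KL bound are comparatively routine. The Poissonian case goes through \emph{mutatis mutandis}, with $H_P$ in place of $H$ and the Poisson Kullback--Leibler divergence in place of $I(\cdot:\cdot)$, the separation coming from \eqref{poiirreducibility}.
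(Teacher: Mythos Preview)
Your argument is correct and follows essentially the same route as the paper's proof: both reduce to $L_4+L_5$, use concavity of $H$ to write the difference as a nonnegative sum over $\eta$-objects, extract the bad block $B^*$ and two large intersections $A_p\cap B^*,A_q\cap B^*$ of size $\ge\epsilon n/(k-1)$, invoke irreducibility to find a separating index $r$, pick the $\eta$-block $B'$ maximizing $|A_r\cap B'|$, and finish via uniform density concentration (the paper cites claim~\ref{regularityclaim} of Lemma~\ref{binmaxlemma}, whose proof is exactly the Hoeffding-plus-$4^n$ union bound you describe). Your presentation is in fact a bit more explicit than the paper's---you spell out the $o(n^2)$ disposal of $L_1,L_2,L_3$ and use the identity \eqref{HI} to express the Jensen gap as a weighted Kullback--Leibler sum, which makes the two-atom lower bound via $\rho=\min_x\tfrac12(I(d_{pr}{:}x)+I(d_{qr}{:}x))>0$ cleaner than the paper's bare appeal to strict concavity.
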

\begin{proof}
Fix an $\epsilon\in(0,1)$ and let $\eta$ be a partition of $V_n$ such
that $\mathfrak{d}(\eta,\xi_n)>\epsilon$. By the concavity of $H$, we have
\begin{equation}
\label{bigdifsplit}
\begin{split}
&L_4(G_n|\eta)+L_5(G_n|\eta)\\
&=\sum_{B\in\eta}\binom{|B|}{2}H(d(B))+\frac12\sum_{
\begin{array}{c}
\scriptstyle      B,B'\in\eta\\
\scriptstyle      B\not=B'
    \end{array}
}|B||B'|H(d(B,B'))\\
&\ge\sum_{B\in\eta}\sum_{A\in\xi_n}\binom{|A\cap B|}{2}H(d(A\cap B))\\
&\quad+\sum_{B\in\eta}\frac12\sum_{
\begin{array}{c}
\scriptstyle      A,A'\in\xi_n\\
\scriptstyle      A\not=A'
    \end{array}
}|A\cap B||A'\cap B|H(d(A\cap B,A'\cap B))\\
&\quad+\frac12\sum_{
\begin{array}{c}
\scriptstyle      B,B'\in\eta\\
\scriptstyle      B\not=B'
    \end{array}
}\sum_{A,A'\in\xi_n}|A\cap B||A'\cap B'|H(d(A\cap B,A'\cap B'))\\
&=L_4(G_n|\eta\vee\xi_n)+L_5(G_n|\eta\vee\xi_n).
\end{split}
\end{equation}
By assumption, there is $B\in\eta$ such that $|B\setminus A|>\epsilon
n$ for every $A\in\xi_n$. It is easy to see that there must be (at
least) two distinct blocks, say $A_i$ and $A_j$, such that
\begin{equation}
\label{bigdifpiecesbig}
\min\set{|A_i\cap B|,|A_j\cap B|}\ge\frac{\epsilon}{k-1}n.
\end{equation}
By the irreducibility assumption \eqref{irreducibility}, there is a
block $A_q$ such that $d_{qi}\not=d_{qj}$, with the possibility that
$q\in\set{i,j}$. Fix an arbitrary $\delta>0$ to be specified later. By
$\epsilon$-regularity (claim \ref{regularityclaim} of Lemma
\ref{binmaxlemma}), with high probability, {\em every} choice of a
partition $\eta$ with $|B\setminus A|>\epsilon n$ results in some
blocks $A_i$, $A_j$, $A_q$ with the above characteristics plus the
regularity properties
\begin{equation}
\label{bigdifdensclose}
|d(A_i\cap B,A_q\cap B')-d_{iq}|\le\delta,\quad
|d(A_j\cap B,A_q\cap B')-d_{jq}|\le\delta,
\end{equation}
where $B'$ denotes a block of $\eta$ that maximizes $|A_q\cap B'|$
(note that because $|\eta|\le m$, $|A_q\cap B'|\ge|A_q|/m$). By the
concavity of $H$,
\begin{align*}
&|A_i\cap B||A_q\cap B'|H(d(A_i\cap B,A_q\cap B'))\\
&\quad+|A_j\cap B||A_q\cap B'|H(d(A_j\cap B,A_q\cap B'))\\
&=|(A_i\cup A_j)\cap B||A_q\cap B'|
  \bigg[\frac{|A_i\cap B|}{|(A_i\cup A_j)\cap B|}H(d(A_i\cap B,A_q\cap B'))\\
&\quad+\frac{|A_j\cap B|}{|(A_i\cup A_j)\cap B|}
  H(d(A_j\cap B,A_q\cap B'))\bigg]\\
&<|(A_i\cup A_j)\cap B||A_q\cap B'|H(d((A_i\cup A_j)\cap B,A_q\cap B')).
\end{align*}
In the case that $q\in\set{i,j}$ and $B=B'$, we obtain a similar
equation where $|A_q\cap B|$ is partly replaced by $|A_q\cap
B|-1$. Because of \eqref{bigdifdensclose} and \eqref{bigdifpiecesbig},
the difference between the sides of the equality has a positive lower
bound that holds with high probability. On the other hand, this
difference is part of the overall concavity inequality
\eqref{bigdifsplit}. 
\end{proof}

\begin{proposition}
\label{genrefinementprop}
For any refinement $\eta$ of $\xi_n$, we have
\begin{equation}
\label{genrefinementgainub}
\begin{split}
&L_4(G_n|\xi_n)+L_5(G_n|\xi_n)-(L_4(G_n|\eta)+L_5(G_n|\eta))
\buildrel{(st)}\over\le\sum_{j=1}^{M(|\eta|)-M(k)}(\log 2+Y_i),
\end{split}
\end{equation}
where $(st)$ refers to stochastic order, the $Y_i$'s are i.i.d.\ {\it
  Exp}$(1)$ random variables, and
\begin{equation}
\label{Mdef}
M(x)=\frac{x(x+1)}{2}.
\end{equation}
\end{proposition}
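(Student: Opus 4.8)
The plan is to rewrite the left-hand side as a genuine empirical Kullback--Leibler cost, decompose it over block-pairs using the refinement structure, and then control it one degree of freedom at a time by an exponential tail estimate.

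First I would make the refinement explicit. Since $\eta$ refines $\xi_n$, every block-pair of $\eta$ lies inside a unique block-pair of $\xi_n$, and the edges inside a single block-pair of $\xi_n$ are i.i.d.\ Bernoulli with one parameter. Grouping the terms of $L_4,L_5$ this way and using the elementary identity $N H(\bar p)-\sum_r N_r H(p_r)=\sum_r N_r\,I(p_r:\bar p)$ for the pooled density $\bar p=\sum_r N_r p_r/N$ (the same concavity bookkeeping already used in Proposition \ref{bigdifprop}), the left-hand side becomes
\begin{equation*}
L_4(G_n|\xi_n)+L_5(G_n|\xi_n)-\bigl(L_4(G_n|\eta)+L_5(G_n|\eta)\bigr)=\sum_{R}\sum_{r\subseteq R}N_r\,I\bigl(d(r):d(R)\bigr),
\end{equation*}
where $R$ ranges over the block-pairs of $\xi_n$, $r$ over the block-pairs of $\eta$ refining them, and $N_r,N_R$ are the node-pair counts. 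This already explains the index range: $M(x)=\binom{x}{2}+x$ is exactly the number of block-pairs among $x$ blocks, so the number of \emph{new} block-pairs is $M(|\eta|)-M(k)$.

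The engine is a one-parameter tail bound. If one compares an empirical density over $N_r$ independent Bernoulli pairs to a \emph{fixed} reference probability, the Chernoff estimate of Appendix \ref{chernoffsec} (Lemma \ref{binmaxlemma}), applied on both sides and combined by a union bound, gives $\Pr\bigl(N_r\,I(\hat p:\text{ref})>x\bigr)\le 2e^{-x}$ for every $x\ge0$ and every $N_r\ge1$; equivalently each such term is stochastically dominated by $\log2+Y$ with $Y\sim\mathit{Exp}(1)$. Taking the reference to be the true block parameter $d_R$ (one of the $d_{ij}$) makes the $M(|\eta|)$ fine terms independent, because distinct block-pairs of $\eta$ use disjoint edges. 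Combining independent summands that are each dominated by $\log2+Y_i$ then yields the robust bound $\sum_r N_r\,I(d(r):d_R)\buildrel{(st)}\over\le\sum_{j=1}^{M(|\eta|)}(\log2+Y_j)$. This is already enough for the $O(\log n)$ conclusion of claim \ref{blomothmrefinclaim}, but it carries $M(k)$ terms too many.

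The delicate point—and the step I expect to be the main obstacle—is recovering the exact count $M(|\eta|)-M(k)$. The mechanism is that $d(R)$ is the pooled maximum-likelihood value and should act as a \emph{null} parameter contributing no independent degree of freedom, which is visible in the Pythagorean identity, valid block-pair by block-pair,
\begin{equation*}
\sum_{r\subseteq R}N_r\,I\bigl(d(r):d_R\bigr)=\Bigl(\sum_{r\subseteq R}N_r\,I\bigl(d(r):d(R)\bigr)\Bigr)+N_R\,I\bigl(d(R):d_R\bigr),
\end{equation*}
so the gain over $R$ is the fine-vs-truth cost minus the nonnegative coarse-vs-truth cost $N_R\,I(d(R):d_R)$, whose $M(k)$ terms are exactly the surplus to be reclaimed. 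I would realize this by conditioning on the coarse counts $\{N_R\,d(R)\}$: given these, each $d(R)$ is fixed, the fine densities inside $R$ follow a multivariate hypergeometric law independently across $R$, and $\sum_{r\subseteq R}N_r\,I(d(r):d(R))$ is a between-group statistic on $g_R$ pieces carrying only $g_R-1$ effective degrees of freedom, with $\sum_R(g_R-1)=M(|\eta|)-M(k)$. The method of types shows that the large-deviation rate of this conditional statistic is \emph{exactly} the empirical KL, which is what makes a per-degree-of-freedom $\log2+Y$ domination plausible; but the conditional domination degrades precisely when some fine piece is very small (a single node-pair can force $I(d(r):d(R))$ to a large deterministic value). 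Controlling these small-block contributions—via the sharp hypergeometric Chernoff estimate of Lemma \ref{binmaxlemma} applied conditionally and then integrated over $\{d(R)\}$, exploiting the slack between the true $\tfrac12\chi^2$ tail and the cruder $\log2+Y$ envelope—is the crux that upgrades the robust $M(|\eta|)$ bound to the sharp statement. Note that irreducibility \eqref{irreducibility} plays no role here, since within one true block-pair all edges are genuinely identically distributed.
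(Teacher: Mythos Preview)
Your architecture is right and matches the paper's: decompose over block-pairs $R$ of $\xi_n$, pass via the Pythagorean identity to the form
\[
\sum_{r\subseteq R}N_r\,I(d(r):d_R)\;-\;N_R\,I(d(R):d_R)
\]
with the \emph{true} parameter $d_R=d_{ij}$ as reference, and then bound each such difference by $g_R-1$ independent copies of $\log2+Y$. The counting $\sum_R(g_R-1)=M(|\eta|)-M(k)$ is correct, and the independence across different $R$ comes from disjoint edge sets, as you say.

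Where a gap opens is in how you realise the per-$R$ bound. You correctly condition on the coarse edge count in $R$ and note that the fine counts are then (multivariate) hypergeometric, but you do not see that the conditional between-group statistic is \emph{exactly} the rate function of that hypergeometric law, evaluated at the random variable itself. This is the content of Propositions \ref{binkulidentprop} and \ref{hypergrateprop}: for two pieces the algebraic identity
\[
n_1 I(\bar X_1:p)+n_2 I(\bar X_2:p)-n\,I(\bar X_{12}:p)
\;=\;I_{\mathit{Hypergeom}(n,n_1,X_{12})}(X_1)
\]
holds with no error term. Lemma \ref{probofinfolemma} (this is the tool you want, not Lemma \ref{binmaxlemma}, which concerns $\epsilon$-regularity) then gives $I_X(X)\buildrel{(st)}\over\le\log2+Y$ directly, uniformly in all sizes; an induction on the number of pieces, peeling off one at a time as in Proposition \ref{hgratedomprop}, produces $g_R-1$ independent such terms.

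The consequence is that your declared ``crux''---controlling small-block contributions via slack between a $\tfrac12\chi^2$ tail and the $\log2+Y$ envelope---is a phantom obstacle. There is no degradation for tiny pieces, because the hypergeometric rate identity is exact and Lemma \ref{probofinfolemma} uses no asymptotics. Your method-of-types route would at best recover this identity approximately, and, as you anticipated, making it work for single-node-pair pieces that way would be awkward; the exact identity plus induction simply sidesteps the issue.
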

\begin{proof}
Here we apply results presented in Appendix \ref{chernoffsec}. Denote
by $\eta\cap A_i$ the subset of $\eta$ whose members are subsets of
the block $A_i$ of $\xi_n$. Writing the edge code lengths of the
coarser and finer partition similarly as in \eqref{bigdifsplit},
taking the difference and using \eqref{HI}, we obtain
\begin{equation}
\label{genrefinementgain}
\begin{split}
&L_4(G_n|\xi_n)+L_5(G_n|\xi_n)-(L_4(G_n|\eta)+L_5(G_n|\eta))\\
&=\sum_{i=1}^k\Bigg(\sum_{B\in\eta\cap A_i}\binom{|B|}{2}D_B(d(B)\|d_{ii})
  +\frac12\sum_{
\begin{array}{c}
\scriptstyle      B,B'\in\eta\cap A_i\\
\scriptstyle      B\not=B'
    \end{array}
}|B||B'|D_B(d(B,B')\|d_{ii})\\
&\quad-\binom{|A_i|}{2}D_B(d(A_i)\|d_{ii})\Bigg)\\
&+\sum_{i<j}\Bigg(\sum_{B\in\eta\cap A_i}\sum_{B'\in\eta\cap A_j}|B||B'|
  D_B(d(B,B')\|d_{ij})-|A_i||A_j|D_B(d(A_i,A_j)\|d_{ij})\Bigg).
\end{split}
\end{equation}
Applying Proposition \ref{hgratedomprop} to each term of both outer
sums now yields the claim, because
\begin{equation*}
\sum_{i=1}^k(M(|\eta\cap A_i|)-1)+\sum_{i<j}(|\eta\cap A_i||\eta\cap A_j|-1)
=M(|\eta|)-M(k).
\end{equation*}
\end{proof}

\begin{remark}
It is rather surprising that the stochastic bound
\eqref{genrefinementgainub} depends only on the number of blocks in
$\eta$ --- not on their relative sizes, nor on the overall model size
$n$.
\end{remark}

\begin{proposition}
\label{refinementprop}
For any positive integer $m>k$, the following holds with high
probability:
\begin{equation*}
L(G_n|\xi_n)-\min_{\eta\ge\xi_n,\ |\eta|\le m}L(G_n|\eta)\le(m+M(m-k))\log n,
\end{equation*}
where the relation $\eta\ge\xi_n$ means that $\eta$ is a refinement of
$\xi_n$, and $M(\cdot)$ was defined in \eqref{Mdef}.
\end{proposition}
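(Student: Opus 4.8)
The plan is to expand $L(G_n|\xi_n)-L(G_n|\eta)$ into the five components of \eqref{bmcodexi} and to note that, of these, only the edge-code part $L_4+L_5$ can shrink when $\eta$ refines $\xi_n$; the three model-cost parts can only charge the refinement. The membership term is immediate: refinement increases Shannon entropy, $H(\eta)\ge H(\xi_n)$, so $L_3(G_n|\xi_n)-L_3(G_n|\eta)=n(H(\xi_n)-H(\eta))\le 0$ and never contributes to an improvement. The block-size term $L_1(G_n|\xi_n)$ is a sum of $k$ values $l^*(|A_i|)\approx\log|A_i|\le\log n$; since $L_1(G_n|\eta)\ge 0$, the crude estimate $L_1(G_n|\xi_n)-L_1(G_n|\eta)\le L_1(G_n|\xi_n)\le k\log n\le m\log n$ already yields the first summand $m\log n$ of the target bound. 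Thus the whole content lies in the interaction of $L_4+L_5$ with the density-parameter term $L_2$.

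For a fixed refinement, Proposition \ref{genrefinementprop} dominates the edge-code gain stochastically by $\sum_{j=1}^{M(|\eta|)-M(k)}(\log 2+Y_j)$ with i.i.d.\ $Y_j\sim\text{Exp}(1)$, a quantity that is $O(1)$ for each $\eta$. Against every block-pair that the refinement newly creates, the code must spend an extra $l^*\approx\log n$ term in $L_2$. Here I would use $\epsilon$-regularity (claim \ref{regularityclaim} of Lemma \ref{binmaxlemma}): sub-blocks straddling two distinct classes $A_i,A_j$ of $\xi_n$ have density within $o(1)$ of $d_{ij}$, so the between-class sub-pairs carry no genuine likelihood gain, and only the within-class splits --- at most $m-k$ of them --- can reduce $L_4+L_5$. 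Counting the internal sub-pairs produced by these splits is what I expect to turn the generic count $M(|\eta|)-M(k)$ into a bound of order $M(m-k)$, and bounding the surviving exponential terms at the level $C\log n$ produces the second summand $M(m-k)\log n$.

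The main obstacle is uniformity. The estimate of Proposition \ref{genrefinementprop} is pointwise in $\eta$, whereas the claim concerns $\min_{\eta}L(G_n|\eta)$, i.e.\ the supremum of the gain over the admissible refinements, of which there may be as many as $m^n$. A direct union bound is hopeless, since each exponential tail decays only polynomially in $n$ while the family of refinements grows exponentially. I would attack this by separating the choice of $\eta$ into its \emph{shape} --- the per-class numbers of sub-blocks and their integer sizes, of which there are only polynomially many --- and the assignment of individual nodes within a fixed shape. The polynomial union over shapes is harmless at the $C\log n$ scale; the genuinely hard point, which must be extracted from the maximal inequalities of Appendix \ref{chernoffsec}, is that the stochastic domination of the edge-code gain can be taken uniform over all node assignments of a given shape, so that the supremum over assignments is already absorbed into the $\sum(\log 2+Y_j)$ bound rather than requiring a fresh union. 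Granting this maximal form of the domination, assembling the four contributions (with $L_3\le 0$ dropping out) gives $L(G_n|\xi_n)-\min_{\eta\ge\xi_n,\,|\eta|\le m}L(G_n|\eta)\le (m+M(m-k))\log n$ with high probability.
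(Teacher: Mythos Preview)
There is a genuine gap, and it sits exactly where you flagged the ``hard point''. You discard $L_3$ with the remark that $H(\eta)\ge H(\xi_n)$, so $L_3(G_n|\xi_n)-L_3(G_n|\eta)\le0$ ``never contributes to an improvement''. In the paper's argument this term is not a nuisance to be dropped: it is precisely what pays for the union over node assignments. The number of refinements $\eta'$ of $\xi_n$ with the same shape as $\eta$ is at most $e^{nH(\eta|\xi_n)}$, and $L_3(G_n|\eta)-L_3(G_n|\xi_n)=nH(\eta|\xi_n)$. The paper applies the pointwise tail bound from Proposition~\ref{genrefinementprop} together with a plain union bound over these $\le e^{nH(\eta|\xi_n)}$ assignments; the $nH(\eta|\xi_n)$ in the exponent of the union bound is then cancelled exactly by the $L_3$ contribution inside $\Delta L_{123}(\eta)$. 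There is no ``maximal form of the domination'' in Appendix~\ref{chernoffsec}: Proposition~\ref{hgratedomprop} and Lemma~\ref{probofinfolemma} are purely pointwise, and taking a supremum over exponentially many partitions genuinely costs the entropy factor. Your proposed route therefore does not close.

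Two smaller misdirections follow from this. First, the factor $M(m-k)$ in the target does not arise by invoking $\epsilon$-regularity to kill between-class sub-pairs; the paper keeps the full count $M(|\eta|)-M(k)$ from Proposition~\ref{genrefinementprop} and instead proves the $L_2$ lower bound $L_2(G_n|\eta)-L_2(G_n|\xi_n)\ge k(|\eta|-k)\log n+\text{const}$. The algebraic identity $k(m-k)+M(m-k)=M(m)-M(k)$ then converts the exponent into the desired form. Second, the $m\log n$ in the target is not an $L_1$ estimate; it is the slack needed to absorb the second union bound over the $\le n^{|\eta|-1}$ distinct shapes.
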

\begin{proof}
Let $\eta$ be a refinement of $\xi_n$. Refining the partition
w.r.t.\ $\xi_n$ yields a gain, based on the concavity of $H$, in the
code part $L_4+L_5$, but costs in the parts $L_1$, $L_2$ and $L_3$. We
have to relate these to each other.

Consider first the value of $L_2(G_n|\eta)$. In our analysis, it is
important to distinguish between `large' and `tiny' blocks, where the
relative sizes of large blocks exceed some pre-defined number
$\epsilon$ and the rest can be arbitrarily small, even
singletons. Now, each block $A_i\in\xi_n$ must contain at least one
block $B_i\in\eta\cap A_i$ such that $|B_i|\ge|A_i|/m$. Define
\begin{equation*}
\epsilon:=\min_j|A_j|/(2mn).
\end{equation*}
Because no concavity gain can be obtained with an index pair
$\set{i,j}$ such that $d_{ij}\in\set{0,1}$, it does not restrict
generality to assume that $d_{ij}\in(0,1)$ for all $i,j$. Then, by \eqref{bmcodexi},
\begin{align*}
L_2(G_n|\eta)
&=\sum_{i=1}^k\Bigg(\sum_{B\in\eta\cap A_i}\log\left(\binom{|B|}{2}d_{ii}\right)
  +\frac12\sum_{
\begin{array}{c}
\scriptstyle      B,B'\in\eta\cap A_i\\
\scriptstyle      B\not=B'
    \end{array}
}\log(|B||B'|d_{ii})\Bigg)\\
&\quad+\sum_{i<j}\Bigg(\sum_{B\in\eta\cap A_i}\sum_{B'\in\eta\cap A_j}
  \log(|B||B'|d_{ij})\Bigg)\\
&\ge\sum_{i=1}^k\Bigg(\log\left(\frac{\epsilon^2n^2}{2}d_{ii}\right)
  +(|\eta\cap A_i|-1)\log(\epsilon n d_{ii})\Bigg)\\
&\quad+\sum_{i<j}\left(\log\left(\epsilon^2n^2d_{ij}\right)+
  (|\eta\cap A_i|+|\eta\cap A_j|-2)\log(\epsilon n d_{ij})\right)\\
&=(k|\eta|+k)\log n+c_1(D,\epsilon).
\end{align*}
On the other hand, the bound $|A_i|\le n$ yields
\begin{equation*}
L_2(G_n|\xi_n)\le(k^2+k)\log n+c_2(D).
\end{equation*}
Thus,
\begin{equation}
\label{L2diff}
L_2(G_n|\eta)-L_2(G_n|\xi_n)\ge k(|\eta|-k)\log n+c_1(D,\epsilon)-c_2(D).
\end{equation}
We obviously have also $L_1(G_n|\eta)\ge L_1(G_n|\xi_n)$, but this
difference is insignificant in the present context.

The refinement gain in code parts $L_4$ and $L_5$ was bounded in
Proposition \ref{genrefinementprop} stochastically by {\it Exp}$(1)$
random variables. The rate function (see the beginning of Appendix
\ref{chernoffsec}) of the distribution {\it Exp}$(1)$ is
\begin{equation*}
  I_E(x)=x-1-\log x.
\end{equation*}
Denote
\begin{equation*}
\mathrm{Gain}_n(\eta)
:=L_4(G_n|\xi_n)+L_5(G_n|\xi_n)-(L_4(G_n|\eta)+L_5(G_n|\eta)).
\end{equation*}
Proposition \ref{genrefinementprop} yields, using \eqref{samplerate} and Proposition \ref{chernoffprop}, that for $y>\log 2$
\begin{equation*}
  \begin{split}
&    \pr{\mathrm{Gain}_n(\eta)>y}\\
&\le\pr{\sum_{j=1}^{M(|\eta|)-M(k)}Y_i>y-(M(|\eta|)-M(k))\log 2}\\
&\le\exp\left(-(M(|\eta|)-M(k))
  I_E\left(\frac{y-(M(|\eta|)-M(k))\log 2}{M(|\eta|)-M(k)}\right)
\right)\\
&\le\exp\big(-y+(M(|\eta|)-M(k))\log y\big)
  \left(\frac{2e}{M(|\eta|)-M(k)}\right)^{M(|\eta|)-M(k)},
  \end{split}
\end{equation*}
where the second factor is bounded and will be henceforth neglected.

For two refinements of $\xi_n$, write $\eta'\sim\eta$ if the block
sizes of $\eta'$ in each $A_i$ are identical to those of $\eta$. The
number of refinements $\eta'$ of $\xi_n$ with $\eta'\sim\eta$ is
upperbounded by
\begin{equation*}
\exp\left(\sum_{A\in\xi_n}|A|H(\eta\cap A)\right)=e^{nH(\eta|\xi_n)},
\end{equation*}
where $H(\eta\cap A)$ denotes the entropy of the partition of $A$
induced by $\eta$. On the other hand, we have
\begin{equation*}
L_3(G_n|\eta)-L_3(G_n|\xi_n)=nH(\eta)-nH(\xi_n)=nH(\eta|\xi_n).
\end{equation*}

Write 
\begin{equation*}
  \begin{split}
\Delta L_{123}(\eta)&=L_1(G_n|\eta)+L_2(G_n|\eta)+L_3(G_n|\eta)\\
  &\quad-(L_1(G_n|\xi_n)+L_2(G_n|\xi_n)+L_3(G_n|\xi_n)),\\
\Delta L(\eta)&=L(G_n|\eta|)-L(G_n|\xi_n|).
  \end{split}
\end{equation*}
Denote $z(\eta):=(|\eta|+M(|\eta|-k))\log n$. Recalling
\eqref{L2diff}, the union bound yields
\begin{equation*}
  \begin{split}
    &\pr{\sup_{\eta'\sim\eta}\mathrm{Gain}_n(\eta')>\Delta L_{123}(\eta)+z(\eta)}\\
&\le\exp\left(nH(\eta|\xi_n)-\Delta L_{123}(\eta)-z(\eta)+(M(|\eta|)-M(k))
  \log(\Delta L_{123}(\eta)+z(\eta))\right)\\
&\le\exp\bigg(-z(\eta)+\big(-k(|\eta|-k)+M(|\eta|)-M(k)\big)\log n
  +c_3(D,\epsilon)\bigg).
  \end{split}
\end{equation*}
The number of different block size sequences
$\ell_1\ge\cdots\ge\ell_{|\eta|}$ is upper bounded by
$n^{|\eta|-1}=e^{(|\eta|-1)\log n}$. Thus, a second application of the
union bound yields
\begin{equation*}
  \begin{split}
    &\pr{\sup_{\eta\ge\xi_n,\ |\eta|\le m}\Delta L(\eta)>z(\eta)}\\
&\le me^{\big(-\big[k(m-k)\big]+\big[M(m)-M(k)\big]
-\big[m+M(m-k)\big]+\big[m-1\big]\big)\log n
+\mathit{const}}\\
&=\frac{\mathit{const}}{n}\to0,
  \quad\mbox{as }n\to\infty.
  \end{split}
\end{equation*}
\end{proof}

\begin{corollary}
\label{concacor}
$\xi_n$ is the unique minimizer of $L(G_n|\eta)$ among
partitions $\eta$ with $|\eta|=k$.
\end{corollary}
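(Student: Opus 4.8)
The plan is to split the family of $k$-partitions $\eta$ of $V_n$ into two regimes according to the value of $\mathfrak{d}(\eta,\xi_n)$, using the threshold $\epsilon_0>0$ furnished by Proposition \ref{neartoendprop}. The case $k=1$ is trivial, since the only $1$-partition is $\set{V_n}=\xi_n$ itself; so I would assume $k\ge2$ throughout.

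In the \emph{near regime} ($\mathfrak{d}(\eta,\xi_n)\le\epsilon_0$), the conclusion is essentially already recorded in the remark following Proposition \ref{neartoendprop}. That proposition shows that, with high probability, whenever $|\eta|=k$ and $\mathfrak{d}(\eta,\xi_n)\le\epsilon_0$, moving any misplaced node to its correct block strictly decreases $L(G_n|\cdot)$ by an amount of order $n$. Iterating such node moves transforms $\eta$ into $\xi_n$ while strictly decreasing the code length at each step, so $\xi_n$ is the unique minimizer among $k$-partitions in this regime.

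In the \emph{far regime} ($\mathfrak{d}(\eta,\xi_n)>\epsilon_0$), the idea is to route the comparison through the common refinement $\xi_n\vee\eta$. Applying Proposition \ref{bigdifprop} with $\epsilon=\epsilon_0$ and $m=k$ gives, with high probability and uniformly over all such $\eta$, a quadratic gap $L(G_n|\eta)-L(G_n|\xi_n\vee\eta)\ge c_{\epsilon_0}n^2$. On the other hand, since $|\xi_n|=|\eta|=k$, the join $\xi_n\vee\eta$ is a refinement of $\xi_n$ with at most $k^2$ blocks, so Proposition \ref{refinementprop}, applied with $m=k^2>k$, controls the opposite direction: $L(G_n|\xi_n)-L(G_n|\xi_n\vee\eta)\le(k^2+M(k^2-k))\log n$ with high probability. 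Combining the two estimates yields
\begin{equation*}
L(G_n|\eta)\ge L(G_n|\xi_n)+c_{\epsilon_0}n^2-(k^2+M(k^2-k))\log n,
\end{equation*}
and since the $n^2$ term dominates the $\log n$ term, $L(G_n|\eta)>L(G_n|\xi_n)$ for all large $n$, uniformly over the far regime.

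The main obstacle is the bridging step. Proposition \ref{bigdifprop} only compares $\eta$ to $\xi_n\vee\eta$, never directly to $\xi_n$, whereas Proposition \ref{refinementprop} only compares refinements of $\xi_n$ back to $\xi_n$. The corollary follows precisely because $\xi_n\vee\eta$ plays both roles at once: it is the join that appears in the first proposition and, simultaneously, a bounded-block (at most $k^2$) refinement of $\xi_n$ as required by the second. The remaining delicate point is that all three high-probability statements (the node-move decrease, the quadratic gap, and the logarithmic refinement bound) must be invoked \emph{uniformly} over $\eta$, so that a single high-probability event suffices to rule out every competitor; once that is secured, the quadratic-versus-logarithmic magnitude comparison is what makes $\xi_n$ the strict and unique minimizer.
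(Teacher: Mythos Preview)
Your proof is correct and follows essentially the same two-regime strategy as the paper: Proposition \ref{neartoendprop} handles $\mathfrak{d}(\eta,\xi_n)\le\epsilon_0$, while the far regime is bridged through the join $\xi_n\vee\eta$ via Proposition \ref{bigdifprop} and Proposition \ref{refinementprop}, with the $n^2$ versus $\log n$ comparison deciding the outcome. Your constant $(k^2+M(k^2-k))\log n$ is in fact the correct reading of Proposition \ref{refinementprop} with $m=k^2$; the paper's own proof writes $(k+M(k^2-k))\log n$, which appears to be a slip.
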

\begin{proof}
By Proposition \ref{neartoendprop},
\begin{equation*}
\min_{|\eta|=k,\ \mathfrak{d}(\eta,\xi_n)\le\epsilon_0}L(G_n|\eta)>L(G_n|\xi_n)
\end{equation*}
with high probability. On the other hand, Proposition \ref{bigdifprop}
yields that, with high probability,
\begin{equation*}
\min_{|\eta|=k,\ \mathfrak{d}(\eta,\xi_n)>\epsilon_0}L(G_n|\eta)>
  L(G_n|\xi_n\vee\eta)+c_{\epsilon_0}n^2.
\end{equation*}
By Proposition \ref{refinementprop},
\begin{equation*}
\min_{|\eta|=k}L(G_n|\xi_n\vee\eta)>L(G_n|\xi_n)-(k+M(k^2-k))\log n
\end{equation*}
with high probability. It remains to note that $n^2$ grows faster than
$\log n$.
\end{proof}

\begin{proposition}
\label{bigblockprop}
Let $\epsilon\in(0,\min_i\gamma_i)$. Consider refinements $\eta$ of
$\xi_n$ with relative minimal block size $\epsilon$, i.e.\ the set
\begin{equation}
\mathcal{B}^{(n)}_\epsilon=\eset{\eta\ge\xi_n}{|B|\ge n\epsilon\ \forall B\in\eta}.
\end{equation}
With high probability,
\begin{equation}
\min_{\eta\in\mathcal{B}^{(n)}_\epsilon\setminus\set{\xi_n}}L(G_n|\eta)>L(G_n|\xi_n).
\end{equation}
\end{proposition}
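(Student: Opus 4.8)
The plan is to prove that every proper refinement $\eta\ge\xi_n$ all of whose blocks have relative size at least $\epsilon$ satisfies $L(G_n|\eta)>L(G_n|\xi_n)$, uniformly and with high probability. The first observation is that the large-block constraint makes $|\eta|$ bounded: since each block has size $\ge n\epsilon$ and the blocks partition $V_n$, we have $|\eta|\le m:=\lfloor 1/\epsilon\rfloor$, so it suffices to treat refinements with $k<|\eta|\le m$, and for a fixed $|\eta|=p$ the number of admissible block-size profiles is at most $n^{p-1}$. Throughout I assume, as in Proposition \ref{refinementprop}, that $d_{ij}\in(0,1)$ for all $i,j$, the degenerate pairs producing no refinement gain.

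Next I would split $\Delta L(\eta):=L(G_n|\eta)-L(G_n|\xi_n)$ into its five components. The code parts $L_4,L_5$ can only decrease; denote the saving by $\mathrm{Gain}_n(\eta)$, which by Proposition \ref{genrefinementprop} is stochastically dominated by $\sum_{j=1}^{M(|\eta|)-M(k)}(\log 2+Y_j)$ with $Y_j$ i.i.d.\ $\mathit{Exp}(1)$. The partition-entropy part obeys $L_3(G_n|\eta)-L_3(G_n|\xi_n)=nH(\eta|\xi_n)$, which is strictly positive and in fact $\ge c(\epsilon)n$ for any proper refinement. The decisive ingredient is the model-description part $L_2$: because every block $B\in\eta$ has $|B|\ge n\epsilon$, every density numerator $e(B)$ or $e(B,B')$ is of order $n^2$, so each of the $M(|\eta|)$ terms of $L_2(G_n|\eta)$ is at least $2\log n-O(1)$, whereas $L_2(G_n|\xi_n)$ has only $M(k)$ terms, each at most $2\log n+O(1)$. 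This gives the crucial bound $L_2(G_n|\eta)-L_2(G_n|\xi_n)\ge 2(M(|\eta|)-M(k))\log n-O(1)$, with $L_1$ adding a further $(|\eta|-k)\log n-O(1)$; writing $\Delta L_{12}$ for the combined $L_1+L_2$ cost, $\Delta L_{123}=\Delta L_{12}+nH(\eta|\xi_n)=\Theta(n)$.

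The core is then a two-step union bound along the lines of the proof of Proposition \ref{refinementprop}. Grouping refinements by block-size profile $s$, the number sharing a profile is at most $e^{nH_s}$ with $H_s=H(\eta|\xi_n)$. For such an $\eta$ to beat $\xi_n$ we need $\mathrm{Gain}_n(\eta)\ge\Delta L_{123}(\eta)$, so by the $\mathit{Exp}(1)$ Chernoff bound from Appendix \ref{chernoffsec} and $\log(\Delta L_{123})=\log n+O(1)$, the per-profile probability is at most $e^{nH_s}\exp(-\Delta L_{123}+(M(|\eta|)-M(k))\log n+O(1))=\exp(-\Delta L_{12}+(M(|\eta|)-M(k))\log n+O(1))$, the exponential count $e^{nH_s}$ cancelling the $e^{-nH_s}$ hidden in $\Delta L_{123}$. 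Inserting the $L_1,L_2$ lower bound turns this into $n^{-(M(|\eta|)-M(k))-(|\eta|-k)}O(1)$. Summing over the $\le n^{p-1}$ profiles of size $p$ leaves the exponent $k-1-(M(p)-M(k))$, which equals $-2$ at $p=k+1$ and is more negative for larger $p$; a final sum over the bounded range $p\in\set{k+1,\ldots,m}$ tends to $0$.

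The main obstacle is precisely this bookkeeping: the number of candidate refinements is exponential in $n$, so no crude tail estimate for $\mathrm{Gain}_n(\eta)$ will do. Everything hinges on the exact cancellation of the combinatorial count $e^{nH(\eta|\xi_n)}$ against the entropy cost $L_3=nH(\eta|\xi_n)$, which collapses the problem to a polynomial-in-$n$ comparison; the residual polynomial is then defeated only by extracting the full $2(M(|\eta|)-M(k))\log n$ from the density-description part $L_2$. This last bound is available \emph{solely} because all blocks are large --- the very feature distinguishing this statement from Proposition \ref{refinementprop}, where tiny blocks make describing sub-block densities cheap and a genuine $O(\log n)$ refinement gain cannot be ruled out.
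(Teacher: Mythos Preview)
Your proposal is correct and follows essentially the same route as the paper: bound $|\eta|\le\lfloor1/\epsilon\rfloor$, exploit the large-block hypothesis to get $L_2(G_n|\eta)-L_2(G_n|\xi_n)\ge 2(M(|\eta|)-M(k))\log n-O(1)$, combine with the stochastic gain bound of Proposition~\ref{genrefinementprop}, and run the same two-level union bound (first over refinements sharing a block-size profile, then over profiles). The only noteworthy difference is that you also count the $L_1$ increment $(|\eta|-k)\log n$, which sharpens the final rate to $O(n^{-2})$; the paper omits $L_1$ but inserts a $\tfrac12\log n$ slack in the threshold and obtains $O(n^{-1/2})$, reaching the same conclusion.
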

\begin{proof}
The restriction $\eta\in\mathcal{B}^{(n)}_\epsilon$ implies
$|\eta|\le\lfloor1/\epsilon\rfloor=:m$. The difference to the
conditions of Proposition \ref{refinementprop} is now just the
magnitude of $L_2(\eta)$. When $\eta\in\mathcal{B}^{(n)}$,
\begin{equation*}
L_2(G_n|\eta)\ge M(|\eta|)\log n^2+c(D,\epsilon,|\eta|),
\end{equation*}
so that 
\begin{equation*}
L_2(G_n|\eta)-L_2(G_n|\xi_n)\ge 2(M(|\eta|)-M(k))\log n+c(D,\epsilon,|\eta|).
\end{equation*}
By a corresponding computation as in the proof of Proposition
\ref{refinementprop}, we obtain for any fixed
$\eta\in\mathcal{B}^{(n)}_\epsilon$ that
\begin{equation*}
  \begin{split}
    &\pr{\sup_{\eta'\sim\eta}\mathrm{Gain}_n(\eta')
  >\Delta L_{123}(\eta)-\frac12\log n}\\
&\le\exp\left(-(M(|\eta|)-M(k))\log n+\mathit{const}\right).
  \end{split}
\end{equation*}
Proceeding with a second union bound like in the proof of Proposition
\ref{refinementprop} yields
\begin{equation*}
  \begin{split}
&\pr{\min_{\eta\in\mathcal{B}^{(n)}_\epsilon\setminus\set{\xi_n}}L(G_n|\eta)
\le L(G_n|\xi_n)+\frac12\log n}\\
&\le\max_{q\le m}\,\exp\left(\left(\bigg[-(M(q)-M(k))\bigg]
  +\bigg[q-1\bigg]+\frac12\right)\log n+\mathit{const}\right)\\
&=\exp\left(-\frac12\log n+\mathit{const}\right)\\
&=\frac{\mathit{const}}{\sqrt{n}}\to0,
  \quad\mbox{as }n\to\infty,
  \end{split}
\end{equation*}
where the maximum over $q$ was obtained with the smallest value $q=k+1$.
\end{proof}
\section{Algorithms, codes and illustrations}\label{algoritms}
\subsection{Regular decomposition algorithms based on MDL}\label{algosec}

In this section we present algorithms that we have used in actual
computations of regular decompositions of graph and matrix data. These
are written for standard two-part MDL, where the code lengths $L_4$ and
$L_5$ have a usual interpretation as a minus log-likelihood of a graph
corresponding to a stochastic block model.

Thus, we use link coding lengths found in the upper bound of
Proposition \ref{twopartprop}. In many cases, this is all that can be
computed realistically. Moreover such overestimating is not critical since  the over-fitting seems to be a common problem, because the minimum of MDL tend to be very shallow and is easily passed unnoticed. We can obviously describe a partition into $k$
nonempty sets using an $n\times k$ binary matrix with all row sums
equal to one and requiring that none of the column sums equals
zero. The space of all such matrices we denote as ${\cal{R}}_k$ and
the members of this set as $R\in{\cal{R}}_k$.

\begin{definition}
For a given graph $G\in\Omega_n$ with adjacency matrix $A$ and a
partition matrix $R\in{\cal{R}}_k$, denote
\begin{equation*}
P_1(R):= R^TAR,
\end{equation*}
where $\cdot^T$ stands for matrix transpose, the column sums of $R$
are denoted as 
\begin{equation*}
n_\alpha:=(R^TR)_{\alpha,\alpha},\quad 1\leq\alpha\leq k,
\end{equation*}
and the number of links within each block and between block pairs as
\begin{equation*}
e_{\alpha,\beta}(R)= (1-\frac{1}{2}\delta_{\alpha,\beta})(P_1(R))_{\alpha,\beta},
\end{equation*}
where $\delta_{\alpha,\beta}=1$ if $\alpha=\beta$ and $\delta_{\alpha,\beta}=0$ otherwise. 
Then define
\begin{equation*}
(P(R))_{\alpha,\alpha}:=\indic{n_\alpha>1}\frac{e_{\alpha,\beta}(R)}{\binom{n_\alpha}{2}},\quad
(P(R))_{\alpha,\beta}:=\frac{e_{\alpha,\beta}(R)}{n_\alpha n_\beta},\ \alpha\not=\beta.
\end{equation*}
\end{definition}

Then the coding length of the graph corresponding to $A$ using the model $R$ is:

\begin{definition}
\label{algcodelendef}
\begin{align*} 
l_k(G(A)\mid R\in{\cal R}_k):=& \sum_{1\leq i< j\leq k}n_in_j H((P(R))_{i,j})+\sum_{1\leq i\leq k}\binom{n_i}{2}H((P(R))_{i,i}) &  
\\&+ l_k(R),
\end{align*}
where 
$$
l_k(R)= \sum_{1\leq i\leq k}n_iH(n_i/n)+ \sum_{1\leq i\leq j\leq k}l^*(e_{i,j}(R))
$$ 
is the code length of the model, according to our theory and notation.
\end{definition}

The {\bf two-part MDL program} of finding the optimal model, denoted
as $R_{k^*}$, can now be written as:
\begin{eqnarray}
\boxed{(k^*,R_{k^*})
:=\argomin_{1\leq k\leq n}\min_{R\in{\cal R}_k }l_k (G(A)\mid R\in{\cal R}_k)}
\end{eqnarray}
To solve this program approximately, we can use the following greedy algorithm.

\begin{algorithm}
{\underline {Greedy Two-part MDL}}\\
{\bf Input:} $G=G(A)\in\Omega_n$ a simple graph of size $n$.\\
{\bf Output:} $(k^*,R_{k^*}\in {\cal R}_{k^*})$, such that the two-part code for $G$ is shortest possible for all models in ${\cal M}_n$ by using this pair as a model.\\
Start: $k=1$, $l^*=\infty$, $R\in{\cal R}_1=\set{I}$, 
$k^*=1$, where $I$ is denotes the $n\times 1$  matrix with all elements equal to $1$.\\
{\bf 1. Find} 
$$
\hat{R}_k(G):=\argomin_{R\in {\cal R}_{k}} (l_k (G\mid R) 
$$
using subroutine {\bf ARGMAX k} (Algorithm \ref{argmaxk}).
\newline
{\bf 2. Compute } $l_k(G)= \lceil l_k(G\mid \hat{R}(G))\rceil+l_k (\hat{R}(G))$\\
{\bf 3.  If} $l_k(G)< l^*$  then $ l^*=l_k(G)$, $R_{k^*}=\hat{R}_k(G)$ , $k^*=k$\\
{\bf 4.} $k=k+1$\\
{\bf 5. If } $k>n$, {\bf Print} $(R_{k^*},k^*)$ and {\bf STOP} the program. \\
{\bf 6.} {\bf GoTo 1}.
\end{algorithm}

\begin{definition}
Matrices noted as $LogP(R)$ and $Log(1-P(R))$ are defined as:
\begin{align*}
(LogP(R))_{\alpha,\beta}&:=\log(P(R)_{\alpha,\beta}),\\ 
(Log(1-P(R)))_{\alpha,\beta}&:=\log(1-P(R)_{\alpha,\beta}),\\
 L(R)&:= -ARLogP(R)-(1-A) R Log(1-P(R)),
\end{align*}
where we set $\log0=0$ (all $\log 0$ values will be later multiplied by 0),
\begin{equation*}
\beta(i,R):= \inf\{\beta:\beta=\argomin_{1\leq\alpha\leq k}(L(R))_{i,\alpha}\}, 1\leq i\leq n.
\end{equation*}
A mapping $\Phi:{\cal{R}}_k\rightarrow {\cal{R}}_k$ is defined as follows:
$$
\Phi(R)_{i,\alpha}= \delta_{\alpha,\beta(i,R)}.
$$
\end{definition}

The mapping $\Phi(R)$ moves each node to a possibly different block
in such a way that the description length would be minimized if all other nodes
stay in their current blocks. A Python code with generating synthetic binary graphs is given in \cite{reittugithub}.
\begin{algorithm}
\label{argmaxk}  
\underline{ARGMAX k}
\newline 
Algorithm for finding optimal regular decomposition for fixed $k$\\
{\bf Input}: $A$: the adjacency matrix of a graph (an $n\times n$ symmetric binary matrice with zero trace); $N$: an integer (the number of iterations in the search of a global optimum); 
$k$: a positive integer.\\
Start: $m=1$.\\
{\bf 1.}  $i:=0$; generate a uniformly random element $R_i\in\mathcal{R}_k$.\\
{\bf 2.}  If at least one of the column sums of $R_i$ is zero, {\bf GoTo 1}. Otherwise, set
$$
R_{i+1}:=\Phi(R_i).
$$
{\bf 3.} {\bf If}  $R_{i+1}\neq R_i$, set $i:=i+1$ and {\bf GoTo 2}.\\
{\bf 4.} $R(m):=R_{i}$; $m=m+1$; 
$l(m):= \sum_{i=1}^n \min_{1\leq\alpha\leq k}(L(R(m)))_{i,\alpha}$.\\
{\bf 5.} If $m<N$, {\bf GoTo 1}. \\
{\bf 6.} $M:=\{m: l(m)\leq l(i),i=1,2,...,N\}$; $m^*:=\inf M$. \\
{\bf Output} optimal solution: $R(m^*)$.
\end{algorithm}

For very large graphs, the program may not be solvable in the sense
that it is not possible and reasonable to go through all possible
values of $k\in\{1,2,\cdots, n\}$.  One option is to limit the range
of $k$. In case that no minimum is found, then use as an optimal
choice the model found for the largest $k$ within this range. Another
option is to find the first minimum with smallest $k$ and stop. When
the graph is extremely large, it makes sense to use only a randomly
sampled sub-graph as an input --- indeed, when $k^*<<n$, a
large-scale structure can be estimated from a sample \cite{reittuetalljournal}.

Our algorithm for Poissonian block models and matrices is essentially
similar, with certain differences in formulae as detailed
below. Algorithms for other cases like directed graphs and
non-quadratic matrices are written very similarly, although two
partitions are needed, one for rows and one for columns. The logic of
the solution remains the same however.

A semi-heuristic two-part MDL algorithm for finding a regular
decomposition for an $n\times m$ matrix $A$ with non-negative entries
works as follows. The decomposition takes a form of a
bi-clustering: there are two partitions, one for rows and one for
columns. Such partitions are described by binary matrices with row
sums equal to one. The two-part MDL program to find an optimal regular
decomposition is written as follows:

The row partition-matrices are denoted as $R\in{\cal R}_{k_1}$ with
dimensions $n\times k_1$, $1\leq k_1\leq n$, and the column partition
matrices as $C\in{\cal C}_{k_2}$ with dimensions $m\times k_2$, $1\leq
k_2\leq m$.

Let us formulate the cost function for the matrix case that is derived from Eq (9). The number of
matrix elements in row group $\alpha$ and column group $\beta$ can be
written as a matrix element:
$$
(N)_{\alpha, \beta}= (R^TR)_{\alpha,\alpha}(C^TC)_{\beta,\beta}:= n_\alpha m_\beta.
$$ 
Assuming that all blocks are non-empty, all $N_{\alpha,\beta}>0$,
we can define an average matrix element of block $\alpha,
\beta$. First compute the sum of all matrix elements of $A$ over such
a block:
$$
e_{\alpha,\beta}= (R^TAC)_{\alpha,\beta}.
$$
The corresponding block averages form a $k_1\times k_2$ $P$-matrix with elements 
$$
(P)_{\alpha,\beta}= \frac{e_{\alpha,\beta}}{(N)_{\alpha,\beta}}.
$$
The coding length of the matrix $A$ using a two-part MDL code with partitions $(R,C)$ can be written as
\begin{align*}
&l_{k_1,k_2}(G(A)\mid R\in{\cal R}_{k_1},C\in{\cal C}_{k_2}))\\
&=\sum_{1\leq \alpha\leq k_1,1\leq \beta\leq k_2}\big\{e_{\alpha,\beta}(1-\log( (P)_{\alpha,\beta})+l^* ([e_{\alpha,\beta}])\big\}\\
&\quad+ \sum_{1\leq \alpha\leq k_1}n_\alpha H\big(\frac{n_\alpha}{n}\big)+\sum_{1\leq \beta\leq k_2}m_\beta H\big(\frac{m_\beta}{m}\big)+ k_1 \times k_2 c.
\end{align*}
Here we assume a similar handling of $\log 0$'s as in the binary
case. $n_\alpha$ and $m_\beta$ are the sizes of row and
column blocks, and $[e_{\alpha,\beta}]$ denotes the integer part of
$e_{\alpha,\beta}$; it is assumed that such block sums are large
numbers with finite decimal precision $c>0$. The description length of
such decimals is the last term and it is small compared with other terms for
large matrices and can be safely ignored. Similarly to the binary
case, the two-part MDL program is defined as:
\begin{definition}
\label{MDL matrix}  
\begin{equation}
  \begin{split}
&(k_1^*, k_2^*,R,C)\\
&:=\argomin_{( k_1, k_2)}\left(\min_{R\in{\cal R}_{k_1} ,C\in{\cal C}_{k_2}}l_{k_1,k_2}(G(A)\mid R\in{\cal R}_{k_1},C\in{\cal C}_{k_2})\right)  
  \end{split}
\end{equation}
where
$$
1\leq k_1\leq n,\quad 1\leq k_2\leq m.
$$
\end{definition}
The greedy algorithm for solving this program is very similar to the
case of a binary matrix. The difference is that two parametric sequences
of partitions must be searched, ${\cal R}_i$ and ${\cal C}_j$, and in
the subroutine that finds the optimal partitions for fixed $i$ and
$j$. One may consider different strategies in the corresponding search
of an optimal pair. For instance, moving first along the diagonal $i=j$,
and finding the value where the cost function (coding length of $A$) has a
knee-point, and after that make an off-diagonal search near that value.
Another option could be moving along the steepest descent direction of the
cost function, or alternating the directions of increments in $i$ and $j$,
until a saturation is reached in one direction, and then keeping that
parameter fixed and finding the optimum on the second parameter. This question
is a subject to further experimenting with real and artificial
data. Therefore we write only the subroutine that finds the optimal
partitioning with fixed $k_1$ and $k_2$ in a greedy fashion. First we
need

\begin{definition}
Define the mappings $\Phi_R:{\cal{R}}_{k_1}\times {\cal
  C}_{k_2}\rightarrow {\cal{R}}_{k_1}$ and
$\Phi_C:{\cal{R}}_{k_1}\times {\cal C}_{k_2}\rightarrow
{\cal{C}}_{k_2}$ as follows. Let $E$ be an $n\times m$
matrix with all elements equal to $1$. Then, using the definition of
$P$-matrix and $LogP$-matrix (related to $P$ as in binary case),
define two matrices using a block-matrix notation:
$$
L(R,C)=
\begin{pmatrix}
E & A 
\end{pmatrix}
\begin{pmatrix}
 CP^T\\-C(LogP)^T
\end{pmatrix}
$$
and
$$
M(R,C)=
\begin{pmatrix}
E^T & A^T 
\end{pmatrix}
\begin{pmatrix}
 RP\\-RLogP
\end{pmatrix}
.
$$
Define
$$
\beta_1(i,R,C)= \inf\{\beta:\beta=\argomin_{1\leq\alpha\leq k_1}(L(R,C))_{i,\alpha}\},\quad 1\leq i\leq n,
$$
and 
$$
\beta_2(i,R,C)= \inf\{\beta:\beta=\argomin_{1\leq\alpha\leq k_2}(M(R,C))_{i,\alpha}\},\quad 1\leq i\leq m.
$$
Then,
$$
\Phi_R(R,C)_{i,\alpha}= \delta_{\alpha,\beta_1(i,R,C)},\quad 1\leq\alpha\leq k_1,\quad 1\leq i\leq n,
$$
and 
$$
\Phi_C(R,C)_{i,\alpha}= \delta_{\alpha,\beta_2(i,R,C)},\quad 1\leq\alpha\leq k_2,\quad 1\leq i\leq m.
$$
\end{definition}
The main greedy subroutine is:
\begin{algorithm}
\label{argmax k1k2}  
\underline{ARGMAX $(k_1,k_2)$}
\newline 
Algorithm for finding optimal regular decomposition for fixed $(k_1,k_2)$.\\
{\bf Input}: $A$: a real $n\times m$ matrix with non-negative entries; $N$: a positive integer (the number of iterations in the search of a global optimum); $(k_1,k_2)$: a pair of positive integers.\\
Start: $m=1$.\\
{\bf 1.}  $i=0$; generate uniformly random elements $R_i\in{\cal R}_{k_1}$. and $C_i\in{\cal C}_{k_2}$.\\
{\bf 2.}  If at least one of column sums of $R_i$ or $C_i$ is zero, {\bf GoTo 1}. Otherwise, set
$$
R_{i+1}:=\Phi_R(R_i,C_i),\quad C_{i+1}:=\Phi_C(R_i,C_i).
$$
{\bf 3.} {\bf If}  $R_{i+1}\neq R_i$ or $C_{i+1}\neq C_i$, set $i:=i+1$ and {\bf GoTo 2}.\\
{\bf 4.} $R(m):= R_{i}$; $C(m):=C_i$; $m:=m+1$;
 \begin{align*}
e_{\alpha,\beta}&:= \left(T^T(m)AC(m)\right)_{\alpha,\beta},\quad &1\leq\alpha\leq k_1,\quad 1\leq\beta\leq k_2;\\ 
N_{\alpha,\beta}&:=\left(R^T(m)R(m)\right)_{\alpha,\alpha}\left(C^T(m)C\right)_{\beta,\beta},
\quad &1\leq\alpha\leq k_1,\quad 1\leq\beta\leq k_2;\\
l(m)&:= \sum_{1\leq\alpha\leq k_1, 1\leq\beta\leq k_2} e_{\alpha,\beta}\left(1-\log\frac{e_{\alpha,\beta}}{N_{\alpha,\beta}}\right).
\end{align*}
{\bf 5.} If $m<N$, {\bf GoTo 1}. \\
{\bf 6.} $M:=\{m: l(m)\leq l(i),i=1,2,...,N\}$; $m^*:=\inf M$. \\
{\bf OUTPUT} optimal solution: $(R(m^*),C(m^*))$.
\end{algorithm}

\begin{remark}
It is also possible to find regular decompositions in the case of
partly missing matrix elements \cite{reittuetall, reittuetalljournal}and also in the case of mixed positive
and negative entries. In the latter case, we can use the idea of
directed links already used in\cite{tusnady}. In the first case, we
note that the main characteristic of the regular decomposition is the
$P$-matrix with elements that are averages of the data matrix over
large blocks that can be estimated, in many cases, despite a
portion of data is missing.
\end{remark}
  \subsection{Simulations}

To illustrate MDL based RD, we run several computer experiments. The purpose was to verify whether the right numbers of blocks could be found.

We used a simple stochastic block model for this purpose. There were $k=5$ blocks and each node was placed in a block uniformly at random. The number of nodes was $1000$. Link probabilities in blocks and between them were uniformly random reals in range $[0,1]$. After the probabilities were generated they were multiplied by a constant factor $p$. These factors had values $0.1, 0.3, 0.5, 0.7,0.9$ and experiments were run for each value. The aim of the factor was to generate graphs with different densities.

For each set of parameters, experiments were repeated $8$ times to get diversity of samples. RD algorithm was run in one node of a high power computation cluster at VTT. The whole experiment took several days to complete. 

The result in most cases is that the right number of blocks can be identified. The results are presented graphically in the series of figures in Figure \ref{simresufig}. The cost function is the smallest found - log-likelihood + the complexity term $k\log n+ .5k(k+1)\log((n-k)(n-k+1))$. The first corresponds to the length of a code that encodes the partition and the last one to the coding length of the  number of links between and inside the block.  The cost function shows a characteristic kink at value $k=5$. After this value of $k$ is exceeded the cost function remains almost flat. 

\begin{figure}[htb]
\begin{center}
\begin{tabular}{cc}
\includegraphics[width=58mm]{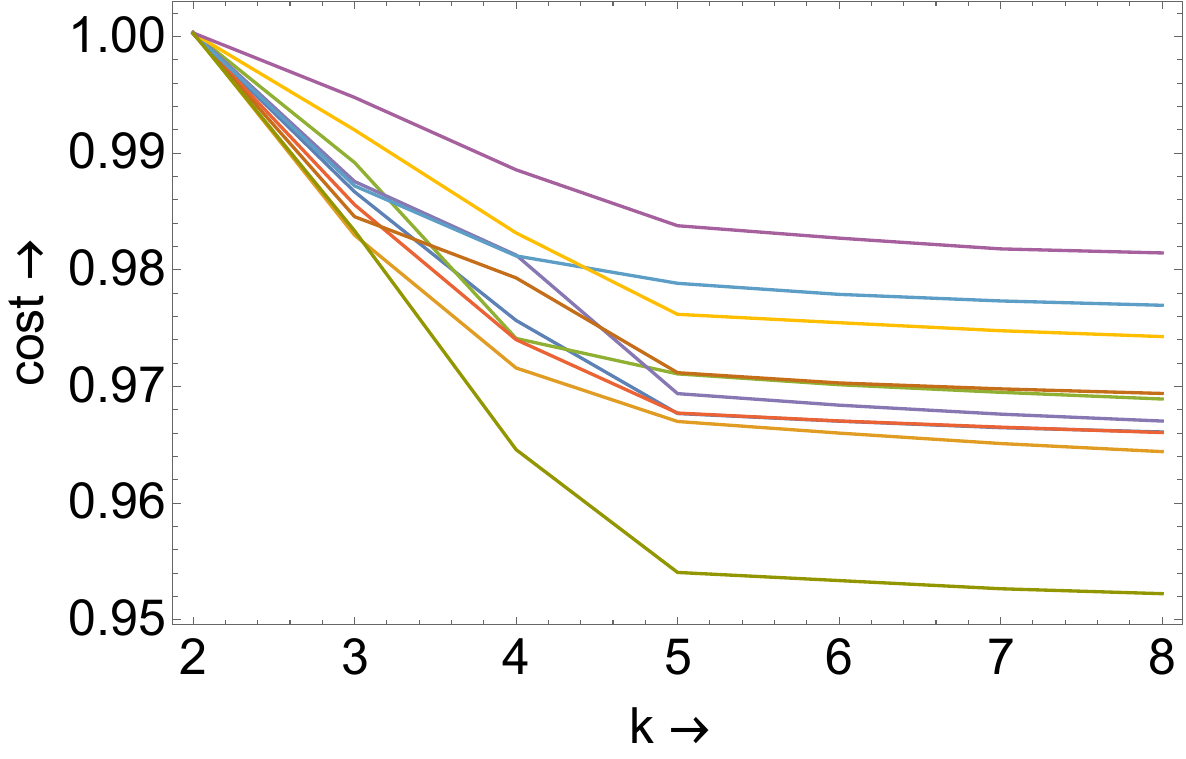}&\includegraphics[width=58mm]{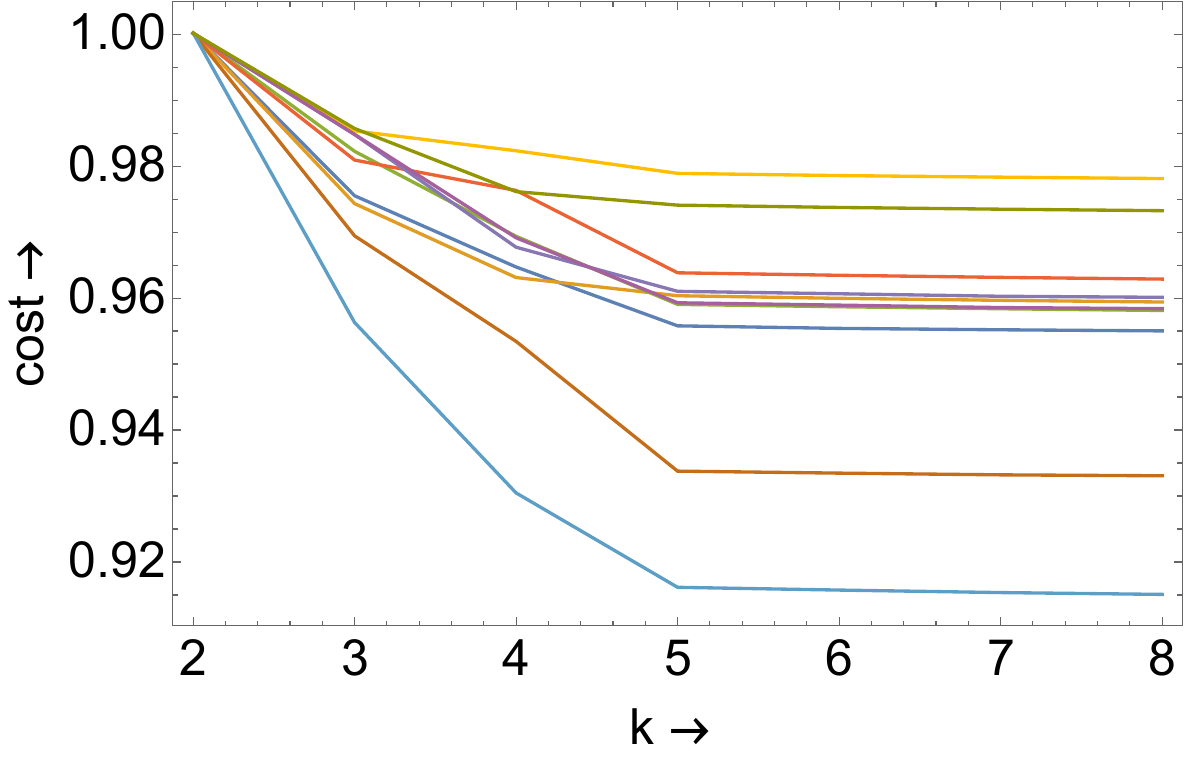}\\
\newline
\includegraphics[width=58mm]{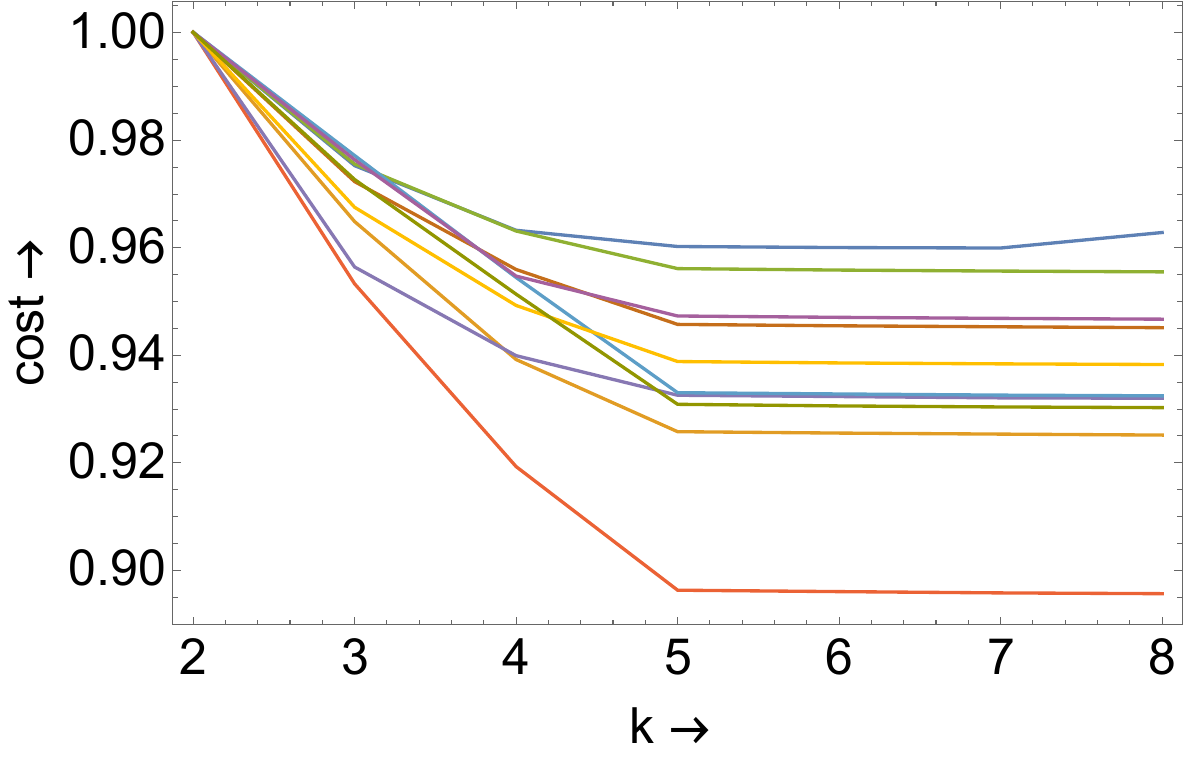}&\includegraphics[width=58mm]{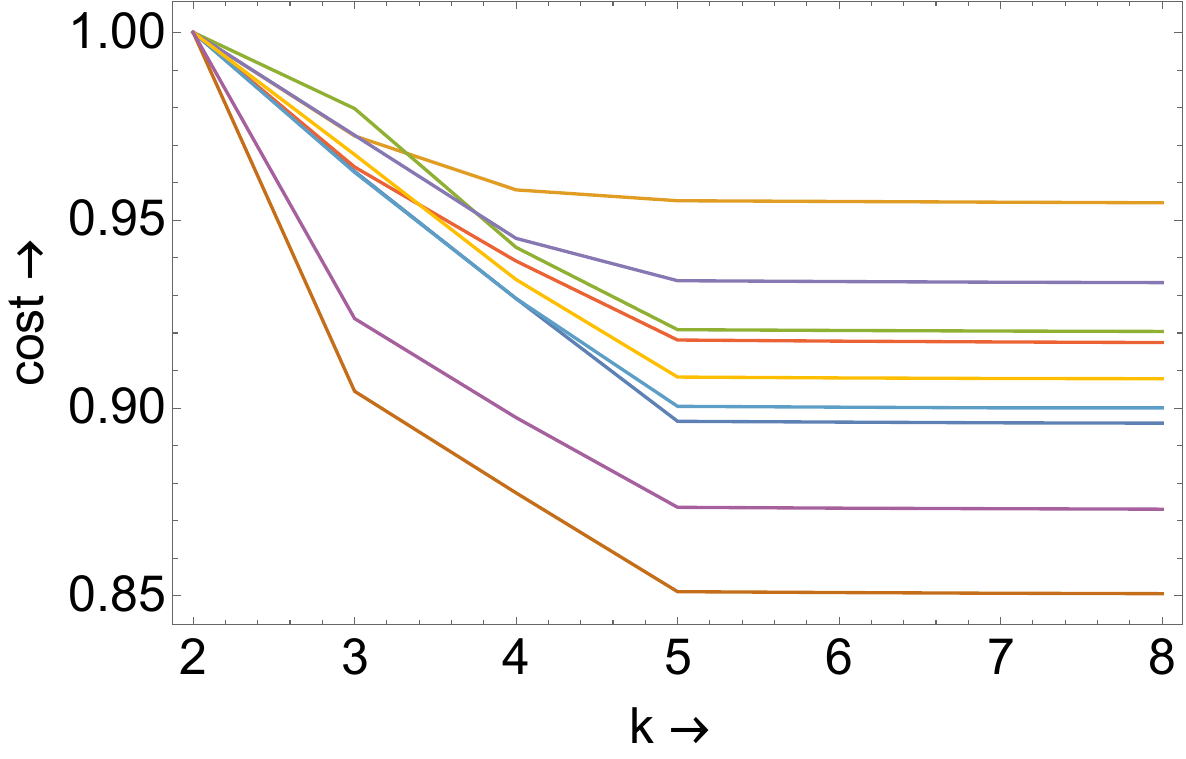}\\
\includegraphics[width=58mm]{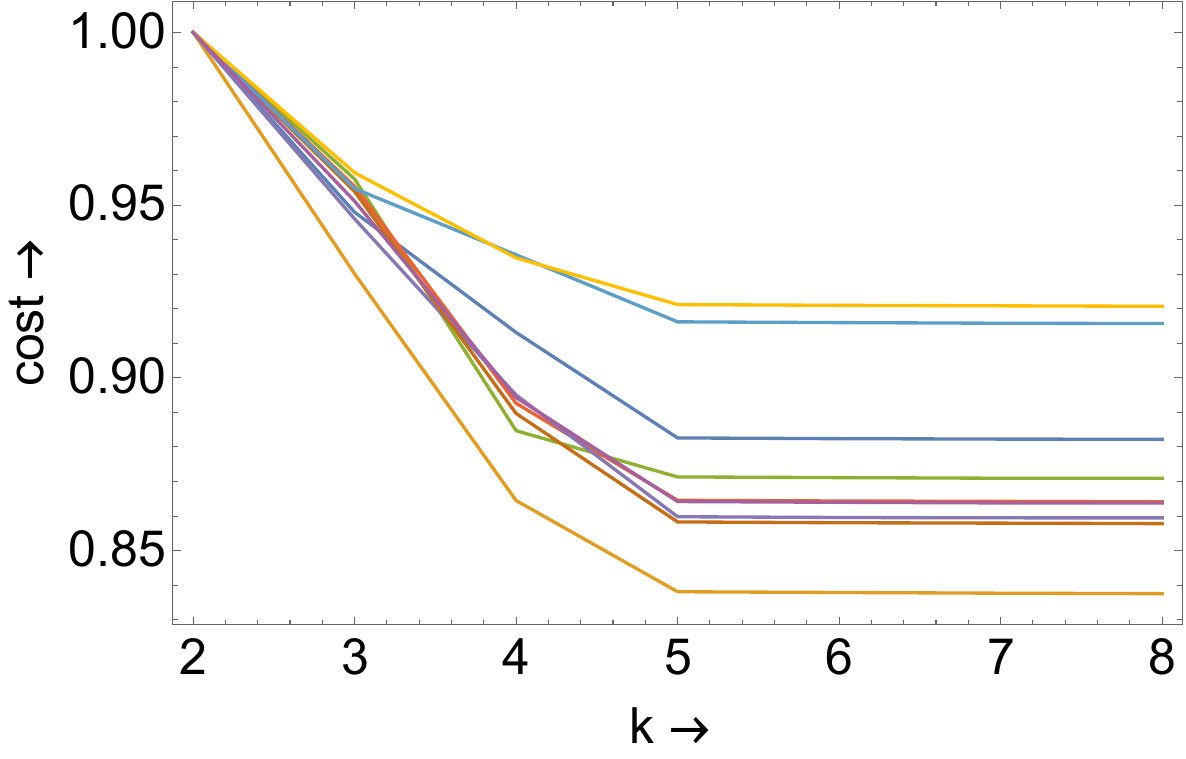}
\end{tabular}
\end{center}
\caption{Normalized cost functions as a function of number of blocks $k$. From top left to down expected density of the graph grows according to factor $p$ in each figure. Each line corresponds to an independent repetition of experiments. In most cases $k=5$ is clearly identified as a value after which the function becomes almost flat. }
\label{simresufig}
\end{figure}

The main observation is that RD must be done carefully, meaning that the greedy algorithm must be rerun many times in order to find a good approximation of the global optimum. In our case $2000$ runs seems to work.

In \cite{reittuetalljournal} we showed that RD can work on very large graphs generated from a SBM, provided $k$ is a known constant and the graph is dense with fixed link probabilities independent on graph size. When MDL is used in RD, it is very likely that the right structure can be found from samples of a very large graph without knowing the right $k$.

SBM is called sparse when link probabilities tend to $0$ as $n\rightarrow \infty$. In this case we suggested \cite{reittudistance} to use RD to analyze the graph distance matrix instead of the adjacency matrix. In cases when the graph distances can be estimated, this could be used to find block structures of very large graphs using a similar sampling approach as in the dense graph case.  
\appendix

\section{Chernoff bounds and other information-theoretic preliminaries}\label{chernoffsec}

Consider a random variable $X$ with moment generating function
\begin{equation*}
\phi_X(\beta)=\E{e^{\beta X}},
\end{equation*}
and denote $\mathcal{D}_X=\eset{\beta}{\phi_X(\beta)<\infty}$. We
restrict to distributions of $X$ for which $\mathcal{D}_X$ is an open
(finite or infinite) interval. The corresponding {\em rate function}
is
\begin{equation*}
I_X(x)=-\inf_{\beta\in\mathcal{D}_X}(\log\phi_X(\beta)-\beta x).
\end{equation*}
$I_X(x)$ is a strictly convex function with minimum 0 at $\E{X}$ and
value $+\infty$ outside the range of $X$. For the mean $\bar{X}_n=\frac{1}{n}\sum_1^nX_i$ of
  i.i.d.\ copies of $X$, we have
\begin{equation}
\label{samplerate}
I_{\bar{X}_n}(x)=nI_X(x).
\end{equation}

The Chernoff bound (also known as Cram\'er-Lundberg bound):

\begin{proposition}
\label{chernoffprop}
\begin{align*}
  \pr{X<x}&\le e^{-I_X(x)}\quad\mbox{for }x<\E{X},\\
  \pr{X>x}&\ge e^{-I_X(x)}\quad\mbox{for }x>\E{X}.
\end{align*}
\end{proposition}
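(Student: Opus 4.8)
The plan is to treat the two inequalities separately, since despite the common ``Chernoff'' heading they are of fundamentally different character: the first is an \emph{upper} bound on the lower tail, obtainable from the exponential Markov inequality, whereas the second, as written, is a \emph{lower} bound on the upper tail, which belongs to the Cram\'er large-deviations lower bound and requires an exponential change of measure (tilting). I would dispose of the easy half first. For $x<\E{X}$ and any $\beta<0$ with $\beta\in\mathcal{D}_X$, Markov's inequality applied to $e^{\beta X}$ gives $\pr{X<x}=\pr{e^{\beta X}>e^{\beta x}}\le e^{-\beta x}\phi_X(\beta)$; minimizing over $\beta<0$ and noting that, for $x<\E{X}$, the optimizer in the Legendre transform defining $I_X$ lies on the negative half-line (and is interior to $\mathcal{D}_X$ by the smoothness assumption on $I_X$), one obtains $\pr{X<x}\le e^{-I_X(x)}$.

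For the upper-tail claim I would set up the tilting argument. Let $\beta^*>0$ be the maximizer of $\beta x-\log\phi_X(\beta)$, so that $I_X(x)=\beta^* x-\log\phi_X(\beta^*)$ and the tilted law $\D\widetilde P=e^{\beta^* X}\phi_X(\beta^*)^{-1}\,\D P$ has mean exactly $x$. Writing the upper-tail probability as a tilted expectation and keeping only the contribution of a thin slab above $x$, for any $\delta>0$,
\begin{equation*}
\pr{X>x}=\widetilde E\left[e^{-\beta^* X}\phi_X(\beta^*)\,\indic{X>x}\right]\ge e^{-\beta^*(x+\delta)}\phi_X(\beta^*)\,\widetilde P(x<X<x+\delta),
\end{equation*}
where on $\set{x<X<x+\delta}$ the Radon--Nikodym factor $e^{-\beta^* X}\phi_X(\beta^*)$ is bounded below by $e^{-\beta^*(x+\delta)}\phi_X(\beta^*)$. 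Since $\phi_X(\beta^*)e^{-\beta^* x}=e^{-I_X(x)}$, this rearranges to $\pr{X>x}\ge e^{-I_X(x)}\,e^{-\beta^*\delta}\,\widetilde P(x<X<x+\delta)$.

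The main obstacle is that this last prefactor $e^{-\beta^*\delta}\widetilde P(x<X<x+\delta)$ is strictly less than $1$, and shrinking $\delta$ only drives $\widetilde P(x<X<x+\delta)$ to $0$ as well, so the tilting route delivers only $\pr{X>x}\ge(\text{const}<1)\,e^{-I_X(x)}$, which is \emph{weaker} than the displayed inequality. I expect this deficit to be irreparable, because the clean bound as worded is simply false: for $X$ standard normal one has $I_X(x)=x^2/2$ while $\pr{X>x}\sim(x\sqrt{2\pi})^{-1}e^{-x^2/2}<e^{-x^2/2}$ for large $x$. The honest outcome of the computation is therefore that the $\ge$ survives only in asymptotic form, namely the Cram\'er lower bound $n^{-1}\log\pr{\bar X_n>x}\to-I_X(x)$ for the empirical mean (to which the slab estimate above does converge once one replaces $X$ by $\bar X_n$, uses \eqref{samplerate}, and invokes the law of large numbers under $\widetilde P$ so that $\widetilde P(x<\bar X_n<x+\delta)$ stays bounded away from $0$). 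Consequently I would flag the $\ge$ in the second line as a sign slip and, matching the way the result is actually invoked in Propositions \ref{refinementprop} and \ref{bigblockprop} where tail probabilities are shown to vanish, supply the companion \emph{upper} bound $\pr{X>x}\le e^{-I_X(x)}$, which follows verbatim from the Markov argument of the first half with $\beta>0$.
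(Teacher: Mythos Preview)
Your analysis is correct, and in fact the paper offers no proof of this proposition at all: it is stated as the well-known Chernoff (Cram\'er--Lundberg) bound and then immediately used. So there is nothing to compare your argument against on the paper's side.

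More importantly, you are right that the second line as printed contains a typo. The paper's own applications confirm this: in Lemma~\ref{probofinfolemma} the proposition is invoked to bound both $\pr{X<I^{-(-1)}_X(z)}$ and $\pr{X>I^{+(-1)}_X(z)}$ from \emph{above} by $e^{-z}$, and in Lemma~\ref{binmaxlemma} and Proposition~\ref{refinementprop} it is again used as an upper tail bound. Your Gaussian counterexample disposes of the literal $\ge$ cleanly. Your proof of the intended pair of upper bounds via Markov's inequality applied to $e^{\beta X}$ with the sign of $\beta$ matched to the tail, followed by optimization over $\beta$, is exactly the standard argument and is what the paper is implicitly citing. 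The tilting digression toward a Cram\'er lower bound is unnecessary here, but it does no harm and your conclusion---flag the sign and supply $\pr{X>x}\le e^{-I_X(x)}$---is the right one.
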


This has the following simple consequence that plays an important role
below. Let the convex hull of the support of $X$ be the closure of
$(x^-,x^+)$, and denote
\begin{equation*}
  a^-:=\lim_{x\downarrow x^-}I_X(x)=-\log\pr{X=x^-},\quad
  a^+:=\lim_{x\uparrow x^+}I_X(x)=-\log\pr{X=x^+}
\end{equation*}
($a^⁻<\infty$ if and only if the distribution of $X$ has an atom at
$x^-$, similarly for $a^+$). We can now write
\begin{equation*}
  \begin{split}
I_X(x)&=I^-_X(x)\indic{(x^-,\E{X}]}(x)+I^+_X(x)\indic{[\E{X},x^+)}(x)\\
    &\quad+a^-\cdot\indic{x^-}(x)+a^+\cdot\indic{x^+}(x)
    +\infty\cdot\indic{\Re\setminus[x^-,x^+]}(x).
      \end{split}
\end{equation*}
With the assumptions made above, the functions $I^-_X(x)$ and
$I^+_X(x)$ are, respectively, bijections from $(x^-,\E{X}]$ and
$[\E{X},x^+)$ to $[0,a^-)$ and $[0,a^+)$. 

\begin{lem}
\label{probofinfolemma}
\begin{align*}
 I_X(X)\buildrel{(st)}\over\le \log 2 +Y,
\end{align*}
where $\buildrel{(st)}\over\le$ denotes stochastic order and $Y$ is a
random variable with distribution {\it Exp}$(1)$.
\end{lem}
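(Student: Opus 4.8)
The statement asserts stochastic domination, which unwinds to the inequality $\pr{I_X(X)>z}\le\pr{\log 2+Y>z}$ for every $z\ge 0$. Since $Y\sim\textit{Exp}(1)$ is non-negative, the right-hand side equals $1$ whenever $z\le\log 2$ and equals $2e^{-z}$ whenever $z>\log 2$. The plan is therefore to verify two regimes. For $z\le\log 2$ the bound is automatic, because any probability is at most $1$. The whole content of the lemma lies in the regime $z>\log 2$, where I must show
\begin{equation*}
\pr{I_X(X)>z}\le 2e^{-z}.
\end{equation*}

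For this I would exploit the two-sided structure of the rate function recorded just before the lemma: under the standing assumptions, $I_X$ splits into the strictly decreasing branch $I^-_X$ on $(x^-,\E{X}]$ and the strictly increasing branch $I^+_X$ on $[\E{X},x^+)$, each a bijection onto $[0,\infty)$. Hence for every $z\ge 0$ there are unique points $x_-(z)\le\E{X}\le x_+(z)$ with $I_X(x_-(z))=I_X(x_+(z))=z$, and by monotonicity the event $\set{I_X(X)>z}$ is exactly the disjoint union $\set{X<x_-(z)}\cup\set{X>x_+(z)}$.

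The two pieces are then controlled by the Chernoff bound of Proposition \ref{chernoffprop}. Since $x_-(z)<\E{X}$, the lower-tail bound gives $\pr{X<x_-(z)}\le e^{-I_X(x_-(z))}=e^{-z}$; since $x_+(z)>\E{X}$, the upper-tail bound gives $\pr{X>x_+(z)}\le e^{-I_X(x_+(z))}=e^{-z}$. Adding the two estimates yields $\pr{I_X(X)>z}\le 2e^{-z}$, as required, and the proof is complete.

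I expect no deep obstacle here; the only points demanding care are purely structural. First, one must invoke the regularity hypotheses on the distribution of $X$ to guarantee that both branches of $I_X$ are genuine bijections onto $[0,\infty)$, so that the threshold points $x_\pm(z)$ exist for all relevant $z$ and the event decomposition is valid with no leftover mass at $X=\E{X}$. Second, one should make sure the directions of the two Chernoff inequalities match the sides on which $x_-(z)$ and $x_+(z)$ lie. The appearance of the constant $\log 2$ is precisely the price of this two-sided union bound, since each of the two tails contributes a factor $e^{-z}$.
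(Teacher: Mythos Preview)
Your proposal is correct and follows essentially the same route as the paper: decompose the event $\{I_X(X)>z\}$ into the two tails $\{X<I_X^{-(-1)}(z)\}$ and $\{X>I_X^{+(-1)}(z)\}$ using the bijectivity of the two branches, apply the Chernoff bound to each to get $e^{-z}+e^{-z}=2e^{-z}$, and identify this with the survival function of $\log 2+Y$. Your explicit separation of the trivial regime $z\le\log 2$ is in fact slightly cleaner than the paper's presentation, which tacitly relies on $\pr{I_X(X)>z}\le 1$ to pass from $2e^{-z}$ to $e^{-(z-\log 2)^+}$.
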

\begin{proof}
For any $z\ge0$, we have
\begin{align*}
\pr{I_X(X)>z}&=\pr{\indic{X<\E{X}}I_X(X)>z\mbox{ or }\indic{X>\E{X}}I_X(X)>z}\\
&=\pr{\indic{X<\E{X}}I^-_X(X)>z}+\pr{\indic{X>\E{X}}I^+_X(X)>z}\\
&=\indic{z<a^-}\pr{X<I^{-(-1)}_X(z)}+\indic{z<a^+}\pr{X>I^{+(-1)}_X(z)}\\
&\le\indic{z<a^-}\exp(-I_X(I^{-(-1)}_X(z)))+\indic{z<a^+}\exp(-I_X(I^{+(-1)}_X(z)))\\
&\le2e^{-z}\\
&=e^{-(z-\log 2)},
\end{align*}
where the first inequality comes from Proposition
\ref{chernoffprop}.
Thus,
\begin{equation*}
  \pr{I_X(X)>z}\le\min\left\{1,\,e^{-(z-\log 2)}\right\}=e^{-(z-\log 2)^+}
  =\pr{\log 2+Y>z}.
\end{equation*}
\end{proof}

In the case that $X$ has the {\it Bernoulli}$(p)$ distribution,
we have 
\begin{equation}
\label{Ixpdef}
I_X(x)=D_B(x\|p):=x\log\frac{x}{p}+(1-x)\log\frac{1-x}{1-p}.
\end{equation}

\begin{lem}
\label{infodifflem}
The first and second derivatives of the functions $H(x)$ and $x\mapsto D_B(x\|p)$ are
\begin{align}
\label{infodiffH}
H'(x)&=\log\frac{1-x}{x},&\quad
H''(x)&=-\frac{1}{x(1-x)},\\
\label{infodiffI}
D_B'(x\|p)&=H'(p)-H'(x),&\quad
D_B''(x\|p)&=\frac{1}{x(1-x)}.
\end{align}
Since $D_B(p\|p)=D_B'(p\|p)=0$ and $D_B''(p\|p)=-H''(p)$, we also have 
\begin{equation}
\label{HI}
H(q)-(H(p)+H'(p)(q-p))=-D_B(q\|p),
\end{equation}
and
\begin{align*}
&\lim_{n\to\infty}n\left[H\left((1-\frac{1}{n})p+\frac{1}{n}q\right)
-\left((1-\frac{1}{n})H(p)+\frac{1}{n}H(q)\right)\right]\\
&=(q-p)H'(p)-(H(q)-H(p))\\
&=D_B(q\|p).
\end{align*}
\end{lem}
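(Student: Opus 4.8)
The plan is to treat the lemma as four elementary differentiations followed by two consequences, all resting on one structural observation: writing $\ell(x):=-x\log p-(1-x)\log(1-p)$, the divergence $I(x:p)$ from \eqref{Ixpdef} decomposes as $I(x:p)=\ell(x)-H(x)$ with $\ell$ \emph{affine} in $x$. First I would differentiate $H(x)=-x\log x-(1-x)\log(1-x)$ directly: the product rule produces $-\log x-1$ and $+\log(1-x)+1$, the constants cancel, and $H'(x)=\log\frac{1-x}{x}$; a second differentiation gives $H''(x)=-\frac1x-\frac{1}{1-x}=-\frac{1}{x(1-x)}$, which is \eqref{infodiffH}. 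Formula \eqref{infodiffI} then follows from the decomposition: since $\ell'(x)=\log\frac{1-p}{p}=H'(p)$ is constant in $x$, one gets $I'(x:p)=H'(p)-H'(x)$, and differentiating once more $I''(x:p)=-H''(x)=\frac{1}{x(1-x)}$.

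Next, the identity \eqref{HI} falls out of the same decomposition with essentially no further work. The affine function $\ell$ is pinned down by its value and slope at the point $p$, and one checks $\ell(p)=-p\log p-(1-p)\log(1-p)=H(p)$ and $\ell'(p)=H'(p)$, so that $\ell(x)=H(p)+H'(p)(x-p)$ is precisely the tangent line to $H$ at $p$. Substituting into $I(x:p)=\ell(x)-H(x)$ identifies $I(q:p)$ with the gap between $H$ and its tangent at $p$; equivalently, $I(\cdot:p)$ is the Bregman divergence generated by the convex function $-H$, and the nonnegativity of $I$ simply mirrors the concavity of $H$.

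Finally, for the limiting statement I would Taylor-expand $H$ to first order about the interior point $y$. Writing $z_n=(1-\tfrac1n)y+\tfrac1n x=y+\tfrac1n(x-y)$, the bracketed quantity equals $H(z_n)-H(y)-\tfrac1n\bigl(H(x)-H(y)\bigr)$, and $H(z_n)-H(y)=\tfrac1n H'(y)(x-y)+O(1/n^2)$; multiplying by $n$ and letting $n\to\infty$ leaves $(x-y)H'(y)-(H(x)-H(y))$, which is $I(x:y)$ by \eqref{HI}. The only delicate point — and the closest thing to an obstacle in an otherwise routine argument — is controlling this Taylor remainder at the boundary: the expansion needs $H$ of class $C^2$ near $y$, so the statement is read for $y\in(0,1)$, while $x$ may lie on the boundary provided the convention $0\log0=0$ keeps $H(x)$ finite, and one notes that $n\cdot O(1/n^2)\to0$.
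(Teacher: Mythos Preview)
Your argument is correct, and the paper itself offers no proof of this lemma (it is stated as an elementary fact), so there is nothing to compare your route against: direct differentiation of $H$ together with the affine/Bregman decomposition $I(x:p)=\ell(x)-H(x)$ is exactly the natural way to handle all four formulae and the two corollaries.

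One point worth making explicit, though. Your own computation gives $I(q:p)=\ell(q)-H(q)=H(p)+H'(p)(q-p)-H(q)$, i.e.
\[
H(q)-\bigl(H(p)+H'(p)(q-p)\bigr)=-I(q:p),
\]
whereas the displayed identity \eqref{HI} in the paper omits the minus sign. This is a typo in the paper, not an error in your reasoning: with $H$ concave the left side is $\le0$ while $I(q:p)\ge0$, and differentiating both sides in $q$ using the lemma's own formula $I'(q:p)=H'(p)-H'(q)$ shows the two sides have opposite derivatives. The final limiting identity is unaffected and comes out correctly, since $(x-y)H'(y)-(H(x)-H(y))=-\bigl[H(x)-H(y)-H'(y)(x-y)\bigr]=I(x:y)$ by the corrected version. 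You should flag this sign discrepancy rather than silently ``prove'' the formula as printed.
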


\begin{proposition}
\label{binkulidentprop}
Let $n\ge2$ and let $X_1$ and $X_2$ be independent random variables
with distributions {\it Bin}$(m,p)$ and {\it Bin}$(n-m,p)$,
respectively. Denote $X_{12}=X_1+X_2$ and $\bar{X}_1=X_1/m$,
$\bar{X}_2=X_1/(n-m)$, $\bar{X}_{12}=X_{12}/n$. Then the following
identities hold:
\begin{align}
\label{binkulidentity0}
&mD_B(\bar{X}_1\|p)+(n-m)D_B(\bar{X}_2\|p)-nD_B(\bar{X}_{12}\|p)\\
\label{binkulidentity1}
&=X_{12}D_B\left(\left.\frac{X_1}{X_{12}}\right\|\frac{m}{n}\right)
  +(n-X_{12})D_B\left(\left.\frac{m-X_1}{n-X_{12}}\right\|\frac{m}{n}\right)\\
\label{binkulidentity2}
&=mD_B(\bar{X}_1\|\bar{X}_{12})
  +(n-m)D_B\left(\left.\frac{X_{12}-X_1}{n-m}\right\|\bar{X}_{12}\right).
\end{align}
\end{proposition}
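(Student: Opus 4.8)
The plan is to prove all three lines equal by reducing each to one and the same symmetric quantity, after observing that the reference-distribution terms either cancel or reassemble into boundary terms. The computation is purely algebraic; no probabilistic input is needed beyond the definition \eqref{Ixpdef}. (I read the typeset $\bar X_2=X_1/(n-m)$ as $\bar X_2=X_2/(n-m)$ and the $X_n$ in \eqref{binkulidentity1} as $X_{12}$; with any other reading the claimed identities fail.) The one tool I would set up first is the elementary rewriting that follows directly from \eqref{Ixpdef}: for any $q\in[0,1]$ and reference $r\in(0,1)$,
\[
I(q:r)=-H(q)-q\log r-(1-q)\log(1-r),
\]
which splits each Kullback--Leibler term into an entropy part depending only on its first argument and a part that is linear in $q,1-q$ and carries all the $r$-dependence.

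I would then dispatch \eqref{binkulidentity0} and \eqref{binkulidentity2} together. Applying the rewriting with $r=p$ to the three terms of \eqref{binkulidentity0}, the coefficient of $\log p$ is $-X_1-X_2+X_{12}=0$ and the coefficient of $\log(1-p)$ is $-(m-X_1)-((n-m)-X_2)+(n-X_{12})=0$, so every trace of $p$ disappears and
\[
mI(\bar X_1:p)+(n-m)I(\bar X_2:p)-nI(\bar X_{12}:p)=nH(\bar X_{12})-mH(\bar X_1)-(n-m)H(\bar X_2).
\]
Running the identical step on \eqref{binkulidentity2} with $r=\bar X_{12}$, the linear parts no longer vanish, but because $X_1+X_2=X_{12}$ they collapse to $-X_{12}\log\bar X_{12}-(n-X_{12})\log(1-\bar X_{12})=nH(\bar X_{12})$, so \eqref{binkulidentity2} equals the same entropy difference. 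This settles \eqref{binkulidentity0}$=$\eqref{binkulidentity2}, and it is the cleanest way to connect the statement to the concavity-of-$H$ theme used elsewhere (cf.\ \eqref{HI}).

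The genuinely different piece is \eqref{binkulidentity1}, since it groups along the \emph{total} count $X_{12}$ rather than the group sizes $m,n-m$. I would expand every expression into the single symmetric object attached to the $2\times2$ contingency table with cells $X_1,\,m-X_1,\,X_2,\,(n-m)-X_2$, namely
\[
\Psi=\!\!\sum_{\text{cells}}\!N_{ij}\log N_{ij}-\sum_i R_i\log R_i-\sum_j C_j\log C_j+n\log n,
\]
with row totals $m,n-m$, column totals $X_{12},n-X_{12}$, and grand total $n$. A short expansion shows that the entropy difference from the previous paragraph is exactly $\Psi$ (the row-grouping of $\Psi$). Applying the rewriting to \eqref{binkulidentity1} at $r=m/n$, the entropy parts produce the cell terms together with the column-total terms, while the $\log(m/n)$ and $\log(1-m/n)$ parts reassemble (using $X_1+(m-X_1)=m$) into the row-total and grand-total boundary terms; the sum is again precisely $\Psi$. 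Since $\Psi$ is manifestly invariant under interchanging rows and columns, \eqref{binkulidentity1} coincides with \eqref{binkulidentity0}.

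I expect the only real difficulty to be bookkeeping: keeping the many boundary terms $n\log n$, $m\log m$, $X_{12}\log X_{12}$ organized so that the cancellations/reassemblies are transparent, and dispatching degenerate cases (empty cells, $X_{12}\in\set{0,n}$, or $m\in\set{0,n}$) under the convention $0\log0=0$ already used in the paper. Conceptually nothing is hard once one recognizes that all three lines are merely the row- or column-grouping of the same contingency-table quantity $\Psi$; the apparent asymmetry of \eqref{binkulidentity1} relative to the other two is an artifact of which margin one conditions on.
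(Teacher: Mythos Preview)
Your proposal is correct and follows essentially the same approach as the paper, which simply states that the identities are obtained by ``writing the full expression of \eqref{binkulidentity0} and re-arranging the $\log$ terms in two other ways.'' Your contingency-table quantity $\Psi$ is a clean way to organize that rearrangement and makes the row/column symmetry between \eqref{binkulidentity1} and \eqref{binkulidentity2} transparent, but it is the same computation the paper points to rather than a genuinely different route; your reading of the evident typos ($\bar X_2=X_2/(n-m)$, $X_n\to X_{12}$) is also the intended one.
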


The identities in Proposition \eqref{binkulidentprop} are obtained by
writing the full expression of \eqref{binkulidentity0} and
re-arranging the $\log$ terms in two other ways. Formulae
\eqref{binkulidentity1} and \eqref{binkulidentity2} are written
without $X_2$, expressing the fact that any two of the three random
variables $X_1$, $X_2$ and $X_{12}$ contain same information as the
full triple. Note that \eqref{binkulidentity1} and
\eqref{binkulidentity2} do not contain $p$. This reflects the fact
that the conditional distribution of $X_1$ given $X_{12}$, known as
the hypergeometric distribution, does not depend on $p$. The identity
of \eqref{binkulidentity0} and \eqref{binkulidentity2} can be
interpreted so that the two positive terms of \eqref{binkulidentity0}
measure exactly same amount of information about $p$ as what is
subtracted by the negative term. Moreover, \eqref{binkulidentity2} has
the additional interpretation of presenting the rate function of the
hypergeometric distribution:

\begin{proposition}
\label{hypergrateprop}
Let $X$ have the distribution {\it Hypergeometric}$(n,m,z)$,
i.e.\ the conditional distribution of $X_1$ of Proposition
\ref{binkulidentprop} given that $X_{12}=z$. The rate function of $X$ is
\begin{equation}
\label{hypergrateq}
I_X(x)=mD_B\left(\left.\frac{x}{m}\right\|\frac{z}{n}\right)
  +(n-m)D_B\left(\left.\frac{z-x}{n-m}\right\|\frac{z}{n}\right).
\end{equation}
\end{proposition}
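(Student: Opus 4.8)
The plan is to recognise the \emph{Hypergeometric}$(n,m,z)$ law as the conditional law of $X_1$ given $X_{12}=z$, to compute its rate function by reducing to the two underlying binomials, and then to read off the claimed closed form from Proposition \ref{binkulidentprop}, which the discussion following that proposition already flags as ``the rate function of the hypergeometric''. First I would record that this conditional law is free of the common parameter $p$: a one-line cancellation gives $\phi_X(\beta)=\cE{e^{\beta X_1}}{X_{12}=z}=\bigl(\sum_y e^{\beta y}\binom{m}{y}\binom{n-m}{z-y}\bigr)\big/\binom{n}{z}$, in which $p$ has dropped out. I would keep this $p$-independence aside as a consistency check for the end.

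The heart of the argument is to show that the rate function of this conditional law is the difference $I_{X_1}(x)+I_{X_2}(z-x)-I_{X_{12}}(z)$ of the rate functions of $X_1\sim\mathrm{Bin}(m,p)$, $X_2\sim\mathrm{Bin}(n-m,p)$ and $X_{12}\sim\mathrm{Bin}(n,p)$. Since each binomial is a sum of i.i.d.\ Bernoulli variables, combining \eqref{samplerate} with the scaling $X_1=m\bar{X}_1$ and \eqref{Ixpdef} gives the exact binomial rate functions $I_{X_1}(x)=mI(x/m:p)$, $I_{X_2}(z-x)=(n-m)I((z-x)/(n-m):p)$ and $I_{X_{12}}(z)=nI(z/n:p)$. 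The linear constraint $X_1+X_2=z$ is then absorbed by an exponential tilt: evaluating the growth rate of the coefficient sum above by Laplace's method shows that the Legendre transform defining $I_X$ amounts to minimising the joint rate $I_{X_1}(\cdot)+I_{X_2}(z-\cdot)$ at the constraint level $z$, and by independence this constrained minimum is exactly $I_{X_{12}}(z)$ (Gibbs conditioning, equivalently the contraction principle for the map $(s,t)\mapsto s$ on the line $s+t=z$). Substituting the three expressions, the rate function becomes $mI(x/m:p)+(n-m)I((z-x)/(n-m):p)-nI(z/n:p)$, which is exactly \eqref{binkulidentity0}.

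It then remains only to apply Proposition \ref{binkulidentprop}, by which \eqref{binkulidentity0} is identically equal to \eqref{binkulidentity2}, namely
\[
mI\!\left(\frac{x}{m}:\frac{z}{n}\right)+(n-m)I\!\left(\frac{z-x}{n-m}:\frac{z}{n}\right),
\]
and this is the asserted formula \eqref{hypergrateq}. Because \eqref{binkulidentity2} no longer contains $p$, the result agrees with the $p$-independence noted at the outset, a reassuring internal check. As an alternative derivation that bypasses the tilting step, I would apply Stirling's formula directly to $\binom{m}{x}\binom{n-m}{z-x}/\binom{n}{z}$, obtaining the entropy form $nH(z/n)-mH(x/m)-(n-m)H((z-x)/(n-m))$, and convert it into \eqref{hypergrateq} through the tangent-line identity \eqref{HI} of Lemma \ref{infodifflem}; the first-order terms cancel precisely because $z/n$ is the common reference argument of both divergences.

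The hard part is the tilting/Legendre computation in the middle step. The sum $\sum_y e^{\beta y}\binom{m}{y}\binom{n-m}{z-y}$ has no elementary closed form, so the saddle-point evaluation of its exponential growth, and the verification that the optimising tilt reproduces the prescribed value $x$ (and not merely the unconditional mean), must be carried out with care, including at the degenerate endpoints of the support where some binomial factors vanish. I would also state explicitly the sense in which the identity holds: the formula is the rate function in the proportional-scaling (large-deviation) regime, in which it coincides with the Legendre transform of the scaled cumulant generating function, rather than with the single-variable Cram\'er transform at fixed finite $n$.
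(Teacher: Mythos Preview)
Your argument is correct and lands on the same formula, but it takes a slightly longer route than the paper's. The paper exploits the $p$-independence you also noticed, but in a sharper way: since the conditional law of $X_1$ given $X_{12}=z$ does not depend on $p$, the paper simply \emph{chooses} $p=z/n$. With this choice, $I_{X_{12}}(z)=nI(z/n:z/n)=0$, so the subtracted term in your decomposition $I_{X_1}(x)+I_{X_2}(z-x)-I_{X_{12}}(z)$ vanishes outright, and the two surviving binomial rate functions already read $mI(x/m:z/n)+(n-m)I((z-x)/(n-m):z/n)$, which is \eqref{hypergrateq}. You instead keep $p$ generic and then invoke the algebraic identity of Proposition~\ref{binkulidentprop} (the equality of \eqref{binkulidentity0} and \eqref{binkulidentity2}) to eliminate $p$ after the fact; this is perfectly valid but one step longer, and it makes the proposition depend on the earlier identity rather than being self-contained.

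Your alternative via Stirling on $\binom{m}{x}\binom{n-m}{z-x}/\binom{n}{z}$ combined with \eqref{HI} is also sound and is essentially the probabilistic content of the paper's bivariate-MGF sketch written out combinatorially. Your closing caveat about the proportional-scaling regime versus the fixed-$n$ Legendre transform is well taken and is a point on which the paper's own proof is equally informal; it does not affect the downstream use in Proposition~\ref{hgratedomprop}, where only the Chernoff-type inequality of Lemma~\ref{probofinfolemma} is needed.
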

\begin{proof}
Define the bivariate moment-generating function of $(X_1,X_2)$
\begin{equation*}
\phi(\alpha,\beta)=\E{e^{\alpha X_1+\beta X_2}}.
\end{equation*}
Write
\begin{equation*}
\cPr{X_1=m}{X_{12}=z}=\frac{\pr{X_1=m,\ X_2=z-m}}{\pr{X_{12}=z}},
\end{equation*}
and note that we can assume $p=z/n$. We can now derive the claim using
$\phi(\alpha,\beta)$ in similar manner as in the well-known proof of
the one-dimensional Chernoff bound.
\end{proof}

\begin{proposition}
\label{hgratedomprop}
Let $k\ge2$ and let $X_i$, $i\in\set{1,\ldots,k}$, be independent
random variables with distributions {\it Bin}$(n_i,p)$,
respectively. Denote $n=\sum_{i=1}^kn_i$, $X_{1\ldots j}=\sum_{i=1}^jX_i$,
$\bar{X}_i=X_i/n_i$ and $\bar{X}_{1\dots j}=X_{1\dots j}/\sum_{i=1}^jn_i$. Then
\begin{align}
\label{hgratedom1}
\sum_{i=1}^kn_iD_B(\bar{X}_i\|p)-nD_B(\bar{X}_{1\ldots k}\|p)
&\buildrel{(st)}\over\le\sum_{i=1}^{k-1}\left(\log2+Y_i\right),\\
\end{align}
where $Y_1,\ldots,Y_{k-1}$ are independent {\it Exp}$(1)$ random variables.
\end{proposition}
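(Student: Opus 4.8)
The plan is to telescope the left-hand side into a sum of $k-1$ hypergeometric rate-function values, bound each of these conditionally by $\log2+\mathrm{Exp}(1)$ via Lemma \ref{probofinfolemma}, and then chain the conditional bounds along the filtration generated by the partial sums. Set $S_j=\sum_{i=1}^j n_i$, with $n=S_k=\sum_{i=1}^k n_i$, and observe that $X_{1\dots j}$ is $\mathrm{Bin}(S_j,p)$-distributed. Applying Proposition \ref{binkulidentprop} at stage $j$ with the substitution $X_1\leftarrow X_{1\dots j-1}$, $X_2\leftarrow X_j$, $m\leftarrow S_{j-1}$, $n\leftarrow S_j$, the equality of \eqref{binkulidentity0} and \eqref{binkulidentity2} reads
\begin{equation*}
S_{j-1}I(\bar X_{1\dots j-1}:p)+n_jI(\bar X_j:p)-S_jI(\bar X_{1\dots j}:p)
=S_{j-1}I(\bar X_{1\dots j-1}:\bar X_{1\dots j})+n_jI\Bigl(\tfrac{X_{1\dots j}-X_{1\dots j-1}}{n_j}:\bar X_{1\dots j}\Bigr).
\end{equation*}
By Proposition \ref{hypergrateprop}, the right-hand side is exactly $I_{H_j}(X_{1\dots j-1})$, the rate function of $H_j:=\mathrm{Hypergeometric}(S_j,S_{j-1},X_{1\dots j})$ evaluated at $X_{1\dots j-1}$. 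Summing over $j=2,\dots,k$, the terms $\pm S_jI(\bar X_{1\dots j}:p)$ telescope and leave
\begin{equation*}
\sum_{i=1}^k n_iI(\bar X_i:p)-nI(\bar X_{1\dots k}:p)=\sum_{j=2}^k I_{H_j}(X_{1\dots j-1}).
\end{equation*}

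Next I would introduce the filtration $\mathcal{G}_j:=\sigma(X_{1\dots j},X_{j+1},\dots,X_k)$, which is decreasing in $j$ (i.e.\ $\mathcal{G}_2\supseteq\cdots\supseteq\mathcal{G}_k$) because $X_{1\dots j}$ and $X_{j+1}$ together determine $X_{1\dots j+1}$. Each $Z_j:=I_{H_j}(X_{1\dots j-1})$ is $\mathcal{G}_{j-1}$-measurable; moreover, since $X_{j+1},\dots,X_k$ are independent of $(X_1,\dots,X_j)$, the conditional law of $X_{1\dots j-1}$ given $\mathcal{G}_j$ is precisely $H_j$. Lemma \ref{probofinfolemma}, applied to the (bounded-support, hence regular) law $H_j$, thus yields, conditionally on $\mathcal{G}_j$,
\begin{equation*}
Z_j\buildrel{(st)}\over\le\log2+Y,\qquad Y\sim\mathrm{Exp}(1),
\end{equation*}
a bound that is uniform in the conditioning value.

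Finally I would chain these conditional bounds. Relabelling $W_i:=Z_{k-i+1}$ and $\mathcal{H}_i:=\mathcal{G}_{k-i}$ turns the nested family into an increasing filtration with $W_i$ being $\mathcal{H}_i$-measurable and with $W_i$ dominated by $\log2+\mathrm{Exp}(1)$ conditionally on $\mathcal{H}_{i-1}$. A routine induction on the number of terms then gives the claim: at step $r$ one conditions on $\mathcal{H}_{r-1}$, writes $s=\sum_{i<r}W_i$, uses $\cE{\phi(s+W_r)}{\mathcal{H}_{r-1}}\le\E{\phi(s+V_r)}$ for every increasing $\phi$ with $V_r\sim\log2+\mathrm{Exp}(1)$ independent, and applies the inductive hypothesis to the increasing function $\psi(s)=\E{\phi(s+V_r)}$; this shows $\sum_{i=1}^{k-1}W_i\buildrel{(st)}\over\le\sum_{i=1}^{k-1}V_i=\sum_{i=1}^{k-1}(\log2+Y_i)$ with independent $Y_i\sim\mathrm{Exp}(1)$, which is \eqref{hgratedom1}. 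I expect the main obstacle to be precisely this chaining: the per-stage bounds hold only pointwise in the conditioning, so the crux is to identify the correct nested conditioning structure---that conditioning on the partial sum $X_{1\dots j}$ leaves $X_{1\dots j-1}$ hypergeometric and independent of the later increments---so that the conditional dominations compose into domination by a sum of \emph{independent} copies. The telescoping algebra is straightforward by comparison.
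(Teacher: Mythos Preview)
Your proof is correct and follows essentially the same route as the paper: both reduce the left-hand side via Proposition~\ref{binkulidentprop} to hypergeometric rate-function values, apply Lemma~\ref{probofinfolemma} conditionally on the partial sums, and then use the conditional independence structure to combine these into a sum of independent $\log 2+\mathrm{Exp}(1)$ bounds. The only cosmetic difference is that the paper packages the argument as an induction splitting off $X_k$ at each step, whereas you write out the full telescoping sum and make the filtration chaining explicit; your version is in fact somewhat more careful about justifying why the dominating exponentials can be taken independent.
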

\begin{proof}
For $k=2$, the left hand side of \eqref{hgratedom1} equals
\begin{equation}
\label{hgratedompro1}
n_1D_B(\bar{X}_1\|\bar{X}_{12})+n_2D_B\left(\left.\frac{X_{12}-X_1}{n_2}\right\|\bar{X}_{12}\right)
\end{equation}
by Proposition \ref{binkulidentprop}. For any $N\in\set{0,\ldots,n}$,
consider the conditional distribution of \eqref{hgratedompro1}, given
that $X_{12}=N$. By Proposition \ref{hypergrateprop}, this is the
distribution of the {\it Hypergeometric}$(n,n_1,N)$ rate function
taken at the random variable $X_1$ with the same distribution. The
claim now follows by Lemma \ref{probofinfolemma}, because the
stochastic upper bound does not depend on $N$, i.e.\ on the value of $X_{12}$.

For $k>2$ we proceed by induction. Assume that the claim holds for
$k-1$ and write
\begin{align*}
&\sum_{i=1}^kn_iD_B(\bar{X}_i\|p)-nD_B(\bar{X}_{12}\|p)\\
&=\sum_{i=1}^{k-1}n_iD_B(\bar{X}_i\|p)-(n-n_k)D_B(\bar{X}_{1\ldots(k-1)}\|p)\\
&\quad+n_kD_B(\bar{X}_k\|p)+(n-n_k)D_B(\bar{X}_{1\ldots(k-1)}\|p)-nD_B(\bar{X}_{1\ldots k}\|p).
\end{align*}
By the induction hypothesis, the first row of the second expression is
stochastically bounded by $\sum_{i=1}^{k-2}(\log2+Y_i)$, irrespective
of the value of $X_{1\ldots(k-1)}$. Similarly, the second row is
stochastically bounded by $\log2+Y_k$, where $Y_k\sim\mbox{\it
  Exp}(1)$, irrespective of the value of $X_{1\ldots k}$. It remains
to note that $Y_k$ can be chosen to be independent of
$(Y_1,\ldots,Y_{k-2})$, because $X_k$ is independent of
$(X_1,\ldots,X_{k-1})$, and of $\bar{X}_{1\ldots(k-1)}$ in particular.
\end{proof}

\begin{lem}
\label{binmaxlemma}
Consider the sequence $(G_n,\xi_n)$ of stochastic block models as in
Theorem \ref{blomothm}. Then the following holds.
\begin{enumerate}
\item\label{binmaxclaim}
 For any blocks $A_i$ and $A_j$ such that
$d_{ij}\not\in\set{0,1}$, it holds for an arbitrary $\epsilon>0$ with
high probability that
\begin{equation*}
\min_{v\in A_i}\frac{|e(\set{v},A_j)|}{|A_j|}\ge d_{ij}-n^{-\frac12+\epsilon},\quad
\max_{v\in A_i}\frac{|e(\set{v},A_j)|}{|A_j|}\le d_{ij}+n^{-\frac12+\epsilon}.
\end{equation*}
\item\label{regularityclaim}
For any $\epsilon>0$, the partition $\xi_n$ is
  $\epsilon$-regular with high probability.
\end{enumerate}
\end{lem}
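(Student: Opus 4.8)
The plan is to establish both claims by the same two-step recipe: a concentration bound (Hoeffding, or equivalently Proposition \ref{chernoffprop} with the Bernoulli rate function \eqref{Ixpdef}) on the relevant Binomial edge count, followed by a union bound whose log-cardinality is dominated by the exponential decay of the concentration estimate. Throughout I use that the blocks satisfy $|A_i|=\gamma_i n(1+o(1))$, so each $|A_i|$ is $\Theta(n)$.

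For claim \ref{binmaxclaim}, fix a node $v\in A_i$. Since the edges from $v$ to $A_j$ are independent $\mathrm{Bernoulli}(d_{ij})$ variables, the count $e(\set{v},A_j)$ is $\mathrm{Bin}(m,d_{ij})$ with $m\in\set{|A_j|,|A_j|-1}$ (the value $|A_j|-1$ occurring only in the degenerate case $i=j$, where it shifts the mean by a harmless $O(1/n)$). Taking $t=n^{-1/2+\epsilon}$, Hoeffding's inequality bounds the probability that this single $v$ violates $|e(\set{v},A_j)/|A_j|-d_{ij}|\le t$ by $2\exp(-2|A_j|t^2)=2\exp(-c\,n^{2\epsilon})$ for some $c>0$, because $|A_j|t^2=\Theta(n\cdot n^{-1+2\epsilon})=\Theta(n^{2\epsilon})$. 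The two-sided event in the claim is precisely that \emph{no} $v\in A_i$ violates this bound, so a union bound over the $|A_i|=\Theta(n)$ nodes gives failure probability $\lesssim n\exp(-c\,n^{2\epsilon})\to0$, since $n^{2\epsilon}$ outgrows $\log n$.

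For claim \ref{regularityclaim}, it suffices to show that every pair $(A_i,A_j)$ of distinct blocks is $\epsilon$-regular with high probability, as this is the content used in the sequel (e.g.\ in \eqref{bigdifdensclose}) and trivially implies the partition-level condition. Pairs with $d_{ij}\in\set{0,1}$ are $\epsilon$-regular deterministically, since then $d(X,Y)=d_{ij}$ for all $X,Y$; so fix a pair with $d_{ij}\in(0,1)$. For fixed subsets $X\subseteq A_i$, $Y\subseteq A_j$ with $|X|>\epsilon|A_i|$ and $|Y|>\epsilon|A_j|$, the count $e(X,Y)$ is $\mathrm{Bin}(|X||Y|,d_{ij})$ with $|X||Y|>\epsilon^2|A_i||A_j|=\Theta(n^2)$, so Hoeffding gives $\pr{|d(X,Y)-d_{ij}|>\epsilon/2}\le2\exp(-c'n^2)$ for some $c'>0$. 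There are at most $2^{|A_i|}2^{|A_j|}\le4^{n}=e^{n\log4}$ eligible subset pairs, so a union bound yields $\pr{\exists\,(X,Y):|d(X,Y)-d_{ij}|>\epsilon/2}\le2\exp(n\log4-c'n^2)\to0$. Taking $X=A_i$, $Y=A_j$ in particular shows $|d(A_i,A_j)-d_{ij}|<\epsilon/2$ with high probability, and the triangle inequality then gives $|d(X,Y)-d(A_i,A_j)|<\epsilon$ for every eligible $X,Y$, i.e.\ $(A_i,A_j)$ is $\epsilon$-regular. A final union bound over the fixed finite number $\binom{k}{2}$ of block pairs completes the proof.

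The crux of both parts --- and the only genuine obstacle --- is matching the concentration exponent against the log-cardinality of the union bound. In claim \ref{regularityclaim} this is comfortable: the exponent $\Theta(n^2)$ easily swamps the $n\log4$ arising from the crude enumeration of all $4^n$ subset pairs, which is exactly the phenomenon exploited in the Proposition of Section \ref{szemblomodifsec}. In claim \ref{binmaxclaim} the margin is the essential feature of the statement: with $t=n^{-1/2+\epsilon}$ the exponent is $\Theta(n^{2\epsilon})$, which tends to infinity (and so beats the $\log n$ from the node count) precisely because $\epsilon>0$; at the borderline $\epsilon=0$ the exponent would be $\Theta(1)$ and the union bound would fail, which is why the threshold is placed just above $n^{-1/2}$. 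The remaining $(1+o(1))$ factors in the block sizes are harmless, affecting only the constants $c,c'$.
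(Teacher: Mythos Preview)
Your proof is correct and follows essentially the same concentration-plus-union-bound scheme as the paper: for claim~\ref{binmaxclaim} a Bernoulli tail bound on each $e(\{v\},A_j)$ followed by a union over $\Theta(n)$ nodes, and for claim~\ref{regularityclaim} a tail bound on each $d(X,Y)$ followed by a union over $\le 4^n$ subset pairs. The only cosmetic differences are that you use Hoeffding where the paper applies Proposition~\ref{chernoffprop} with a Taylor expansion of $I(\cdot:d_{ij})$, and that you insert the $\epsilon/2$ triangle-inequality step to pass from $d_{ij}$ to $d(A_i,A_j)$, which the paper leaves implicit.
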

\begin{proof}
Claim \ref{binmaxclaim}:
By Proposition \ref{chernoffprop} and \eqref{infodiffI}, 
\begin{align*}
&\pr{\max_{v\in A_i}\frac{|e(\set{v},A_j)|}{|A_j|}>d_{ij}+h}\\
&\le\sum_{v\in A_i}\pr{\frac{|e(\set{v},A_j)|}{|A_j|}>d_{ij}+h}\\
&\le|A_i|\exp\left(-|A_j|D_B(d_{ij}+h\,\|\,d_{ij})\right)\\
&=|A_i|\exp\left(-|A_j|\left(\frac{h^2}{2d_{ij}(1-d_{ij})}
  +\frac{h^3}{6}D_B'''(z\|d_{ij})\right)\right).
\end{align*}
The last expression converges to zero with the choice
$h=n^{-\frac12+\epsilon}$ (recall that $|A_i|\sim n\gamma_i$ and
$|A_j|\sim n\gamma_j$), which proves the claim on the maximum. The
case of the minimum is symmetric.

Claim \ref{regularityclaim}: Fix $\epsilon>0$ and consider any
$i,j$. Let $U_1\subseteq A_i$ and $U_2\subseteq A_j$ such that
$|U_1|\ge\epsilon|A_i|$ and $|U_2|\ge\epsilon|A_j|$. By Proposition
\ref{chernoffprop},
\begin{equation*}
\pr{|d(U_1,U_2)-d_{ij}|>\epsilon}
\le e^{-|U_1||U_2|D_B(d_{ij}+\epsilon\|d_{ij})}+e^{-|U_1||U_2|D_B(d_{ij}-\epsilon\|d_{ij})}.
\end{equation*}
Let 
$\iota(\epsilon)=\min\set{D_B(d_{ij}+\epsilon\|d_{ij}),D_B(d_{ij}-\epsilon\|d_{ij})}$.
The union bound yields
\begin{align*}
&\pr{\exists U_1\subseteq A_i,\ U_2\subseteq A_j:\
|U_1|\ge\epsilon|A_i|,\ |U_2|\ge\epsilon|A_j|,\ 
|d(U_1,U_2)-d_{ij}|>\epsilon}\\
&\le2|A_i||A_j|\exp\left((|A_i|+|A_j|)\log2
  -\epsilon^2|A_i||A_j|\iota(\epsilon)\right)\\
&\le2n^2\exp\left(n^2\left(\frac{(\gamma_i+\gamma_j)\log2}{n}
  -\gamma_i\gamma_j\epsilon^2\iota(\epsilon)\right)\right)\to0\quad\mbox{as }
  n\to\infty,
\end{align*}
because $\iota(\epsilon)>0$. 
\end{proof}

\begin{remark}
\label{epsilon4rem}
Because $\iota(\epsilon)\propto\epsilon^2$ as $\epsilon\to0$, the
proof of claim \ref{regularityclaim} indicates that with a fixed
$\epsilon$, $\epsilon$-regularity starts to hold when
$n>>\epsilon^{-4}$.
\end{remark}

\subsubsection*{Preliminaries for the Poissonian block model}

By the Poissonian block model, the function $\phi(x)=-x\log x$ replaces binomial entropy in the counterparts of code lengths $L_4+L_5$. We indicate below how the crucial steps of the proofs would change.

Denote by $D_P(b\|a)$ the Kullback-Leibler divergence between
distributions $\mbox{\it Poisson}(a)$ and $\mbox{\it Poisson}(b)$
\begin{equation}
\label{kullpoisdef}
D_P(b\|a)=a-b+b\log b-b\log a.
\end{equation}
For a counterpart to Lemma \ref{infodifflem}, note that, for any $z>0$,
\begin{equation}
\label{phi2der}
\phi''(x)=-\frac1x=-\frac{\D^2}{\D{x^2}}D_P(x\|z).
\end{equation}




\begin{lem}
\label{poissonHIlemma}
For any $\alpha\in[0,1]$ and $x,y>0$, denote $z=\alpha
x+(1-\alpha)y$. Then we have
\begin{align}
\label{poissonHIeq0}
\phi(z)-(\alpha\phi(x)+(1-\alpha)\phi(y))
&=\alpha D_P(x\|z)+(1-\alpha)D_P(y\|z)\\
\label{poissonHIeq1}
&=zI_{\mathit{Ber}(\alpha)}(\frac{\alpha x}{z}).
\end{align}
\end{lem}
\begin{proof}
The equality \eqref{poissonHIeq0} follows from \eqref{phi2der} similarly as the derivation of \eqref{HI}.
The expression \eqref{poissonHIeq1} is obtained by writing the right
hand side of \eqref{poissonHIeq0} with the substitution $y=(z-\alpha
x)/(1-\alpha)$ and re-combining the log terms.
\end{proof}

The Poissonian counterpart of Proposition \ref{hgratedomprop} is the following.

\begin{proposition}
\label{binratedomprop}
Let $a>0$, $k\ge2$, $n_i\ge1$, $i=1,\ldots,k$, and $n=\sum_in_i$. Let
$X_i$, $i\in\set{1,\ldots,k}$, be independent random variables with
distributions {\it Poisson}$(n_ia)$, respectively. Denote $X_{1\ldots
  j}=\sum_{i=1}^jX_i$, $\bar{X}_i=X_i/n_i$ and $\bar{X}_{1\dots
  j}=X_{1\dots j}/\sum_{i=1}^jn_i$. Then
\begin{align}
\label{poiratedom1}
n\phi(\bar{X}_{1\dots k})-\sum_{i=1}^kn_i\phi(\bar{X}_i)
&\buildrel{(st)}\over\le\sum_{i=1}^{k-1}\left(\log2+Y_i\right),
\end{align}
where $Y_1,\ldots,Y_{k-1}$ are independent {\it Exp}$(1)$ random variables.
\end{proposition}
\begin{proof}
The proof of Proposition \ref{hgratedomprop} can be imitated as follows:
\begin{itemize}
\item Using induction, it suffices to consider the case $k=2$.
\item Apply Lemma \ref{poissonHIlemma} to the left hand side of \eqref{poiratedom1} with $x=\bar{X}_1$, $y=\bar{X}_2$, $z=\bar{X}_{12}$ and $\alpha=n_1/n$. This yields
\begin{equation*}
X_{12}I_{\mathit{Ber}(\frac{n_1}{n})}\left(\frac{X_1}{X_{12}}\right)
=I_{\mathit{Bin}(X_{12},\frac{n_1}{n})}(X_1).
\end{equation*}
\item Now, the conditional distribution of $X_1$ given $X_{12}$ is the
  above binomial distribution. Thus, we can apply Lemma
  \ref{probofinfolemma} in a similar way as in the proof of
  Proposition \ref{hgratedomprop}.
\end{itemize}
\end{proof}

\bigskip
{\bf Acknowledgment.} We thank Dr. Tomi R\"aty for proof-reading the manuscript and for making many useful suggestions. This work was partly financially supported by the Academy of Finland projects 294763 (Stomograph) and 288907.


\end{document}